\documentclass[11pt]{article}

\usepackage{amssymb,amsmath,amsthm,latexsym}
\usepackage{xspace}

\usepackage{xcolor}
\usepackage{tikz}
 \usepackage{graphicx}
\usepackage{authblk}
 
\usepackage{subfig}

\usepackage{epic,eepic}

\usepackage{multirow}

\usetikzlibrary{positioning}
\usetikzlibrary{decorations.pathreplacing}
\usetikzlibrary{automata,positioning}
\usetikzlibrary{decorations.pathmorphing}
\usetikzlibrary{decorations.markings}
\usetikzlibrary{decorations}
\usetikzlibrary{arrows}
\usetikzlibrary{arrows.meta}
\usetikzlibrary{patterns}
\usetikzlibrary{calc}
\usetikzlibrary{shapes}
\usetikzlibrary{fadings,	shadings}

\setlength{\textwidth}{13.5truecm}  
\setlength{\textheight}{21.3truecm}  

\theoremstyle{plain}

\newtheorem{theorem}{Theorem}[section]

\newtheorem{proposition}[theorem]{Proposition}

\newtheorem{example}[theorem]{Example}
\newtheorem{lemma}[theorem]{Lemma}
\newtheorem{corollary}[theorem]{Corollary}

\newcommand{\eqdef}{\stackrel{\text{def}}=}

\newcommand{\lang}{\mathrm{L}\xspace}

\newcommand{\strat}{g}

\newcommand{\tuple}[1]{\langle #1  \rangle}
\newcommand{\stam}[1]{}
\newcommand{\pair}{\tuple}
\newcommand{\zug}{\tuple}

\newcommand{\Nat}{\ensuremath{\mathbb{N}}\xspace}

\newcommand{\w}{\omega}

\newcommand{\fun}[3]{\ensuremath{#1\colon #2 \to #3}}

\newcommand{\A}{{\cal A}}
\newcommand{\B}{{\cal B}}
\newcommand{\C}{{\cal C}}
\newcommand{\D}{{\cal D}}

\newcommand{\As}{{\A_\strat}}
\newcommand{\Asm}{{\A_\strat^m}}

\renewcommand{\mod}[1]{\ (\mathrm{mod}\ #1)}

\newcommand{\todo}[1]{\textsf{\textcolor{blue}{To do: #1}}}

\newcommand{\DBP}{\textsf{DetByP}\xspace}

\renewcommand{\phi}{\varphi}

\hyphenation{de-ter-mi-nis-tic}


\tikzset{every picture/.append style={initial text={}}}

\tikzset{every loop/.append style={->,-{Latex[length=2.5mm]}}}

\tikzset{every initial by arrow/.style={->,-{Latex[length=2.5mm]}, initial distance=0.3}}

\tikzstyle{flow}  = [inner sep=0cm, node distance=0cm and 0cm]

\tikzstyle{toL} = [anchor=mid east, flow]
\tikzstyle{toR} = [anchor=mid west, flow]
\tikzstyle{toC} = [align=center, anchor=mid, flow]

\tikzstyle{rejecting}=[forbidden sign]

\tikzstyle{rejecting}=[accepting]

\tikzstyle{trans}=[draw,-{Latex[length=2.5mm]},auto]
\tikzstyle{trans2}=[draw,{Latex[length=2.5mm]}-{Latex[length=2.5mm]},auto]

\tikzstyle{letters}=[auto]
\tikzstyle{priority}=[scale=0.8, toC]

\tikzstyle{dot} = [draw,shape=circle,fill, minimum size=2mm, inner sep=0pt,outer sep=0pt]

\tikzstyle{memory}=[toC, scale=0.8, node distance=0.20]
\tikzstyle{looper}=[looseness=35, in=55, out=125]

\newcommand{\dpAut}{-0.35}

\newcommand{\state}[5]{

  \foreach \a/\vss/\mss/\hss in {#4/0.5/0.6/0.35} {
  \coordinate (center) at (#1,#2);

  \coordinate (RU\a) at ($(center) + (+\hss,+\vss*#5+\mss)$);
  \coordinate (RB\a) at ($(center) + (+\hss,-\vss*#5-\mss)$);

  \coordinate (LU\a) at ($(center) + (-\hss,+\vss*#5+\mss)$);
  \coordinate (LB\a) at ($(center) + (-\hss,-\vss*#5-\mss)$);

  \coordinate (opt\a) at ($(center) + (-0.0, -\vss*#5-\mss-0.15)$);

  \draw (RU\a) -- (RB\a) -- (LB\a) -- (LU\a) -- (RU\a);

  \node[below] at ($(center) + (0,\vss*#5+\mss)$) {#3};

  \foreach \x in {0, ..., #5} {
    \node[dot] (\a\x) at ($(center) + (0,\vss*#5) + (0,-\x * \vss * 2.0)$) {};
  }
  }
}


\theoremstyle{plain}
\newtheorem*{rep@theorem}{\rep@title}
\newcommand{\newreptheorem}[2]{%
\newenvironment{rep#1}[1]{%
\def\rep@title{#2 \ref{##1}}%
\begin{rep@theorem}}%
{\end{rep@theorem}}}

\newreptheorem{theorem}{Theorem}
\newreptheorem{lemma}{Lemma}
\newreptheorem{proposition}{Proposition}
\newreptheorem{conjecture}{Conjecture}
\newreptheorem{corollary}{Corollary}
\newreptheorem{claim}{Claim}
\newreptheorem{example}{Example}


\title{How Deterministic are Good-For-Games Automata?}

\author[1]{Udi Boker}
\author[2]{Orna Kupferman}
\author[3]{Micha{\l} Skrzypczak}

\affil[1]{Interdisciplinary Center, Herzliya, Israel\thanks{This research was supported by the Israel Science Foundation, grant no.\ 1373/16.}}
\affil[2]{The Hebrew University, Israel\thanks{This research has received funding from the European Research Council under the EU's 7-th Framework Programme (FP7/2007-2013) / ERC grant agreement no 278410.}}
\affil[3]{University of Warsaw, Poland\thanks{Supported by the Polish National Science Centre (decision UMO-2016/21/D/ST6/00491).}}

\date{}



\begin{document}

\maketitle

\vspace*{-8mm}
\begin{abstract}
In {\em good for games\/} (GFG) automata, it is possible to resolve nondeterminism in a way that only depends on the past and still accepts all the words in the language. The motivation for GFG automata comes from their adequacy for games and synthesis, wherein general nondeterminism is inappropriate.
We continue the ongoing effort of studying the power of nondeterminism in GFG automata. 
Initial indications have hinted that every GFG automaton embodies a deterministic one. Today we know that this is not the case, and in fact GFG automata may be exponentially more succinct than deterministic ones. 

We focus on the {\em typeness} question, namely the question of whether a GFG automaton with a~certain acceptance condition has an equivalent GFG automaton with a weaker acceptance condition on the same structure. 
Beyond the theoretical interest in studying typeness, its existence implies efficient translations among different acceptance conditions. This practical issue is of special interest in the context of games, where the B\"uchi and co-B\"uchi conditions admit memoryless strategies for both players. 
Typeness is known to hold for deterministic automata and not to hold for general nondeterministic automata.

We show that GFG automata enjoy the benefits of typeness, similarly to the case of deterministic automata. In particular, when Rabin or Streett GFG automata have equivalent B\"uchi or co-B\"uchi GFG automata, respectively, then such equivalent automata can be defined on a~substructure of the original automata. 
Using our typeness results, we further study the place of GFG automata in between deterministic and nondeterministic ones. Specifically, considering automata complementation, we show that GFG automata lean toward nondeterministic ones, admitting an exponential state blow-up in the complementation of a Streett automaton into a~Rabin automaton, as opposed to the constant blow-up in the deterministic case.
\end{abstract}

\section{Introduction}
Nondeterminism is a prime notion in theoretical computer science. It allows a computing  machine to examine, in a concurrent manner, all its possible runs on a certain input. For automata on finite words, nondeterminism does not increase the expressive power, yet it leads to an exponential succinctness~\cite{RS59}. For automata on infinite words, nondeterminism may increase the expressive power and also leads to an exponential succinctness. For example, nondeterministic B\"uchi automata are strictly more expressive than their deterministic counterpart~\cite{Lan69}. In the automata-theoretic approach to formal verification, we use automata on infinite words in order to model systems and their specifications. In particular, temporal logic formulas are translated to nondeterministic word automata~\cite{VW94}. In some applications, such as model checking, algorithms can proceed on the nondeterministic automaton, whereas in other applications, such as synthesis and control, they cannot. There, the advantages of nondeterminism are lost, and the algorithms involve a complicated determinization construction~\cite{Saf88} or acrobatics for circumventing determinization~\cite{KV05c}.
Essentially, the inherent difficulty of using nondeterminism in synthesis lies in the fact that each guess of the nondeterministic automaton should accommodate all possible futures. 


Some nondeterministic automata are, however, good for games:
in these automata it is possible to resolve the nondeterminism in a way that only depends on the past while still accepting all the words in the language. 
This notion, of  {\em good for games\/}  (GFG) automata was first introduced in~\cite{HP06}.\footnote{GFGness is also used in~\cite{CL10} in the framework of cost functions under the name ``history-determinism''.} Formally, a nondeterministic automaton $\A$ over an alphabet $\Sigma$ is GFG if there is a strategy $\strat$ that maps each finite word $u \in \Sigma^+$ to the transition to be taken after $u$ is read; and following $\strat$ results in accepting all the words in the language of $\A$. Note that a state $q$ of $\A$ may be reachable via different words, and $\strat$ may suggest different transitions from $q$ after different words are read. Still, $\strat$ depends only on the past, namely on the word read so far. Obviously, there exist GFG automata: deterministic ones, or nondeterministic ones that are {\em determinizable by pruning\/} (\DBP); that is, ones that just add transitions on top of a deterministic automaton. In fact, the GFG automata constructed in~\cite{HP06} are \DBP.\footnote{As explained in~\cite{HP06}, the fact that the GFG automata constructed there are \DBP does not contradict their usefulness in practice, as their transition relation is simpler than the one of the embodied deterministic automaton and it can be defined symbolically.}

Our work continues a series of works that have studied GFG automata: their expressive power, succinctness, and constructions for them, where the key challenge is to understand the power of nondeterminism in GFG automata.  
Let us first survey the results known so far. 
\stam{We first need to introduce some acceptance conditions for automata on infinite words. Since a run of an automaton on infinite words does not have a
final state, acceptance is determined with respect to the set of
states visited infinitely often during the run.  For example, in {\em B\"uchi\/}
automata, some of the states are designated as accepting states,
and a run is accepting iff it visits states from the accepting set
infinitely often~\cite{Buc62}.  Dually, in {\em co-B\"uchi\/}
automata, a run is accepting iff it visits states from the
accepting set only finitely often. More general are {\em Rabin\/}
automata. Here, the acceptance condition is a set
$\alpha=\{\zug{G_1,B_1},\ldots,\zug{G_k,B_k}\}$ of pairs of sets
of states, and a run is accepting if there is a pair
$\zug{G_i,B_i}$ for which the set of states visited infinitely
often intersects $G_i$ and is disjoint to $B_i$. The condition
$\alpha$ can also be viewed as a {\em Streett\/} condition, in
which case a run is accepting if for all pairs $\zug{G_i,B_i}$, if
the set of states visited infinitely often intersects $G_i$, then
it also intersects $B_i$. Finally, a {\em parity\/} condition maps each state to a rank in $\{1,\ldots,k\}$, and a run is accepting if the minimal rank that is visited infinitely often is even. 
The number $k$ of pairs or ranks in $\alpha$ is referred to as the {\em
index\/} of the automaton.
}%
In terms of expressive power, it is shown in~\cite{KSV06,NW98} that GFG automata with an acceptance condition of type $\gamma$ (e.g., B\"uchi) are as expressive as deterministic $\gamma$ automata.\footnote{The results in~\cite{KSV06,NW98} are given by means of {\em tree automata for derived languages\/}, yet, by~\cite{BKKS13}, the results hold also for GFG automata.}  Thus, as far as expressiveness is concerned, GFG automata behave like deterministic ones. The picture in terms of succinctness is diverse. For automata on finite words, GFG automata are always \DBP~\cite{KSV06,Mor03}. For automata on infinite words, in particular B\"uchi and co-B\"uchi automata\footnote{See Section~\ref{prelim aut} for the full definition of the various acceptance conditions.}, GFG automata need not be \DBP~\cite{BKKS13}. Moreover, the best known determinization construction of GFG B\"uchi automata is quadratic, whereas determinization of GFG co-B\"uchi automata has an exponential blow-up lower bound~\cite{KS15}.
Thus, in terms of succinctness, GFG automata on infinite words are more succinct (possibly even exponentially) than deterministic ones. 

For deterministic automata, where B\"uchi and co-B\"uchi automata
are less expressive than Rabin and Streett ones, researchers
have come up with the notion of an automaton being
{\em type}~\cite{KPB94}. 
Consider a deterministic automaton $\A$ with acceptance condition of type $\gamma$ and assume that $\A$ 
recognizes a language that can be recognized by some deterministic automaton with an acceptance condition of type $\beta$ that is weaker than $\gamma$. When deterministic $\gamma$~automata are $\beta$-type, it is guaranteed that a deterministic $\beta$-automaton for the language of $\A$ can be defined on top of the structure of $\A$.
For example, deterministic Rabin automata being B\"uchi-type \cite{KPB94} means that if a deterministic
Rabin automaton $\A$ recognizes a language that can be recognized
by a deterministic B\"uchi automaton, then $\A$ has an equivalent
deterministic B\"uchi automaton on the same structure. 
Thus, the basic motivation of typeness is to allow simplifications of the acceptance conditions of the considered automata without complicating their structure. Applications of this notion are much wider~\cite{KPB94}. 
In particular, in the context of games, the B\"uchi and co-B\"uchi conditions admit memoryless strategies for both players, which is not the case for the Rabin and Streett conditions~\cite{Tho95}. Thus, the study of typeness in the context of GFG automata addresses also the question of simplifying the memory requirements of the players. In addition, as we elaborate in Section~\ref{sec:consequences}, it leads to new and non-trivial bounds on the blow-up of transformations between GFG automata and their complementation.

Recall
that deterministic Rabin automata are B\"uchi-type.
Dually, deterministic Streett automata are co-B\"uchi-type.
Typeness 
can be defined also with respect to nondeterministic automata, yet it
crucially depends on the fact that the automaton is deterministic. Indeed, nondeterministic Rabin are not B\"uchi-type. Even with the co-B\"uchi acceptance condition, where  nondeterministic co-B\"uchi automata recognize only a subset of the $\omega$-regular languages, nondeterministic Streett automata are not co-B\"uchi-type~\cite{KMM06}. 

We first show that typeness is strongly related with determinism even when slightly relaxing the typeness notion to require the existence of an equivalent automaton on a substructure of the original automaton, instead of on the exact original 
structure, and even when we restrict attention to an {\em unambiguous\/} automaton, namely one that has a single accepting run on each word in its language. We describe an  unambiguous parity automaton $\A$, 
such that its language is recognized by a deterministic B\"uchi automaton, yet it is impossible to define a B\"uchi acceptance condition on top of a substructure of $\A$. We also point to a~dual result in~\cite{KMM06}, with respect to the co-B\"uchi condition, and observe that it applies also to the relaxed typeness notion.

We then show that for GFG automata, typeness, in its relaxed form, does hold. Notice that once considering GFG automata with no redundant transitions, which we call {\em tight\/}, the two typeness notions coincide. Obviously, all GFG automata can be tightened by removal of redundant transitions (Lemma~\ref{lem:Manipulations}).
In particular, we show that the typeness picture in GFG automata coincides with the one in deterministic automata: Rabin GFG automata are B\"uchi type, Streett GFG automata are co-B\"uchi type, and all GFG automata are type with respect to the weak acceptance condition. Unlike the deterministic case, however, the Rabin case is not a simple dualization of the Streett case; it is much harder to prove and it requires a stronger notion of tightness.

We continue with using our typeness results for further studying the place of GFG automata in between deterministic and nondeterministic ones.
We start with showing that all GFG automata that recognize languages that can be defined by deterministic weak automata are \DBP. This generalizes similar results about safe and co-safe languages~\cite{KMM06}. We then show that all unambiguous GFG automata are also \DBP.
Considering complementation, GFG automata lie in between the deterministic and nondeterministic settings---the complementation of a B\"uchi automaton into a co-B\"uchi automaton is polynomial, as is the case with deterministic automata, while the complementation of a co-B\"uchi automaton into a B\"uchi automaton as well as the complementation of a Streett automaton into a Rabin automaton is exponential, as opposed to the constant blow-up in the deterministic case. We conclude with proving a doubly-exponential lower bound for the translation of LTL into GFG automata, as is the case with deterministic automata.

The paper is structured as follows. In Section~\ref{sec:preliminaries} we provide the relevant notions about languages and GFG automata. Section~\ref{sec:examples} contains examples showing that typeness does not hold for the case of unambiguous automata. The next three sections, Sections~\ref{sec:NSWtoNCW}, \ref{sec:NRWtoNBW}, and~\ref{sec:ToWeak}, provide the main positive results of this work: co-B\"uchi typeness for GFG-Streett; B\"uchi typeness for GFG-Rabin; and weak typeness for GFG-B\"uchi and GFG-co-B\"uchi, respectively. Finally, in Section~\ref{sec:consequences} we continue to study the power of nondeterminism in GFG automata, looking into automata complementation and translations of LTL formulas to GFG automata.
 
\section{Preliminaries} 
\label{sec:preliminaries}


\subsection{Automata}
\label{prelim aut}
An automaton on infinite words is a tuple $\A=\tuple{\Sigma,Q,Q_0,\delta,\alpha}$, where $\Sigma$ is an input alphabet, $Q$ is a finite set of states,
$Q_0 \subseteq Q$ is a set of initial states, $\fun{\delta}{Q\times\Sigma}{2^Q}$ is a transition function that maps a state and a letter to a set of possible successors, and $\alpha$ is an acceptance condition. 
The first four elements, namely $\tuple{\Sigma,Q,\delta,Q_0}$, are the automaton's \emph{structure}.
We consider here the {\em B\"uchi\/}, {\em co-B\"uchi}, {\em parity}, {\em Rabin}, and {\em Streett} acceptance conditions. (The {\em weak\/} condition is defined in Section~\ref{sec:ToWeak}.)
In B\"uchi, and co-B\"uchi conditions, $\alpha \subseteq Q$ is a set of states. 
In a parity condition, $\fun{\alpha}{Q}{\{0,\ldots,k\}}$ is a function mapping each state to its priority.
In a Rabin and Streett conditions, $\alpha\subseteq 2^{2^{Q}\times 2^Q}$ is a set of pairs of sets of states. The {\em index\/} of a Rabin or Streett condition is the number of pairs in it. 
For a state $q$ of $\A$, we denote by $\A^q$ the automaton that is derived from $\A$ by changing the set of initial states to $\{q\}$. A {\em transition\/} of $\A$ is a triple $\zug{q,a,q'}$ such that $q' \in \delta(q,a)$. We extend $\delta$ to sets of states and to finite words in the expected way. Thus, for a set $S \subseteq Q$, a letter $a \in \Sigma$, and a finite word $u \in \Sigma^\ast$, we have that $\delta(S,\epsilon)=S$, $\delta(S,a)=\bigcup_{q \in S} \delta(q,a)$, and $\delta(S,u \cdot a)=\delta(\delta(S,u),a)$. Then, we denote by $\A(u)$ the set of states that $\A$ may reach when reading $u$. Thus, $\A(u)=\delta(Q_0,u)$.

Since the set of initial states need not be a singleton and the transition function may specify several successors 
for each state and letter, the automaton $\A$ may be {\em nondeterministic}. If $|Q_0|=1$ 
 and $|\delta(q,a)| \leq 1$ for every $q \in Q$ and $a\in \Sigma$, then $\A$ is {\em deterministic}.

Given an input word $w=a_1 \cdot  a_2 \cdots$ in
$\Sigma^{\omega}$, a {\em run} of $\A$ on $w$ is an infinite sequence $r=r_0,r_1,r_2,\ldots \in Q^\omega$ such that $r_0 \in Q_0$ and
for every $i \geq 0$, we have $r_{i+1} \in \delta(r_i,a_{i+1})$;
i.e., the run starts in the initial state and obeys the
transition function.  
For a run $r$, let $\inf(r)$ denote the set of states that $r$ visits
infinitely often. That is,
$\inf(r) = \{ q \in Q\mid\text{for infinitely many $i \geq 0$, we have $r_i=q$} \}$.

A set of states $S$ satisfies an acceptance condition $\alpha$ (or \emph{is accepting}) iff  
\begin{itemize}
\item $S \cap \alpha \neq \emptyset$, for a B\"uchi\ condition.
\item $S \cap \alpha = \emptyset$, for a co-B\"uchi\ condition.
\item $\min_{q\in\inf(r)}\{\alpha(q)\}$ is even, for a parity condition.
\item There exists $\tuple{E,F}\in\alpha$, such that $S \cap E = \emptyset$ and $S \cap F \neq \emptyset$ for a Rabin condition.
\item For all $\tuple{E,F}\in\alpha$, we have  $S \cap E = \emptyset$ or $S \cap F \neq \emptyset$ for a Streett condition.
\end{itemize}

Notice that B\"uchi\ and co-B\"uchi are dual, and so are Rabin and Streett. Also note that the B\"uchi and co-B\"uchi conditions are special cases of parity, which is a special case of Rabin and Streett. In the latter conditions, we refer to the sets $E$ and $F$ as the ``bad'' and ``good'' sets, respectively.
Finally, note that a Rabin pair may have an empty $E$ component, while an empty $F$ component makes the pair redundant (and dually for Streett).

A run $r$ is accepting if $\inf(r)$ satisfies $\alpha$. An automaton $\A$ accepts an input word $w$ iff there
exists an accepting run of $\A$ on $w$.
The {\em language} of $\A$, denoted by $\lang(\A)$, is the set of all words in
$\Sigma^\omega$ that $\A$ accepts. 
A nondeterministic automaton $\A$ is {\em unambiguous\/} if for every word $w \in \lang(\A)$, there is a single accepting run of $\A$ on $w$. Thus, while $\A$ is nondeterministic and may have many runs on each input word, it has only a single accepting run on words in its language. 

\stam{
A \emph{Rabin condition} $R$ and a \emph{Streett condition} $S$ of index $k$ over an automaton with states $Q$ are sets of $k$ pairs, $\pair{B_i,G_i}_{1 \leq i \leq k}$, where every \emph{bad set} $B_i$ and \emph{good set} $G_i$ are subsets of $Q$. A set $X$ of states is (Rabin) \emph{accepting} w.r.t.\ $R$ if there is a pair $\pair{B_i, G_i}$, such that $X\cap B_i=\emptyset$ and $X\cap G_i\neq\emptyset$. A set $X$ of states is (Streett) \emph{accepting} w.r.t.\ $S$ if for every pair $\pair{B_i, G_i}$, $X\cap B_i=\emptyset$ or $X\cap G_i\neq\emptyset$. Notice that a Rabin pair may have an empty $B_i$ component, while an empty $G_i$ component makes the pair redundant, while for Streett it is the other way around.
A set of states is \emph{rejecting} if it is not accepting. A path $\pi$ is accepting/rejecting if $\inf(\pi)$ is. 
}

We denote the different automata types by three-letter acronyms
in the set $\{ {\rm D, N} \} \times \{ {\rm B, C, P, R, S} \} \times \{
{\rm W} \}$. The first letter stands for the branching mode of
the automaton (deterministic or nondeterministic); the second for the acceptance-condition type (B\"uchi, co-B\"uchi, parity, Rabin, or Streett); and the third indicates that we consider automata on words. For Rabin
and Streett automata, we sometimes also indicate the index of the
automaton. In this way, for example, NBW are nondeterministic
B\"uchi word automata, and DRW[1] are deterministic Rabin automata
with index $1$.

For two automata $\A$ and $\A'$, we say that $\A$ and $\A'$ are
{\em equivalent\/} if $\lang(\A)=\lang(\A')$. For an automaton type
$\beta$ (e.g., DBW) and an automaton $\A$, we say that $\A$ is
$\beta$-realizable if there is a $\beta$-automaton equivalent to
$\A$.

Let $\A=\tuple{\A,Q,Q_0,\delta,\alpha}$ be an automaton.
For an acceptance-condition class $\gamma$ (e.g.,~B\"uchi), we say
that $\A$ is {\em $\gamma$-type} if $\A$ has an equivalent
$\gamma$ automaton with the same structure as $\A$~\cite{KPB94}. That is, there
is an automaton $\A'=\zug{\Sigma,Q,Q_0,\delta,\alpha'}$ such that
$\alpha'$ is an acceptance condition of the class $\gamma$ and
$\lang(\A')=\lang(\A)$. 

\subsection{Good-For-Games Automata}
\label{ssec:GFG}

An automaton $\A=\tuple{\Sigma,Q,Q_0,\delta,\alpha}$ is \emph{good for games} (GFG, for short) if there is a strategy $\fun{\strat}{\Sigma^\ast}{Q}$, such that for every word $w=a_1 \cdot a_2 \cdots \in \Sigma^\omega$, the sequence $\strat(w)=\strat(\epsilon)$, $\strat(a_1)$, $\strat(a_1 \cdot a_2)$,\ldots
is a run of $\A$ on $w$, and whenever $w\in \lang(\A)$, then $\strat(w)$ is accepting. We then say that $\strat$ \emph{witnesses} $\A$'s GFGness.

It is known~\cite{BKKS13} that if $\A$ is GFG, then its GFGness can be witnessed by a finite-state strategy, thus one in which for every state $q \in Q$, the set of words $\strat^{-1}(q)$ is regular. Finite-state strategies can be modeled by {\em transducers}. Given sets $I$ and $O$ of input and output letters, an {\em $(I/O)$-transducer\/} is a tuple ${\cal T}=\zug{I,O,M,m_0,\rho,\tau}$, where  $M$ is a finite set of states, to which we refer as \emph{memories}, $m_0\in M$ is an \emph{initial memory}, $\fun{\rho}{M\times I}{M}$ is a deterministic transition function, to which we refer as the \emph{memory update function}, and $\fun{\tau}{M}{O}$ is an output function that assigns a letter in $O$ to each memory. The transducer ${\cal T}$ generates a strategy $\fun{\strat_{\cal T}}{I^\ast}{O}$,  obtained by following $\rho$ and $\tau$ in the expected way: we first extend $\rho$ to words in $I^\ast$ by setting $\rho(\epsilon)=m_0$ and $\rho(u \cdot a)=\rho(\rho(u),a)$, and then define $\strat_{\cal T}(u)=\tau(\rho(u))$. 

Consider a GFG automaton $\A=\tuple{\Sigma,Q,Q_0,\delta,\alpha}$, and let $\strat=\langle\Sigma$, $Q$, $M$, $m_0$, $\rho$, $\tau\rangle$ be a finite-state $(\Sigma/Q)$-transducer that generates a strategy $\fun{\strat}{\Sigma^\ast}{Q}$ that witnesses $\A$'s GFGness (we abuse notations and use $\strat$ to denote both the transducer and the strategy it generates). 
Consider a state $q \in Q$. When $\tau(m)=q$, we say that $m$ is a {\em memory of $q$}. 
We denote by $\As$ the (deterministic) automaton that models the operation of $\A$ when it follows $\strat$. 
Thus, $\As=\zug{\Sigma,M,m_0,\rho,\alpha_\strat}$, where the acceptance condition $\alpha_\strat$ is obtained from $\alpha$ by replacing each set $F \subseteq Q$ that appears in $\alpha$ (e.g. accepting states, rejecting states, set in a Rabin or Streett pair, etc) by the set $F_\strat=\{m \mid \tau(m) \in F\}$. Thus, $F_\strat \subseteq M$ contains the memories of  $F$'s states.
For a state $q$ of $\A$, a path $\pi$ of $\As$ is \emph{$q$-exclusive accepting} if $\pi$ is accepting, and $\inf(\pi)\setminus\{m \ | \ m \mbox{ is a memory of }q\}$ is not accepting. 

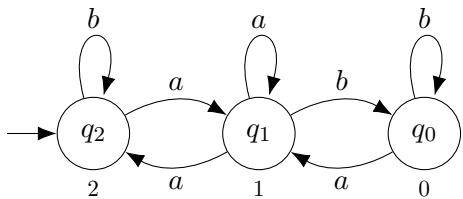
\begin{figure}
\centering
\begin{tikzpicture}[scale=2.2]

\node[state, initial] (q2) at (-1, 0) {$q_2$};
\node[state] (q1) at (0, 0) {$q_1$};
\node[state] (q0) at (+1, 0) {$q_0$};

\node at (-2.1, 0) {};
\node at (+2, 0) {};

\node[priority] at (-1, \dpAut) {$2$};
\node[priority] at (-0, \dpAut) {$1$};
\node[priority] at (+1, \dpAut) {$0$};

\path (q2) edge[trans, loop above] node[letters]{$b$} (q2);
\path (q1) edge[trans, loop above] node[letters]{$a$} (q1);
\path (q0) edge[trans, loop above] node[letters]{$b$} (q0);

\path (q2) edge[trans, bend left] node[letters, above]{$a$} (q1);
\path (q1) edge[trans, bend left] node[letters, below]{$a$} (q2);

\path (q1) edge[trans, bend left] node[letters, above]{$b$} (q0);
\path (q0) edge[trans, bend left] node[letters, below]{$a$} (q1);

\end{tikzpicture}
\caption{A weakly tight GFG-NPW $\A_0$. The numbers below the states describe their priorities.}
\label{fig:npw-gfg-example}
\end{figure}

\begin{example}
\label{ex1a}
Consider the NPW $\A_0$ appearing in Figure~\ref{fig:npw-gfg-example}. 
We claim that $\A_0$ is a GFG-NPW that recognizes the language
\[L_0=\{w \in\{a,b\}^\w\mid{} \text{there are infinitely many $b$'s in $w$}\}.\]
\end{example}

Indeed, if a word $w$ contains only finitely many $b$'s then $\A_0$ rejects $w$, as in all the runs of $\A_0$ on $w$, the lowest priority appearing infinitely often is~$1$. Therefore, $\lang(\A_0) \subseteq L_0$.

\begin{figure}
\begin{minipage}{6.0cm}
\begin{tikzpicture}[xscale=1.0, yscale=1.5]
\state{-2.2}{0}{$q_2$}{q2}{0}	
\state{+0.0}{0}{$q_1$}{q1}{1}
\state{+2.2}{0}{$q_0$}{q0}{0}

\draw ($(q20)+(-0.6,0)$) edge[trans] (q20);

\node[memory, below=of q00] {$m_0$};

\node[memory, below=of q10] {$m_1'$};
\node[memory, below=of q11] {$m_1$};

\node[memory, below=of q20] {$m_2$};

\node[priority] at (optq0) {$0$};
\node[priority] at (optq1) {$1$};
\node[priority] at (optq2) {$2$};

\path (q20) edge[trans, looper] node[letters] {$b$} (q20);
\path (q10) edge[trans, looper] node[letters] {$a$} (q10);
\path (q00) edge[trans, looper] node[letters] {$b$} (q00);

\path (q20) edge[trans] node[letters] {$a$} (q10);
\path (q10) edge[trans, bend left] node[letters] {$b$} (q00);
\path (q11) edge[trans, bend left] node[letters, below] {$b$} (q00);
\path (q00) edge[trans, bend left] node[letters] {$a$} (q11);
\path (q11) edge[trans] node[letters] {$a$} (q20);


\end{tikzpicture}
\caption{A strategy witnessing the GFGness of the automaton $\A_0$, depicted in Figure~\ref{fig:npw-gfg-example}.}
\label{fig:gfg-strat}
\end{minipage} \ \hspace{.3in}
\begin{minipage}{6.5cm}
\begin{tikzpicture}[xscale=1.0, yscale=1.5]
\state{-2.2}{0}{$q_2$}{q2}{0}	
\state{+0.0}{0}{$q_1$}{q1}{1}
\state{+2.2}{0}{$q_0$}{q0}{0}

\draw ($(q20)+(-0.6,0)$) edge[trans] (q20);

\node[memory, below=of q00] {$m_0$};

\node[memory, below=of q10] {$m_1'$};
\node[memory, below=of q11] {$m_1$};

\node[memory, below=of q20] {$m_2$};

\node[priority] at (optq0) {$0$};
\node[priority] at (optq1) {$1$};
\node[priority] at (optq2) {$2$};

\path (q20) edge[trans, looper] node[letters] {$b$} (q20);
\path (q10) edge[trans, looper] node[letters] {$a$} (q10);
\path (q00) edge[trans, looper] node[letters] {$b$} (q00);

\path (q20) edge[trans] node[letters] {$a$} (q10);
\path (q10) edge[trans, bend left] node[letters] {$b$} (q00);
\path (q00) edge[trans, bend left] node[letters] {$a$} (q10);


\end{tikzpicture}
\caption{A strategy witnessing the tightness of a sub-automaton of $\A_0$.}
\label{fig:gfg-strat2}
\end{minipage}
\end{figure}

We turn to describe a strategy $\fun{\strat}{\{a,b\}^\ast}{Q}$ with which $\A_0$ accepts all words in $L_0$. The only nondeterminism in $\A_0$ is when reading the letter $a$ in the state $q_1$. Thus, we have to describe $\strat$ only for words that reach $q_1$ and continue with an $a$. In that case, the strategy $\strat$ moves to the state~$q_2$, if the previous state is $q_0$, and to the state~$q_1$, otherwise. Figure~\ref{fig:gfg-strat} describes a $(\Sigma/Q)$-transducer that generates $\strat$. The rectangles denote the states of $\A_0$, while the dots are their $\strat$-memories. The numbers below the rectangles describe the priorities of the respective states of $\A_0$.

As $ L({\A_0}_\strat)\subseteq L(\A_0)$, it remains to formally prove that $L_0\subseteq L({\A_0}_\strat)$. Consider a word $w \in L_0$. Let $r=r_0,r_1,\ldots$ be the sequence of states of $\A_0$ visited by ${\A_0}_\strat$ on~$w$. Assume by way of contradiction that $r$ is not accepting. Thus, $r$ visits $q_1$ infinitely many times but visits $q_0$ only finitely many times. Let $N$ be such that $r_m \neq q_0$ for all $m\geq N$. Consider a position $k>N$ such that $r_k=q_1$. Since $w$ contains infinitely many $b$'s, there is some minimal $k'\geq k$ such that the $k'$-th letter in $w$ is $b$. Then, $r_k=r_{k+1}=\ldots=r_{k'}=q_1$ and $r_{k'+1}=q_0$, which contradicts the choice of $N$.\qed

The following lemma generalizes known residual properties of GFG automata (c.f.,~\cite{KS15}).

\newcommand{\lemMemoryTolerance}{
Consider a GFG automaton $\A=\tuple{\Sigma,Q,Q_0,\delta,\alpha}$ and let $\strat=\langle \Sigma,Q,M, m_0, \rho, \tau\rangle$ be a strategy witnessing its GFGness.
\begin{description}
\item[(1)]
For every state $q \in Q$ and memory $m \in M$ of $q$ that is reachable in $\As$, we have that $\lang(\Asm)=\lang(\A^q)$.
\item[(2)]
For every memories $m, m' \in M$ that are reachable in $\As$ with $\tau(m)=\tau(m')$, we have that $\lang(\Asm)=\lang(\A_\strat^{m'})$.
\end{description}
}

\begin{lemma}
\label{lem:MemoryTolerance}
\lemMemoryTolerance
\end{lemma}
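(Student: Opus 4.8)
The plan is to first observe that part~(2) is an immediate corollary of part~(1): if $m,m'$ are reachable with $\tau(m)=\tau(m')=q$, then part~(1) gives $\lang(\Asm)=\lang(\A^q)=\lang(\A_\strat^{m'})$. So all the work is in part~(1), which I would prove by establishing the two language inclusions separately. Before that, I would isolate two preliminary facts. The first is a \emph{legality} fact: for any reachable memory $m$ and letter $a$, following $\rho$ stays consistent with $\delta$, i.e.\ $\tau(\rho(m,a))\in\delta(\tau(m),a)$. This holds because reachability gives a word $u$ with $\rho(m_0,u)=m$, and then $\tau(\rho(m,a))=\strat(u\con a)\in\delta(\strat(u),a)=\delta(\tau(m),a)$, using only that $\strat$ produces legal runs of $\A$; consequently the $\tau$-image of any run of $\Asm$ is a legal run of $\A^{\tau(m)}$. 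This is exactly where reachability of $m$ is needed. The second is an \emph{acceptance-matching} fact: by the definition of $\alpha_\strat$ (each set $F$ is replaced by $F_\strat=\{m\mid\tau(m)\in F\}$), a memory-run $\pi$ of $\As$ satisfies $\alpha_\strat$ iff its $\tau$-image $\tau\circ\pi$ satisfies $\alpha$; this is a routine case check over the five condition types, using $\tau(\inf(\pi))=\inf(\tau\circ\pi)$.

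The forward inclusion $\lang(\Asm)\subseteq\lang(\A^q)$ is then direct: for $w\in\lang(\Asm)$, the accepting $\Asm$-run on $w$ maps under $\tau$, by the two preliminary facts, to a legal accepting run of $\A^q$ on $w$. The reverse inclusion $\lang(\A^q)\subseteq\lang(\Asm)$ is the crux and is where the GFG hypothesis does the work. Fix $u$ with $\rho(m_0,u)=m$. Given $w\in\lang(\A^q)$, I would form the word $u\con w$ and argue $u\con w\in\lang(\A)$: the strategy-run of $\strat$ on $u$ is a legal run of $\A$ ending in state $\tau(m)=q$, and concatenating it with an accepting run of $\A^q$ on $w$ yields an accepting run of $\A$ on $u\con w$. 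Since $\strat$ witnesses GFGness, $\strat$ must accept $u\con w$, i.e.\ the run $\strat(u\con w)$ is accepting; by the acceptance-matching fact this says $\As$ accepts $u\con w$. Finally, the $\As$-run on $u\con w$ passes through $m$ after reading $u$ and thereafter coincides with the $\Asm$-run on $w$; since dropping the finite prefix $u$ does not change $\inf$, the $\Asm$-run on $w$ is accepting, so $w\in\lang(\Asm)$.

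The main obstacle is this reverse inclusion, and specifically the step that turns membership in the residual language $\lang(\A^q)$ into a \emph{global} word $u\con w\in\lang(\A)$ on which GFGness can legitimately be invoked. The acceptance-matching fact and the invariance of $\inf$ under finite prefixes are routine bookkeeping; what makes the residual argument go through is choosing the prefix $u$ via reachability of $m$ and verifying that the strategy-run on $u$ lands exactly in memory $m$, hence in state $q$, so that the tail of the (accepting) $\As$-run on $u\con w$ is precisely the $\Asm$-run on $w$.
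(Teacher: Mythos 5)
Your proof is correct and follows essentially the same route as the paper's: part~(2) is reduced to part~(1), the forward inclusion is immediate, and the reverse inclusion uses a prefix $u$ reaching $m$ to form $u\con w\in\lang(\A)$ and then invokes GFGness of $\strat$ (the paper phrases this as a contradiction, you phrase it contrapositively, but the argument is the same). The extra ``legality'' and ``acceptance-matching'' bookkeeping you spell out is left implicit in the paper but is accurate.
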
 

\begin{proof}
We start with the first claim. Obviously, $\lang(\Asm) \subseteq \lang(\A^q)$. For the other direction, consider toward contradiction that there is a word $w\in \lang(\A^q)\setminus \lang(\Asm)$. Let $u$ be a finite word such that $\As(u)=m$. Then,  $u \cdot w\not\in \lang(\As)$. However, there is an accepting run of $\A$ on $u \cdot w$: it  follows the run of $\As$ on $u$, and continues with the accepting run of $\A^q$ on $w$. Hence, $\strat$ does not witness $\A$'s GFGness, and we have reached a contradiction. The second claim is a direct corollary of the first, as $\lang(\Asm)=\lang(\A^{\tau(m)})=\lang(\A^{\tau(m')})=\lang(\A_\strat^{m'})$.
\end{proof}

A finite \emph{path} $\pi=q_0,\ldots,q_k$ in $\A$ is a sequence of states such that for $i=0,\ldots,k{-}1$ we have $q_{i+1}\in\delta(q_i,a_i)$ for some $a_i\in\Sigma$. A path is a \emph{cycle} if $q_0=q_k$. Each path $\pi$ induces a set ${\it states}(\pi)=\{q_0,\ldots,q_k\}$ of states in $Q$. A set $S$ of finite paths then induces the set ${\it states}(S)=\bigcup_{\pi \in S} {\it states}(\pi)$. For a set $P$ of finite paths, a {\em combination of paths from $P$} is a set ${\it states}(S)$ for some nonempty $S \subseteq P$. 

Consider a strategy $\strat=\langle \Sigma,Q,M, m_0, \rho, \tau\rangle$.
We say that a transition $\zug{q,a,q'}$ of $\A$ is {\em used by\/} $\strat$ if there is a word $u \in \Sigma^\ast$ and a letter $a \in \Sigma$ such that $q=\strat(u)$ and $q'=\strat(u \cdot a)$. 
Consider two memories $m \neq m' \in M$ with $\tau(m)=\tau(m')$. Let $P_{m'\to m}$ be the set of paths of $\As$ from $m'$ to $m$. We say that $m$ is {\em replaceable by\/} $m'$ if $P_{m' \to m}$ is empty or all combinations of paths from~$P_{m' \to m}$ are accepting. 

We say that $\A$ is {\em tight with respect to $\strat$\/} if all the transitions of $\A$ are used in $\strat$, and for all memories $m\neq m' \in M$ with $\tau(m)=\tau(m')$, we have that $m$ is not replaceable by $m'$. Intuitively, the latter condition implies that both $m$ and $m'$ are required in $\strat$, as an attempt to merge them strictly reduces the language of $\As$. 
When only the first condition holds, namely when all the transitions of $\A$ are used in $\strat$, we say that $\A$ is {\em weakly tight\/} with respect to $\strat$. 
When a Rabin automaton $\A$ is {\em tight with respect to $\strat$\/}, and in addition for every state $q$ that appears in some good set of $\A$'s acceptance condition, there is a $q$-exclusive accepting cycle in $\As$, we say that $\A$ is \emph{strongly tight} with respect to $\strat$.
Then, $\A$ is (weakly, strongly) {\em tight\/} if it is (weakly, strongly) tight with respect to some strategy.

\begin{example}
\label{ex1b}
The GFG-NPW $\A_0$ from Example~\ref{ex1a} is weakly tight and is not tight with respect to the strategy $\strat$. Indeed, while all the transitions in $\A_0$ are used in $\strat$, the memory $m_1$ is replaceable by $m_1'$, as all combinations of paths from $m_1'$ to $m_1$ are accepting. 
\hfill \qed
\end{example}

The following lemma formalizes the intuition that every GFG automaton can indeed be restricted to its tight part, by removing redundant transitions and memories. Further, every tight Rabin GFG automaton has an equivalent strongly tight automaton over the same structure.

\newcommand{\lemManipulations}{
For every GFG automaton $\A$ there exists an equivalent tight GFG automaton~$\A'$. Moreover, $\A'$ is defined on a substructure of $\A$. 
}

\begin{lemma}
\label{lem:Manipulations}
\lemManipulations
\end{lemma}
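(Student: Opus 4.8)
The plan is to start from a finite-state transducer $\strat$ witnessing $\A$'s GFGness (its existence is guaranteed by~\cite{BKKS13}, and we may assume all its memories are reachable), and to repeatedly apply two simplifications until none applies: (i) if some transition of $\A$ is not used by the current strategy, delete it; and (ii) if there are memories $m\neq m'$ with $\tau(m)=\tau(m')$ such that $m$ is replaceable by $m'$, merge $m$ into $m'$ --- redirect in the transducer every edge leading to $m$ so that it leads to $m'$, and delete $m$. Operation~(i) only removes unused transitions, so $\strat$ remains a witnessing strategy for the smaller structure, the language is unchanged, and we stay on a substructure of $\A$. Operation~(ii) changes only the transducer, not the transitions available in $\A$, and strictly decreases the number of memories. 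Ordering configurations lexicographically by (number of memories, number of transitions), every step strictly decreases this measure, so the process terminates; at termination all transitions are used and no replaceable pair remains, i.e.\ the resulting automaton is tight with respect to the final strategy. Since we only ever deleted transitions, the final automaton sits on a substructure of $\A$.

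The heart of the argument is to show that operation~(ii) preserves the language, i.e.\ that the merged strategy $\strat'$ still witnesses GFGness: $\lang(\A_{\strat'})=\lang(\A)$. The inclusion $\lang(\A_{\strat'})\subseteq\lang(\A)$ is immediate, since redirecting an edge into $m$ toward $m'$ yields a legal transition of $\A$ (because $\tau(m')=\tau(m)$), so every run of $\A_{\strat'}$ projects via $\tau$ to a run of $\A$. For the converse, fix $w\in\lang(\A)=\lang(\As)$ and put $q=\tau(m)=\tau(m')$; let $\nu$ be the unique run of $\A_{\strat'}$ on $w$ and $\mu$ the accepting run of $\As$ on $w$. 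The key structural point is that $\nu$ decomposes along its \emph{reset positions}, the positions where it enters $m'$ via an edge that was redirected from $m$: between consecutive resets $\nu$ follows genuine $\As$-transitions (any edge into $m$ would itself be a reset), so each such segment, reinterpreting the terminal redirected step as entering $m$, is exactly an $\As$-path from $m'$ to $m$, i.e.\ a member of $P_{m'\to m}$, traced deterministically. I will also use the elementary observation that, since $\tau(m)=\tau(m')$, the memories $m$ and $m'$ lie in precisely the same acceptance sets of $\alpha_\strat$; hence for any set $T$ of memories with $m'\in T$, the sets $T$ and $T\setminus\{m\}$ have the same acceptance status, and by Lemma~\ref{lem:MemoryTolerance} we have $\lang(\Asm)=\lang(\A_\strat^{m'})=\lang(\A^q)$.

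I would then split on the number of resets of $\nu$. If there are infinitely many, then $\inf(\nu)$, read back in $\As$, equals ${\it states}(S)\setminus\{m\}$ for the nonempty set $S\subseteq P_{m'\to m}$ of segments whose state set recurs infinitely often; by replaceability ${\it states}(S)$ is accepting, and since $m'\in{\it states}(S)$ removing $m$ does not change this, so $\nu$ is accepting. If there are finitely many resets $r_1<\dots<r_l$ (the case of none being trivial, since then $\nu=\mu$), I would propagate acceptance forward by induction along the resets, writing $w_{\ge k}$ for the suffix of $w$ from position $k$. The runs $\mu$ and $\nu$ agree up to the first reset, so $\mu$ sits at $m$ at position $r_1$; as $\mu$ is accepting, its tail gives $w_{\ge r_1}\in\lang(\Asm)=\lang(\A^q)$. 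For the step, if $w_{\ge r_i}\in\lang(\A^q)=\lang(\A_\strat^{m'})$, then the deterministic $\As$-run from $m'$ on $w_{\ge r_i}$ is accepting and, following segment $\sigma_i$, reaches $m$ after reading $w_{[r_i,r_{i+1})}$, whence its tail yields $w_{\ge r_{i+1}}\in\lang(\A^q)$. Thus $w_{\ge r_l}\in\lang(\A^q)=\lang(\A_\strat^{m'})$, and since the tail of $\nu$ after the last reset is precisely the $\As$-run from $m'$ on $w_{\ge r_l}$, that tail --- and hence $\nu$ --- is accepting.

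The main obstacle is exactly this merge-correctness step, and within it the interplay of the two regimes. The finitely-many-resets case needs the forward induction transporting the acceptance of $\mu$ through the resets, relying on determinism of $\As$ together with the residual equality $\lang(\Asm)=\lang(\A_\strat^{m'})$ from Lemma~\ref{lem:MemoryTolerance}; the infinitely-many-resets case is the one that genuinely requires the full strength of replaceability --- that \emph{all combinations} of $P_{m'\to m}$-paths, not merely each single path, be accepting --- since $\inf(\nu)$ may mix several recurring segments. The boundary case $m_0=m$ is handled analogously, treating position~$0$ as an initial reset so that the induction's base becomes $w\in\lang(\Asm)=\lang(\A^q)$. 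Once merge-correctness is in place, the termination and substructure claims from the first paragraph complete the proof.
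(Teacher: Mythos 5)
Your proposal is correct and follows essentially the same route as the paper's proof: iterate the two reductions (delete unused transitions, merge a replaceable memory $m$ into $m'$), and establish merge-correctness via Lemma~\ref{lem:MemoryTolerance} together with a case split on whether the redirected edge is taken finitely or infinitely often, the latter case invoking the full ``all combinations of $P_{m'\to m}$-paths are accepting'' strength of replaceability. Your reset-position decomposition merely repackages the paper's argument (and is, if anything, slightly more careful about the empty-$P_{m'\to m}$ case, termination, and the boundary case $m_0=m$).
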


\begin{proof}
Consider a GFG automaton $\A=\tuple{\Sigma,Q,Q_0,\delta,\alpha}$, and let $\strat=\langle\Sigma$, $Q$, $M$, $m_0$, $\rho$, $\tau\rangle$ be a strategy that witnesses $\A$'s GFGness. We show how to make $\A$ tight with respect to a strategy obtained by merging memories in $\strat$. 

As long as $\A$ is not tight with respect to $\strat$, we proceed as follows. 
First, we remove from $\A$ all the transitions that are not used by $\strat$. Then, if there are two memories $m,m' \in M$ with $\tau(m)=\tau(m')$ such that $m$ is replaceable by $m'$, we remove $m$ from $\strat$ and redirect transitions to $m$ into $m'$. Note that the removal of $m$ may cause the obtained strategy not to use some transitions in $\A$. 
We thus keep repeating both steps as long as the obtained automaton is not tight with respect to the obtained strategy.

We prove that both steps do not change the language of $\A$ and its GFGness. 
First, clearly, removal of transitions that are not used does not change the language of $\A$. 
Now, consider memories $m\neq m' \in M$ with $\tau(m)=\tau(m')$ such that $m$ is replaceable by $m'$. Thus, $P_{m'\to m}$ is empty or all subsets $S \subseteq P_{m' \to m}$ are such that ${\it states}(S)$ is accepting. Let $\strat'$ be the strategy obtained by removing $m$ from $\strat$ and redirecting transitions to $m$ into $m'$.

Since $\lang(\A_{\strat'})\subseteq\lang(\A)=\lang(\A_\strat)$ it is enough to prove that $\lang(\A_\strat) \subseteq \lang(\A_{\strat'})$. 

We start with the case $P_{m'\to m}$ is empty, thus there is no path from $m'$ to $m$. Consider the accepting run $r$ of $\As$ on some word $w$. If $r$ does not include $m$, then the run of $\A_{\strat'}$ on $w$ is identical to $r$, and is thus accepting. Otherwise, let $p$ be the first position of $m$ in $r$, and let $w^{p+1}$ be the suffix of $w$ from the position $p+1$ onwards. Since $r$ is accepting, $w^{p+1}\in \lang(\Asm)$. Thus, by Lemma~\ref{lem:MemoryTolerance}, we have $w^{p+1}\in \lang(\A_\strat^{m'})$. Now, since $P_{m'\to m}$ is empty, the runs of $\A_\strat^{m'}$ and $\A_{\strat'}^{m'}$ are identical on $w^{p+1}$, and are thus accepting. Hence, $\A_{\strat'}$ accepts~$w$.

We continue with the case that all subsets $S \subseteq P_{m' \to m}$ are such that ${\it states}(S)$ is accepting. Consider a word $w\in\As$, and let $r'$ be the run of $\A_{\strat'}$ on $w$. The run $r'$ may use the memory $m'$ instead of $m$ finitely or infinitely many times. Consider first the case that $r'$ uses the memory $m'$ instead of $m$ for $k$ times. It is easy to prove, by an induction on $k$, that $r'$ is accepting. Indeed, the base case is similar to the case $P_{m'\to m}$ is empty, and the induction step changes only a finite prefix of the run. 
Consider now the case that the change is done infinitely many times, in positions $p_1, p_2, \ldots$ of $r'$. Every path from $p_i$ to $p_{i+1}$ is a path from $m'$ to $m$ in $\As$. Hence, the set of states $\inf(r')$ is ${\it states}(S)$ for some nonempty $S \subseteq P_{m' \to m}$, and is thus accepting.
\end{proof}

\newcommand{\lemStronglyTight}{
For every tight Rabin GFG automaton, there exists an equivalent strongly tight Rabin GFG automaton over the same structure. 
}

\begin{lemma}
\label{lem:StronglyTight}
\lemStronglyTight
\end{lemma}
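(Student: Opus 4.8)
The plan is to leave the structure $\tuple{\Sigma,Q,Q_0,\delta}$ untouched and to obtain strong tightness purely by redefining the Rabin condition, reusing the strategy $\strat$ that witnesses the tightness of the given automaton $\A$. For a state $q$, write $M_q=\{m\mid m\text{ is a memory of }q\}$. The key preliminary observation is that shrinking a good set can never destroy tightness with respect to $\strat$: condition~(1) of tightness (every transition is used by $\strat$) is purely structural and is unaffected by the acceptance condition, while for condition~(2) note that deleting states from a good set only removes sets from the accepting family of $\As$, so every combination of paths that was rejecting stays rejecting and a memory that was not replaceable by $m'$ remains not replaceable. In particular the inclusion $\lang(\A')\subseteq\lang(\A)$ holds automatically after any such shrinking, so the whole task reduces to not losing words while creating the required exclusive cycles.

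Accordingly, I would proceed by iterated simplification. Call a state $q$ appearing in some good set \emph{deficient} if $\As$ has no $q$-exclusive accepting cycle. If no good-set state is deficient, then $\A$ is already strongly tight and we are done. Otherwise, pick a deficient $q$ and remove $M_q$ from every good set, i.e.\ replace each good set $F_i$ by $F_i\setminus\{q\}$. Each step strictly decreases the total size of the good sets, so the process terminates; since the resulting automaton has no deficient good-set state, it is strongly tight, and it remains only to show that each removal preserves the language.

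For preservation, take $w\in\lang(\A)=\lang(\As)$, let $r$ be the run of $\As$ on $w$, and put $S=\inf(r)$; then $S$ is the state set of a cycle of $\As$ and satisfies $\alpha_\strat$. Since $q$ is deficient, this cycle is not $q$-exclusive accepting, so $S\setminus M_q$ still satisfies $\alpha_\strat$, say through a pair $\tuple{E_k,F_k}$, giving $(S\setminus M_q)\cap E_k=\emptyset$ and $(S\setminus M_q)\cap F_k\neq\emptyset$. If $q\notin E_k$, then already $S\cap E_k=\emptyset$ while $S\cap(F_k\setminus\{q\})\neq\emptyset$, so $S$ satisfies the new condition and $w\in\lang(\A')$, as required.

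The genuinely hard case---and the step I expect to be the main obstacle---is the cross-pair interaction in which every pair witnessing the acceptance of $S\setminus M_q$ has $q$ in its bad set. Then the full loop $S$, which still contains $M_q$, is itself blocked on that pair, so naively deleting $q$ from the good sets can turn the accepting loop $S$ into a rejecting one and shrink the language; this is precisely the asymmetry that prevents a straightforward dualization of the Streett argument and forces the stronger tightness notion. To handle it I would delete such a deficient $q$ from the offending bad sets as well, and verify---using the residual property of Lemma~\ref{lem:MemoryTolerance}, namely that all memories in $M_q$ share the residual $\lang(\A^q)$, together with the fact that only realizable loops matter---that no rejecting loop is thereby promoted to an accepting one. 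Since deleting states from bad sets enlarges the accepting family and may create replaceable memories, one may have to re-tighten via Lemma~\ref{lem:Manipulations}; ensuring that this re-tightening stays within the original structure, rather than passing to a proper substructure, is the delicate core of the argument.
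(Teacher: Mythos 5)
Your setup coincides with the paper's: iterate over ``deficient'' good-set states $q$ (those with no $q$-exclusive accepting cycle in $\As$), delete them from the good sets, observe that tightness with respect to $\strat$ and the inclusion $\lang(\A')\subseteq\lang(\A)$ come for free because shrinking good sets only makes paths harder to accept, and terminate because the good sets strictly shrink. Your handling of the subcase where some pair witnessing the acceptance of $S\setminus M_q$ avoids $q$ in its bad set is also correct. The gap is exactly where you flag it. In the remaining case you abandon the good-sets-only scheme and start deleting $q$ from bad sets, and the two facts this requires --- that enlarging the accepting family in this way adds no new words to the language, and that the re-tightening you then invoke can be carried out without leaving the original structure --- are precisely the ones you leave unproved. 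Neither is routine: removing $q$ from a bad set $E_j$ promotes to acceptance every loop whose intersection with $E_{j}$ lies entirely in the memories of $q$ and which meets $F_j$, and Lemma~\ref{lem:MemoryTolerance} by itself gives no reason such a loop cannot occur; moreover Lemma~\ref{lem:Manipulations} only returns a \emph{sub}structure, so appealing to it directly threatens the ``same structure'' conclusion that the lemma asserts. As written, the proposal is an unfinished proof rather than a complete one.

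The paper never touches the bad sets. For a deficient $q$ it argues language preservation directly from the definition of $q$-exclusive accepting cycles: an accepting run $r$ of $\As$ that visits $q$ only finitely often is unaffected, and if $r$ visits $q$ infinitely often then, since no accepting cycle through $q$ is $q$-exclusive, every cycle of $\As$ from $q$ back to $q$ is still accepting after $q$ is dropped from the good set, so $r$ remains accepting. GFGness then follows from $\lang(\A)=\lang(\As)=\lang(\A'_\strat)\subseteq\lang(\A')\subseteq\lang(\A)$, and preservation of tightness is argued exactly as you did. Your instinct that the cross-pair interaction (a state lying in one pair's good set and another pair's bad set) is the delicate point is not unreasonable --- the paper's one-sentence justification is terse on exactly this configuration --- but the intended resolution stays entirely within the good-sets-only modification; introducing edits to the bad sets creates new obligations that your argument does not discharge.
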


\begin{proof}
Consider a tight GFG Rabin automaton $\A$ and let $\strat$ be a strategy that witnesses $\A$'s GFGness and with respect to which $\A$ is tight. We show that the removal of redundant states from the good sets of $A$'s accepting condition results in an automaton that is equivalent to $\A$ and strongly tight with respect to $\strat$.

Consider a state $q$ of $\A$ that appears in some good set $G$ of $\A$'s acceptance condition, and for which there is no $q$-exclusive accepting cycle in $\As$. We claim that the automaton $\A'$ that is identical to $\A$, except for removing $q$ from $G$, is a GFG Rabin automaton equivalent to $\A$ that is tight w.r.t.\ $\strat$. Indeed:
\begin{itemize}
\item Regarding the language equivalence, obviously, $L(\A') \subseteq L(\A)$. As for the other direction, let $r$ be the accepting run of $\As$ on some word $w$. Observe that $r$ is also an accepting run of $\A'_{\strat}$ on $w$: If $q$ does not appear infinitely often in $r$ then clearly $r$ is also accepting w.r.t.\ $\A'$. Now, if $q$ does appear infinitely often in $r$, then since there is no $q$-exclusive accepting cycle in $\As$, every cycle from $q$ back to $q$ is accepting w.r.t.\ $\A'$ and thus $r$ is accepting w.r.t.\ $\A'$.
\item Regarding the GFGness of $\A'$, since $L(\A) = L(\As) = L(\A'_\strat) \subseteq L(\A') \subseteq L(\A)$, we get that $\strat$ witnesses the GFGness of $\A'$. 
\item Regarding the tightness of $\A'$ w.r.t.\ $\strat$, observe that $\A$ and $\A'$ have the same transitions, and since $\As$ has no redundant memories, neither does $\A'_\strat$ have ones: Recall that a memory $m$ is redundant if exists a memory $m'$ of the same state, such that the set of paths of $\As$ from $m'$ to $m$, which we denote by $P_{m' \to m}$, is empty or all combinations of paths from~$P_{m' \to m}$ are accepting. The set of paths of $\As$ and of $\A'_\strat$ from $m'$ to $m$ are the same, and a path of $\A'_\strat$ cannot be accepting if it is not accepting in $\As$.
\end{itemize}

As there are finitely many states in $\A$, an iterative removal of states $q$ as described above results in an automaton that is strongly tight w.r.t.\ $\strat$.
\end{proof}

\begin{example}
\label{ex:making-tight}
In Figure~\ref{fig:gfg-strat2} we describe a strategy $\strat'$ that witnesses the tightness of a GFG-NPW on a substructure of the GFG-NPW $\A$ from Example~\ref{ex1a}. The strategy $\strat'$ is obtained from $\strat$ by following the procedure described in the proof of Lemma~\ref{lem:Manipulations}: all the transitions to $m_1$ are redirected to $m_1'$. This causes the transition $(q_1,a,q_2)$ that was used by the memory $m_1$ not to be used, and it is removed.
\hfill \qed
\end{example}

A special case of GFG automata are those who are \emph{determinizable by pruning} (or shortly \DBP) --- there exists a state $q_0\in Q_0$ and a function $\fun{\delta'}{Q\times\Sigma}{Q}$ that for every state $q$ and letter $a$ satisfies $\delta'(q,a)\in\delta(q,a)$ such that $\A'=\tuple{\Sigma,Q,q_0,\delta',\alpha}$ is a deterministic automaton recognizing the language $\lang(\A)$.

\stam{ 
We provide below some observations on GFG automata.

\begin{lemma}[\cite{KS15}]
\label{lem:StateTolerance}
Consider a tight GFG automaton $\A$. For every two states $q$ and $q'$ of $\A$, if there exists a finite word $u$ such that $q,q'\in\A(u)$, then $\lang(\A^q)=\lang(\A^{q'})$.
\end{lemma}

\todo{These will probably go to the intro:}

\begin{proposition}[\cite{BKKS13}]
\label{pro:FiniteStrategy}
Every GFG-NRW and GFG-NSW have finite strategies witnessing their GFGness.
\end{proposition} 

\begin{proposition}[\cite{BKKS13}]
\label{pro:GfgNbwDbw}
Every GFG-NBW is equivalent to some DBW.
\end{proposition} 

\begin{proposition}[\cite{MH84}]
\label{pro:NCW}
GFG-NCWs, NCWs, and DCWs have the same expressiveness.
\end{proposition}

\subsection{Expressiveness and Typeness}

For two automata $\A$ and $\A'$, we say that $\A$ and $\A'$ are
{\em equivalent\/} if $\lang(\A)=\lang(\A')$. For an automaton type
$\beta$ (e.g., DBW) and an automaton $\A$, we say that $\A$ is
$\beta$-realizable if there is a $\beta$-automaton equivalent to
$\A$.

Consider an automaton $\A=\tuple{\A,Q,Q_0,\delta,\alpha}$.
For an acceptance-condition class $\gamma$ (e.g., B\"uchi), we say
that $\A$ is {\em $\gamma$-type} if $\A$ has an equivalent
$\gamma$ automaton with the same structure as $\A$. That is, there
is an automaton $\A'=\zug{\Sigma,Q,Q_0,\delta,\alpha'}$ such that
$\alpha'$ is an acceptance condition of class $\gamma$ and
$\lang(\A')=\lang(\A)$. 

The notion of types of $\omega$-regular automata was introduced in~\cite{KPB94}.
It concerns the following problem: given two acceptance-condition classes $\beta$ and $\gamma$, is it true that
every D$\beta$W that is D$\gamma$W-realizable, is also
$\gamma$-type? We then say that D$\beta$W are $\gamma$-type. In
other words, D$\beta$W are $\gamma$-type if every deterministic
$\beta$-automaton that has an equivalent deterministic
$\gamma$-automaton, also has an equivalent  deterministic
$\gamma$-automaton on the same structure.
}

\section{Typeness Does Not Hold for Unambiguous Automata}
\label{sec:examples}

As noted in~\cite{KMM06}, it is easy to see that typeness does not hold for nondeterministic automata: 
there exists an NRW that recognizes an NBW-realizable language, yet does not have an equivalent NBW on the same structure. Indeed, since all $\omega$-regular languages are NBW-realizable, typeness in the nondeterministic setting would imply a translation of all NRWs to NBWs on the same structure, and we know that such a translation may involve a blow-up linear in the index of the NRW~\cite{SN99}. Even for Streett and co-B\"uchi automata, where the restriction to NCW-realizable languages amounts to a restriction to DCW-realizable languages, typeness does not hold. 

In this section we strengthen the relation between typeness and determinism and show that typeness does not hold for nondeterministic automata even when they recognize a~DBW-realizable language and, moreover, when they are unambiguous. Also, we prove the non-typeness results for NPWs, thus they apply to both Rabin and Streett automata. 

\begin{proposition}
\label{pro:unambig-not-type}
Unambiguous NPWs are not B\"uchi-type with respect to DBW-realizable languages.
\end{proposition}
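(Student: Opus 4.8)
The plan is to exhibit a single witness: a concrete unambiguous NPW $\A$ whose language $L=\lang(\A)$ is DBW-realizable, yet no substructure of $\A$ (the full structure included) admits a B\"uchi condition recognizing $L$. The design principle is to embed into $\A$ the classical topological obstruction to DBW-recognizability---two reachable cycles $C_1 \subsetneq C_2$, as sets of states, with $C_1$ ``accepting'' and $C_2$ ``rejecting''---but to do so inside a nondeterministic, \emph{unambiguous} presentation of a language that is nonetheless globally DBW-realizable through an unrelated automaton. Nondeterminism is exactly what lets a DBW-realizable language be hosted on a B\"uchi-hostile structure: the analogous phenomenon cannot occur deterministically, since deterministic Rabin automata are B\"uchi-type. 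Unambiguity, meanwhile, will be preserved because each accepted word is steered into a single accepting run.

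Concretely, I would work over a small alphabet and equip $\A$ with a parity condition using priorities $0,1,2$, placing the conflict between a priority-$2$ cycle and a priority-$1$ state. I would arrange three features. First, there is a word $w_1 \in L$ whose tail forces every run to settle inside $C_1$, where ${\it states}(C_1)$ has even minimal priority $2$, so $C_1$ is the accepting cycle. Second, there is a word $w_2 \notin L$ possessing a run whose set of infinitely visited states is exactly ${\it states}(C_2)$, where $C_2 = C_1 \cup \{p\}$ for a priority-$1$ state $p$; since the minimum over ${\it states}(C_2)$ is the odd value $1$, this run is rejecting and $w_2$ has no accepting run at all. Third---and this is the crucial design constraint---there is a word $w_3 \in L$ every run of which uses the transitions of $C_2$ not already in $C_1$, so that these transitions cannot be pruned without losing $w_3$. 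Before turning to substructures I would verify the two sanity conditions: that $L$ is DBW-realizable, by exhibiting a structurally unrelated DBW for it, and that $\A$ is unambiguous, by tracing the unique accepting run of each accepted word.

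The heart of the argument rules out every B\"uchi substructure at once. Suppose for contradiction that $\B=\tuple{\Sigma,Q',Q_0',\delta',F}$ is a substructure of $\A$ carrying a B\"uchi set $F$ with $\lang(\B)=L$. Since $w_1,w_3\in L$ must be accepted by $\B$, and their acceptance is forced through $C_1$ and through the transitions of $C_2$ outside $C_1$ respectively, the transitions constituting both cycles survive in $\delta'$; hence $C_1$ and $C_2$ are present in $\B$. Acceptance of $w_1$ in $\B$ requires a run visiting $F$ infinitely often, and as every run of $w_1$ eventually remains inside $C_1$, this forces ${\it states}(C_1)\cap F \neq \emptyset$. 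But ${\it states}(C_1)\subseteq {\it states}(C_2)$, so the run of $w_2$ cycling through $C_2$ also visits $F$ infinitely often and is therefore accepting in $\B$; thus $w_2 \in \lang(\B)=L$, contradicting $w_2 \notin L$. Consequently no such $\B$ exists, which is exactly the failure of (relaxed) B\"uchi-typeness for an unambiguous NPW recognizing a DBW-realizable language.

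The main obstacle is the construction itself rather than the final contradiction. I must reconcile three competing demands simultaneously: unambiguity, which rules out the standard non-typeness tricks that rely on a word having many accepting runs sharing states; the requirement that \emph{both} conflicting cycles be mandatory in \emph{every} language-preserving substructure, so that no clever pruning dissolves the $C_1 \subsetneq C_2$ obstruction; and overall correctness together with DBW-realizability of $L$. The delicate point is wiring the priority-$1$ state $p$ so that the whole cycle $C_2$ is genuinely usable as the $\inf$-set of a rejecting run on some $w_2 \notin L$ (requiring both the $C_1\to p$ and $p\to C_1$ transitions to persist), while those same transitions remain indispensable for accepting words of $L$---all without ever creating a second accepting run on any accepted word.
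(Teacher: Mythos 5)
Your overall strategy is sound and in fact matches the paper's proof in spirit: the paper's witness $\A_1$ contains exactly the configuration you describe, an accepting cycle (the $b$-self-loop on the priority-$2$ state $q_{00}$) strictly contained in a rejecting cycle (the $a$-cycle through $q_{00}$ and the priority-$1$ state $q_{11}$), with $a^\omega$ playing the role of your $w_2$, and the final contradiction is the same ``any $F$ meeting the small cycle accepts the big cycle's word'' argument, supplemented by a case analysis on the remaining states. The closing remark of the paper's proof --- that unambiguity plus every transition being used in some accepting run rules out pruning --- is precisely your $w_1$/$w_3$ device.

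The problem is that you never produce the witness. Everything is written in the conditional (``I would work over a small alphabet\ldots'', ``I would arrange three features\ldots''), and you yourself identify the construction as ``the main obstacle''. But the construction \emph{is} the proposition: the statement asserts the existence of an unambiguous, DBW-realizable, non-B\"uchi-typeable NPW, and the simultaneous satisfiability of your three design constraints together with unambiguity and DBW-realizability is exactly what has to be demonstrated. In particular, the tension you flag at the end --- making the language DBW-realizable even though the structure hosts a $C_1\subsetneq C_2$ obstruction --- is resolved in the paper by a specific trick you do not supply: the witness is the disjoint union of an unambiguous component for ``finitely many and an even number of $a$'s'' (which carries the cycle obstruction and is itself \emph{not} DBW-realizable) with a separate deterministic component for ``infinitely many $a$'s and $b$'s''; the union language is DBW-realizable while the two components have disjoint languages, preserving unambiguity. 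Without such a concrete automaton and the accompanying verifications (correctness of $L$, uniqueness of accepting runs, a DBW for $L$, and the exhaustive check that no choice of $F$ works), the argument is a plan rather than a proof.
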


\begin{figure}
\centering
\begin{tikzpicture}[scale=2.2]
\node[state, initial below] (q00) at (-1-1, 0) {$q_{00}$};
\node[state, initial right] (q01) at (-1-0, 0) {$q_{01}$};
\node[state] (q10) at (-1-1, -1) {$q_{10}$};
\node[state] (q11) at (-1-0, -1) {$q_{11}$};

\node[state, initial] (p0) at (+1+0, -0.5) {$p_0$};
\node[state] (p1) at (+1+1, -0.5) {$p_1$};
\node[state] (p2) at (+1+2, -0.5) {$p_2$};

\node[priority] at (-1-1+\dpAut*0.7, \dpAut*0.7) {$2$};
\node[priority] at (-1-0-\dpAut*0.7, \dpAut*0.7) {$1$};
\node[priority] at (-1-1+\dpAut*0.7, -1+\dpAut*0.7) {$1$};
\node[priority] at (-1-0-\dpAut*0.7, -1+\dpAut*0.7) {$1$};

\node[priority] at (+1+0, -0.5+\dpAut) {$1$};
\node[priority] at (+1+1, -0.5+\dpAut) {$1$};
\node[priority] at (+1+2, -0.5+\dpAut) {$0$};

\path (q00) edge[trans, loop left] node[letters]{$b$} (q00);
\path (q00) edge[trans, bend left=0] node[letters, above]{$b$} (q01);
\path (q00) edge[trans2] node[letters, below, pos=0.2]{$a$} (q11);
\path (q10) edge[trans2] node[letters, below, pos=0.8]{$a$} (q01);
\path (q10) edge[trans, bend right=0] node[letters, below]{$b$} (q11);
\path (q10) edge[trans, loop left] node[letters]{$b$} (q10);

\path (p0) edge[trans, loop above] node[letters]{$b$} (p0);
\path (p0) edge[trans, bend left] node[letters, above]{$a$} (p1);
\path (p1) edge[trans, loop above] node[letters]{$a$} (p1);
\path (p1) edge[trans, bend left] node[letters, above]{$b$} (p2);
\path (p2) edge[trans, bend left=40] node[letters, below]{$a$, $b$} (p0);

\end{tikzpicture}
\caption{$\A_1$: An unambiguous NPW that is DBW-realizable yet is not B\"uchi-type.}
\label{fig:upw-example}
\end{figure}
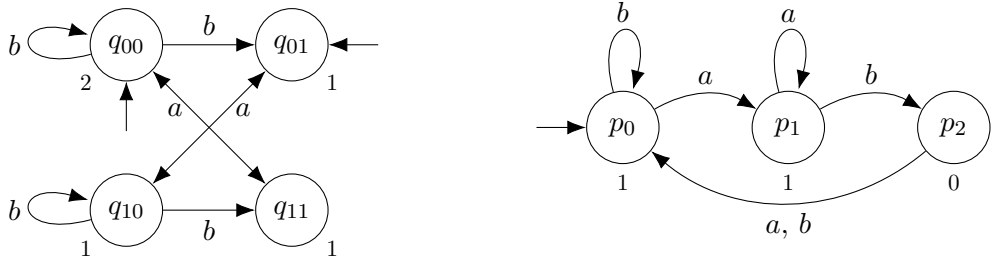

\begin{proof}
Consider the automaton $\A_1$ depicted in Figure~\ref{fig:upw-example}. We will show that $\A_1$ is unambiguous and recognizes a DBW-realizable language, yet $\A_1$ is not B\"uchi-type. Moreover, we cannot prune transitions from $\A_1$ and obtain an equivalent B\"uchi-type NPW.

The NPW $\A_1$ has two components: the left component, consisting of the states $q_{ij}$; and the right component, consisting of the states $p_0$, $p_1$, and $p_2$. The right part is deterministic, and it recognizes the language
\[L_{1,a,b}=\{w \in\{a,b\}^\w\mid\text{there are infinitely many $a$'s and $b$'s in $w$}\}.\]

We first prove that the left component is unambiguous and that its language is:
\[L_{1,\sharp a, b}=\{w \in\{a,b\}^\w\mid\text{there is a finite and even number of $a$'s in $w$ }\}.\]
To see this, observe that after reading a finite word, the left component of $\A_1$ can reach a state of the form $q_{ij}$ iff $i\equiv \sharp_a(w)\mod 2$ (i.e. $i$ is the parity of the number of letters $a$ in $w$). The only accepting runs of the left component are those that get stuck in the state $q_{00}$. This implies that if $w$ is accepted by the left component, then $w \in L_{1,\sharp a,b}$. For the other direction, consider a word $w \in L_{1,\sharp a,b}$. We show that $\A_1$ has an (in fact, unique) accepting run on $w$. 
We can construct an accepting run of the left component of $\A_1$ on $w$ by guessing whether the next block of $a$ (i.e., a sub-word of the form $a^{{+}}$) has an even or odd length. If the guess is incorrect, the run is stuck reading $b$ in a state of the form $q_{i1}$. If the guess is correct, the run reads the first $b$ after the block in a state of the form $q_{i0}$. Thus, after reading the last block of $a$'s, the constructed run reaches the state $q_{00}$, stays there forever, and $\A_1$ accepts $w$ in its left component. Further, all other runs that attempt to accept $w$ in the left component are doomed to get stuck. Thus, the left component is unambiguous. 

Since $L_{1,a,b}\cap L_{1,\sharp a,b}=\emptyset$, the NPW $\A_1$ is unambiguous and its language is
\begin{align*}
L_1=\{w \in\{a,b\}^\w\mid{}& \text{$w$ has an infinite number of $b$'s}\\
&\text{and an infinite or even number of $a$'s}\}.
\end{align*}

It is not hard to see that $L_1$ is DBW-realizable. An example of a DBW that recognizes $L_1$ is depicted in Figure~\ref{fig:dbw-example}.

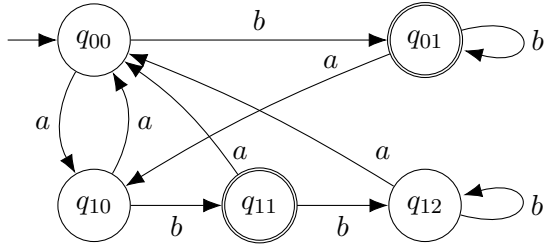
\begin{figure}
\centering
\begin{tikzpicture}[scale=2.2]
\node[state, initial] (q00) at (-2, 0) {$q_{00}$};
\node[state, accepting] (q01) at (0, 0) {$q_{01}$};
\node[state] (q10) at (-2, -1) {$q_{10}$};
\node[state, accepting] (q11) at (-1, -1) {$q_{11}$};
\node[state] (q12) at (0, -1) {$q_{12}$};

\draw[trans] (q00) edge[bend left=0] node[letters, above] {$b$} (q01);
\draw[trans] (q01) edge[out=+15, in=-15,loop] node[letters, right] {$b$} (q01);

\draw[trans] (q10) edge[bend right=0] node[letters, below] {$b$} (q11);
\draw[trans] (q11) edge[bend right=0] node[letters, below] {$b$} (q12);
\draw[trans] (q12) edge[out=-15, in=+15,loop] node[letters, right] {$b$} (q12);

\draw[trans] (q00) edge[bend right] node[letters, left] {$a$} (q10);
\draw[trans] (q10) edge[bend right] node[letters, right] {$a$} (q00);
\draw[trans] (q11) edge[bend right=10] node[letters, right, pos=0.15] {$a$} (q00);
\draw[trans] (q12) edge[bend right=5] node[letters, above right, pos=0.1] {$a$} (q00);
\draw[trans] (q01) edge[bend right=5] node[letters, above, pos=0.2] {$a$} (q10);
\end{tikzpicture}
\caption{A DBW recognizing $L_1$.}
\label{fig:dbw-example}
\end{figure}

We prove that $\A_1$ is not B\"uchi-type. Assume by way of contradiction that there exists a subset $\alpha$ of $\A_1$'s states such that the automaton obtained form $\A_1$ by viewing it as an NBW with the acceptance condition $\alpha$ recognizes $L_1$. If $\{q_{00},q_{11}\} \cap \alpha \neq \emptyset$, then the NBW accepts the word $a^\w$, which is not in $L_1$. If $\{q_{01},q_{10}\} \cap \alpha \neq \emptyset$, then the NBW accepts the word $ba^\w$, which is also not in $L_1$. Therefore, $\alpha \subseteq \{p_0,p_1,p_2\}$. Clearly, $p_1\notin \alpha$, as otherwise the NBW accepts $a^\w$. Similarly, if $p_0\in \alpha$, then the NBW accepts $ab^\w$, which is also not in $L_1$. Thus, $\alpha=\{p_2\}$ and the NBW rejects $b^\w$, which is in $L_1$.

Finally, as $\A_1$ is unambiguous and all its transitions are used in the accepting run of some word, it cannot be pruned to an equivalent NPW.
\end{proof}

The dual case of unambiguous NPWs that are not co-B\"uchi-type with respect to DCW-realizable languages follows from the results of~\cite{KMM06}, and we give it here for completeness, adding the observation that the automaton described there cannot be pruned to an equivalent co-B\"uchi-type NPW.

\begin{proposition}{\rm \cite{KMM06}}
\label{pro:unpw}
Unambiguous NPWs (and even NBWs) are not co-B\"uchi-type with respect to DCW-realizable languages.
\end{proposition}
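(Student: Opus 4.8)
The plan is to reuse the automaton constructed in \cite{KMM06}, which is the co-B\"uchi analogue of the automaton $\A_1$ from Proposition~\ref{pro:unambig-not-type}, and to supplement the argument there with the new pruning observation. Concretely, \cite{KMM06} exhibits an unambiguous NBW $\B$ recognizing a language $L$ that is DCW-realizable, and shows that $\B$ is not co-B\"uchi-type: for no choice of a co-B\"uchi set $\alpha \subseteq Q$ of states (those to be visited only finitely often) does the automaton obtained by placing $\alpha$ on the structure of $\B$ recognize $L$. Since an NBW is a special case of an NPW, this already yields the first half of the claim both for NBWs and for NPWs, and explains the ``even NBWs'' strengthening: one does not need the full parity machinery to defeat co-B\"uchi-typeness.

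First I would recall, following \cite{KMM06}, why $\B$ fails to be co-B\"uchi-type. One enumerates the candidate sets $\alpha \subseteq Q$ and, for each, exhibits a witness word on which the induced co-B\"uchi automaton on the structure of $\B$ disagrees with $L$: either it wrongly accepts a word outside $L$ because some run eventually avoids $\alpha$, or it wrongly rejects a word of $L$ because that word's only accepting run visits $\alpha$ infinitely often. This is precisely the dual of the case analysis performed over $\{q_{00},q_{11}\}$, $\{q_{01},q_{10}\}$, $p_0$, $p_1$, and $p_2$ in the proof of Proposition~\ref{pro:unambig-not-type}, and I would not redo it in detail but cite it.

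The genuinely new step is the pruning observation, which I would argue exactly as in the B\"uchi case. Because $\B$ is unambiguous, every transition that $\B$ uses at all lies on the \emph{unique} accepting run of some word of $L$; equivalently, $\B$ has no redundant transitions. Consequently, deleting any transition removes from $\lang(\B)$ the word whose sole accepting run traverses it, strictly shrinking the language. Hence the only pruning of $\B$ that preserves $L$ is $\B$ itself, and we have just seen that $\B$, viewed on its own structure, admits no equivalent co-B\"uchi condition. Therefore no pruning of $\B$ produces an equivalent co-B\"uchi-type NPW, which is the strengthened statement claimed in the surrounding text.

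The main obstacle is bookkeeping rather than conceptual: one must confirm that the automaton $\B$ as presented in \cite{KMM06} is genuinely unambiguous and that each of its transitions occurs on the accepting run of some word, so that the pruning argument applies verbatim. Should the published automaton contain transitions never used on any accepting run, I would first delete them; this does not change the language and, acting as the identity on accepting runs, preserves both unambiguity and the failure of co-B\"uchi-typeness, after which the argument above goes through.
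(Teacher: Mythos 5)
Your proposal is correct and follows essentially the same route as the paper: the paper likewise takes the concrete automaton from \cite{KMM06} (reproduced as $\A_2$ in Figure~\ref{fig:NBWnonWT}), cites/asserts its unambiguity, DCW-realizability, and failure of co-B\"uchi-typeness, and then adds exactly your pruning observation --- since the automaton is unambiguous and every transition lies on the unique accepting run of some word, any proper pruning strictly shrinks the language, so no substructure carries an equivalent NCW. The only caveat is your fallback claim that deleting transitions unused by accepting runs would ``preserve the failure of co-B\"uchi-typeness'' --- that is not automatic, since a smaller structure has fewer runs to control --- but it is moot here because all transitions of $\A_2$ are indeed used.
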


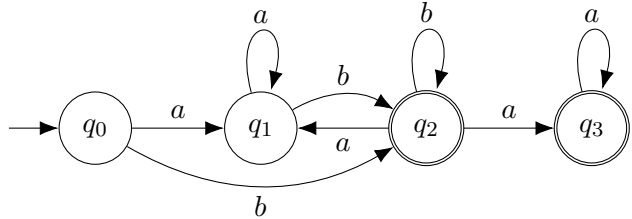
\begin{figure}
\centering
\begin{tikzpicture}[scale=2.2]
\node[state, initial] (q0) at (0, 0) {$q_0$};

\node[state] (q1) at (1, 0) {$q_1$};
\node[state, accepting] (q2) at (2, 0) {$q_2$};
\node[state, accepting] (q3) at (3, 0) {$q_3$};

\draw[trans] (q0) edge[bend right] node[letters, below]{$b$} (q2);
\draw[trans] (q0) edge[] node[letters, above]{$a$} (q1);
\draw[trans] (q2) edge[] node[letters]{$a$} (q3);

\draw[trans] (q2) edge[bend left=0] node[letters]{$a$} (q1);
\draw[trans] (q1) edge[bend left] node[letters]{$b$} (q2);

\draw[trans] (q2) edge[loop above] node[letters]{$b$} (q2);
\draw[trans] (q3) edge[loop above] node[letters]{$a$} (q3);
\draw[trans] (q1) edge[loop above] node[letters]{$a$} (q1);

\end{tikzpicture}
\caption{$\A_2$: An unambiguous NBW  that is DCW-realizable yet is not co-B\"uchi-type.}
\label{fig:NBWnonWT}
\end{figure}

\begin{proof}
Consider the NBW $\A_2$ depicted in Figure~\ref{fig:NBWnonWT}. We will show that $\A_2$ is unambiguous, and recognizes a DCW-realizable language, yet $\A_2$ is not co-B\"uchi-type. Moreover, we cannot prune transitions from $\A_2$ for obtaining an equivalent co-B\"uchi-type NPW.

Notice that $\lang(\A_2)= \{w\in\{a,b\}^\w\mid \text{$w$ contains a letter $b$}\}$, which is DCW-realizable.

Yet, there is no way to define a co-B\"uchi acceptance condition on top of $\A_2$ and obtain an equivalent NCW. Moreover, as $\A_2$ is unambiguous and all its transitions are used in an accepting run of some word, it cannot be pruned to an equivalent one.
\end{proof}

We conclude this section with the following rather simple proposition, showing that automata that are both unambiguous and GFG are essentially deterministic. Essentially, it follows from the fact that by restricting an unambiguous GFG automaton $\A$ to reachable and nonempty states, we obtain, by pruning, a deterministic automaton, which is clearly equivalent to $\A$. 

\newcommand{\proUnambigDBP}{
Unambiguous GFG automata are \DBP.
}

\begin{proposition}
\label{unambig dbp}
\proUnambigDBP
\end{proposition}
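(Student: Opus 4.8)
The plan is to exhibit a \emph{positional} (memoryless) pruning of $\A$ and show it is deterministic and language-equivalent. Fix a finite-state strategy $\strat$ witnessing GFGness, and write $q_u\eqdef\strat(u)$ for the state $\strat$ reaches after reading $u\in\Sigma^\ast$. The two ingredients I would combine are: (i)~unambiguity forces the languages of distinct successors of a reachable state to be \emph{disjoint}; and (ii)~GFGness forces $\strat$ to \emph{track residuals}, i.e.\ $\lang(\A^{q_u})$ equals the residual language $\{w : u\con w\in\lang(\A)\}$ for every $u$. Fact~(ii) is the residual property recorded in Lemma~\ref{lem:MemoryTolerance}; it can be reproved directly, one inclusion using that a run of $\A$ on $u$ reaching $q_u$ followed by an accepting run on $w$ is accepting (acceptance depends only on the infinite tail), the other using that $\strat$ accepts every word of $\lang(\A)$.

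For~(i) I would argue: if $q\in\A(u)$ and $q_1\neq q_2$ both lie in $\delta(q,a)$ with a common $w\in\lang(\A^{q_1})\cap\lang(\A^{q_2})$, then prefixing a run of $\A$ on $u$ that ends in $q$ by the two choices $q\to q_1$ and $q\to q_2$ yields two distinct accepting runs on $u\con a\con w$, contradicting unambiguity.

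The crux, and the step I expect to be the main obstacle, is to upgrade these two facts into a \emph{well-defined positional} transition: for a reachable state $q$ with $\lang(\A^q)\neq\emptyset$ and a letter $a$, there is a \emph{unique} successor in $\delta(q,a)$ with nonempty language, \emph{independent of the history $u$}. I would prove this by taking $u$ with $q_u=q$, letting $q'\eqdef q_{u\con a}\in\delta(q,a)$, and noting that $\lang(\A^{q'})$ equals the residual after $u\con a$, hence contains $\lang(\A^{q''})$ for every $q''\in\delta(q,a)$ (as each such $q''$ is reachable via $u\con a$). If some $q''\neq q'$ had nonempty language, then by~(i) it would be disjoint from $\lang(\A^{q'})$ while being contained in it, forcing $\lang(\A^{q''})=\emptyset$, a contradiction. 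Thus $q'$ is the unique nonempty successor; since this characterization mentions only $q$ and $a$, the choice does not depend on $u$, so setting $\delta'(q,a)$ to this unique successor (and to an arbitrary element of $\delta(q,a)$ elsewhere) is well defined, and $\delta'(q,a)\in\delta(q,a)$.

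Finally I would close the argument. Let $\A'$ be $\A$ pruned to $\delta'$ with the single initial state $q_\epsilon=\strat(\epsilon)$; it is deterministic by construction, and $\lang(\A')\subseteq\lang(\A)$ since every run of $\A'$ is a run of $\A$. For the reverse inclusion, take $w=a_1\con a_2\cdots\in\lang(\A)$: every prefix of $w$ has nonempty residual, so by the uniqueness step the run of $\A'$ on $w$ takes at each position the forced nonempty successor, which coincides with $\strat$'s state $q_{a_1\cdots a_i}$; hence the run of $\A'$ on $w$ equals the run of $\strat$ on $w$, which is accepting by GFGness. Therefore $\lang(\A')=\lang(\A)$ and $\A$ is \DBP.
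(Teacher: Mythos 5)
Your proof is correct and follows essentially the same route as the paper's: both use GFGness to show that the strategy's current state carries the full residual language, and unambiguity to conclude that every other reachable candidate successor must have empty language, so that pruning to the unique nonempty successors yields an equivalent deterministic automaton. The paper phrases this as restricting $\A$ to reachable nonempty states and showing $|\A'(u)|=1$ for every finite word $u$, which is the same argument as your positional uniqueness step.
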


\begin{proof}
Let $\A$ be an unambiguous GFG automaton, witnessed by a strategy $\strat$ that starts in a state $q_0$. Without loss of generality, we can assume that $\lang(\A)\neq\emptyset$.
Let $\A'$ be the restriction of $\A$ to reachable and nonempty states (namely to reachable states $q$, such that $\lang(\A^q)\neq\emptyset$).
It is clear that $\A'$ is obtained from $\A$ by pruning and that $\lang(\A')=\lang(\A)$.

We prove that $\A'$ is deterministic. Note first that there is a single nonempty initial state. Indeed, assume toward contradiction that there is an initial state $q'_0\neq q_0$, from which $\A$ has a run accepting some word $w$. Since $\A$ has an accepting run on $w$ starting from $q_0$, as witnessed by $\strat$, we get a contradiction to its unambiguity.

Next, we prove that $\A'$ is deterministic by showing that for every finite word $u$ over which $\A$ can reach a nonempty state, we have $|\A'(u)|=1$. Let $q$ be the state that $\A_\strat$ reaches when reading $u$ and assume toward contradiction the existence of a state $q'\neq q$, such that $q'\in\A'(u)$. As $q'$ is nonempty, $\A^{q'}$ accepts some word $w$. However, since $uw\in\lang(\A)$, we have by the GFGness of $\A$ that $\A^q$ also accepts $w$. Hence, $\A$ has two different accepting runs on $uw$, contradicting its unambiguity.
\end{proof}

\section{Co-B\"uchi Typeness for GFG-NSWs}
\label{sec:NSWtoNCW}

\stam{
Recall that DSWs are co-B\"uchi-type: every DSW that recognizes a DCW-realizable language has an equivalent DCW on the same structure. On the other hand, by Example~\ref{ex:unpw}, NSWs (and even unambiguous ones) are not co-B\"uchi-type. Thus, co-B\"uchi typeness is strongly related with determinism.
}

In this section we study typeness for GFG-NSWs and show that, as is the case with deterministic automata, tight GFG-NSWs are co-B\"uchi-type. 
On a more technical level, the proof of Theorem~\ref{thm:NswNcw} only requires the GFG automata to be weakly tight (rather than fully tight), implying that  Theorem~\ref{thm:NswNcw}  can be strengthened in accordance. This fact is considered in Section~\ref{sec:NRWtoNBW}, where the typeness of GFG-NRWs is shown to require full tightness.

\newcommand{\thmNSWNCW}{
Tight GFG-NSWs are co-B\"uchi-type: 
Every tight GFG-NSW that recognizes a GFG-NCW-realizable language has an equivalent GFG-NCW on the same structure.
}

\begin{theorem}
\label{thm:NswNcw}
\thmNSWNCW
\end{theorem}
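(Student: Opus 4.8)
The plan is to keep the structure of $\A$ and re-use the very same witness strategy $\strat$: I will pick a co-B\"uchi set $\alpha'\subseteq Q$ so that $\A'=\tuple{\Sigma,Q,Q_0,\delta,\alpha'}$, read as an NCW, satisfies $\lang(\A')=\lang(\A)$ and is GFG via $\strat$. The guiding fact is that Streett-accepting state sets are closed under union: if $S_1,S_2$ both satisfy $\alpha$, then so does $S_1\cup S_2$ (for a pair $\tuple{E,F}$, if $(S_1\cup S_2)\cap E\ne\emptyset$ then some $S_i\cap E\ne\emptyset$, whence $S_i\cap F\ne\emptyset$). Together with the hypothesis that $\lang(\A)$ is GFG-NCW-realizable, equivalently DCW-realizable, this will let me control which reachable cycles are accepting. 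Call a state set \emph{reachable accepting} if it equals ${\it states}(C)$ for a reachable Streett-accepting cycle $C$ of $\A$, and let $G\subseteq Q$ be the set of states lying on some reachable accepting set. I would then set $\alpha'=Q\setminus G$, so that a run is co-B\"uchi accepting iff every state it visits infinitely often lies on some reachable accepting cycle.

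Completeness and GFGness form the easy half and use only that $\strat$ is a witness. Given $w\in\lang(\A)$, the run of $\As$ on $w$ is accepting, so the set $T$ of memories it visits infinitely often satisfies $\alpha_\strat$; projecting, $\inf(\strat(w))=\tau(T)$ is a reachable Streett-accepting strongly connected set of $\A$ (every finite strongly connected set is covered by a closed walk), hence $\tau(T)\subseteq G$ and is disjoint from $\alpha'$. Thus the genuine run $\strat(w)$ of $\A$ visits $\alpha'$ only finitely often, giving $w\in\lang(\A')$ and showing that $\strat$ also witnesses the GFGness of $\A'$. This yields $\lang(\A)\subseteq\lang(\A')$.

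For soundness, $\lang(\A')\subseteq\lang(\A)$, I take a co-B\"uchi accepting run $r$ of $\A'$ on $w$ and set $S=\inf(r)$, a reachable strongly connected set with $S\cap\alpha'=\emptyset$, i.e.\ $S\subseteq G$. It suffices to prove $S$ is Streett-accepting, since then $r$ is an accepting run of $\A$ and $w\in\lang(\A)$. By $S\subseteq G$, each $q\in S$ lies on a reachable accepting cycle $V_q$; by union-closure $U=\bigcup_{q\in S}{\it states}(V_q)$ is Streett-accepting, and since every $V_q$ meets the strongly connected $S$, the set $U$ is itself a reachable strongly connected set with $S\subseteq U$. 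So everything reduces to excluding the case $S\subsetneq U$ with $S$ rejecting and $U$ accepting; ruling this out forces $S=U$, hence $S$ accepting.

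The crux, and the step I expect to be the main obstacle, is therefore a \emph{no-bad-pattern} statement for weakly tight GFG-NSWs: there is no reachable rejecting strongly connected set strictly inside a reachable accepting one. For a deterministic automaton this is exactly the structural characterization of DCW-realizability — a rejecting loop nested in an accepting loop pushes the language out of $F_\sigma$ — so my plan is to transport such a hypothetical nested pair from $\A$ to the deterministic strategy automaton $\As$, where $\lang(\As)=\lang(\A)$ is DCW-realizable and the characterization applies directly. The transport is where weak tightness enters: since every transition of $\A$ is used by $\strat$, each edge of the cycles generating $S$ and $U$ lifts to an edge of $\As$, and Lemma~\ref{lem:MemoryTolerance} lets me switch between memories of a common state without changing residual languages, so the pair $S\subsetneq U$ can be lifted to a nested pair of reachable strongly connected sets of $\As$ with the same acceptance status, contradicting DCW-realizability of $\lang(\As)$. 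Making this lifting precise — stitching the per-transition lifts into genuine strongly connected sets of $\As$ whose $\tau$-projections are exactly $S$ and $U$, and checking that Streett-acceptance is preserved under the memory-to-state projection — is the technical heart. It is also exactly where weak tightness (all transitions used), rather than full tightness, is what the argument consumes, matching the remark that Theorem~\ref{thm:NswNcw} already holds for weakly tight automata.
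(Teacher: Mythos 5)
Your reduction of soundness to the ``no-bad-pattern'' statement is exactly where the proof breaks, and that statement is false. Counterexample: over $\Sigma=\{a,b\}$ take $Q=\{q_0,q_1\}$ with $q_0$ initial, B\"uchi set $\{q_0\}$ (i.e.\ the Streett pair $\tuple{Q,\{q_0\}}$), $\delta(q_0,a)=\delta(q_1,a)=\{q_0,q_1\}$, and $\delta(q_0,b)=\delta(q_1,b)=\{q_0\}$. Its language is $\Sigma^\w$, trivially GFG-NCW-realizable, and it is weakly tight: the four-memory strategy that on a block of $a$'s cycles through $q_0\to q_0\to q_1\to q_1\to q_0\to\cdots$ and on every $b$ returns to (the first memory of) $q_0$ uses all six transitions and accepts every word. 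Yet $\{q_1\}$ is a reachable rejecting strongly connected set strictly contained in the reachable accepting strongly connected set $\{q_0,q_1\}$. The underlying issue is that for a \emph{nondeterministic} automaton a bad pattern does not contradict DCW-realizability, because the word read around the inner rejecting cycle may be accepted by a different run. Your soundness step tries to prove that every co-B\"uchi-accepting run of $\A'$ is already Streett-accepting in $\A$; this is strictly stronger than $\lang(\A')\subseteq\lang(\A)$ and is false here: with your $\alpha'=Q\setminus G=\emptyset$, the run $q_0,q_1,q_1,\ldots$ on $a^\w$ is co-B\"uchi accepting but has $\inf=\{q_1\}$, which is rejecting in $\A$ (the word is of course still in $\lang(\A)$, via another run). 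The same example shows why the transport to $\As$ cannot be ``made precise'': weak tightness gives, for each transition, \emph{some} pair of memories realizing it, but consecutive transitions of a cycle may be realized at incompatible memories of the shared state, and the connecting paths in $\As$ are forced to leave the projected state set --- here the only way back from the second memory of $q_1$ to the first passes through a memory of $q_0$, so the lift of the rejecting cycle $\{q_1\}$ becomes accepting. Lemma~\ref{lem:MemoryTolerance} preserves residual languages, not the acceptance status of cycles, so it cannot repair this.

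A secondary discrepancy: the paper defines the co-B\"uchi set via cycles of the deterministic strategy automaton $\As$ ($q\in\alpha'$ iff \emph{every} cycle of $\As$ through a memory of $q$ is rejecting), not via cycles of $\A$; your $G$ can be strictly larger than its counterpart, making your $\alpha'$ smaller and soundness harder still. More importantly, the paper's soundness argument is not run-by-run and not a structural characterization of $\A$ or $\As$: it removes states from the bad sets one at a time, and in the hard case takes a hypothetical $w=uv^\w\in\lang(\A')\setminus\lang(\A)$, extracts from its accepting $\A'$-run a cycle $C_r$ through $q$ that is rejecting in $\A$, combines it with an accepting $\As$-cycle $C_a$ through a memory of $q$ (their union is shown accepting), and pumps to produce a word $x(y^nz^n)^\w\in\lang(\A)$ that any $n$-state DCW for $\lang(\A)$ must reject. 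Some language-level argument of this kind, played against an equivalent DCW rather than against the structure of $\A$, appears unavoidable; your completeness and GFGness half is fine, but the soundness half needs to be rebuilt along these lines.
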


\stam{
\begin{proof}[Proof sketch]
Given a GFG-NSW $\A$ and a strategy $\strat$ that witnesses its GFGness, we change $\A$ into an NCW $\A'$ by defining the co-B\"uchi acceptance condition $\alpha'$ to include the states all of whose $\strat$-memories only belong to rejecting cycles in $\As$. 

We then prove that $\lang(\A)=\lang(\A')$ and that $\A'$ is indeed GFG. (Furthermore, we show that the original strategy $\strat$ also witnesses the GFGness of $\A'$.) The proof goes by induction, iterating over all states $q$ of $\A$, and gradually changing the acceptance condition until it becomes a co-B\"uchi condition. We show that at each step of the induction, the resulting automaton is still a GFG-NSW that recognizes the original language.

We start the induction with the original GFG-NSW $\A$, and add to its acceptance condition a new empty Streett pair, namely $(\emptyset, \emptyset)$. 
Along the induction, handling a state $q$, we remove $q$ from all the original ``bad'' sets (namely the left components) of the acceptance condition, and in the case that $q\in\alpha'$, we add it to the bad set of the new acceptance pair. Observe that at the end of the induction, the acceptance condition consists of a single Streett pair, in which $\alpha'$ is the bad set and $\emptyset$ is the good set. Thus, it is in fact the NCW~$\A'$. 

The tricky part of the induction step is when the considered state $q$ does not belong to $\alpha'$. In this case, removing it from the bad sets might enlarge the language of the automaton. To prove that the language is not altered, we take advantage of the fact that there exists a~deterministic co-B\"uchi automaton $\D$ equivalent to $\A$, and provide a pumping scheme that proceeds along the cycles of $\A$ and $\D$.

Analyzing these cycles, we use the (weak) tightness of $\A$, in order to use Lemma~\ref{lem:MemoryTolerance}---the pumped cycles might go through states of $\A$ that were not originally visited. Yet, due to Lemma~\ref{lem:MemoryTolerance}, we may link the residual languages of the originally visited states with that of the newly visited ones.
\end{proof}
}

\begin{proof}
Consider a GFG-NSW $\A=\tuple{\Sigma,Q,Q_0,\delta,\alpha}$, with $\alpha=\{\zug{E_1,F_1}$, $\ldots$, $\zug{E_k,F_k}\}$. For $1 \leq i \leq k$, we refer to the sets $E_i$ and $F_i$ as the {\em bad\/} and {\em good\/} sets of $\alpha$, respectively.  
Let $\strat=\zug{\Sigma,Q,M,m_0,\rho,\tau}$ be a strategy that witnesses $\A$'s GFGness and such that $\A$ is tight with respect to $\strat$.
Formally, the automaton $\A'$ is defined as $\A$ with the co-B\"uchi acceptance condition
\[\alpha'\eqdef \{q\mid\text{all the cycles in $\A_\strat$ that go through a $\strat$-memory of $q$ are rejecting}\}.\]
We prove that $\lang(\A)=\lang(\A')$ and that $\A'$ is a GFG-NCW. 


Let $Q=\{q_1,\ldots,q_n\}$. We define a sequence of NSWs $\A_0,\A_1,\ldots,\A_{n}$ and prove that:
$\lang(\A)=\lang(\A_0)= \lang(\A_1)= \cdots = \lang(\A_{n})$; $\strat$ witnesses the GFGness of $\A_l$  for all $0 \leq l \leq n$; and $\A_{n}$ is essentially the NCW $\A'$. For all $0 \leq l \leq n$, the NSW $\A_l$ has the same structure as $\A$. The acceptance condition of $\A_l$ is $\alpha_l \cup \{\zug{\alpha'_l,\emptyset}\}$, where $\alpha_l$ and $\alpha'_l$ are defined as follows. 

First, $\alpha_0=\alpha$ and $\alpha'_0=\emptyset$. Thus, going form $\A$ to $\A_0$ we only add to $\alpha$ a redundant pair $\zug{\emptyset,\emptyset}$. 
Clearly, $\lang(\A)=\lang(\A_0)$ and $\A_0$ is GFG witnessed by $\strat$. 

For $1 \leq l \leq n$, we obtain $\alpha_l$ and $\alpha'_l$ from $\alpha_{l-1}$ and $\alpha'_{l-1}$ in the following way. First, we remove $q_l$ from all the bad sets in $\alpha_{l-1}$. Then, if $q_l \in \alpha'$, we add it to $\alpha'_l$. 

We now prove that $\lang(\A_l)=\lang(\A_{l-1})$ and that $\A_{l}$ is GFG witnessed by $\strat$. 

We distinguish between two cases. If $q_l \in \alpha'$, the proof is not hard:  adding $q_l$ to $\alpha'_l$ forces it to be visited only finitely often regardless of visits in the good sets. Thus, $\lang(\A_{l}) \subseteq \lang(\A_{l-1})$. 
In addition,  $\lang(\A_{l-1}) \subseteq \lang(\A_l)$, and $\strat$ witnesses also the GFGness of $\A_l$. Indeed, an accepting run in $\lang(\A_{l-1})$ remains accepting in $\lang(\A_l)$. To see this, assume by way of contradiction that there is a run $r$ that satisfies  
$\alpha_{l-1} \cup \{\zug{\alpha'_{l-1},\emptyset}\}$ yet does not satisfy $\alpha_l \cup \{\zug{\alpha'_l,\emptyset}\}$. 
Since $\alpha_l$ is easier to satisfy than $\alpha_{l-1}$, it must be that $r$ violates the pair $\zug{\alpha'_l,\emptyset}$. Since $r$ satisfies the pair $\zug{\alpha'_{l-1},\emptyset}$, it must visit $q_l$ infinitely often. Since, however, $q_l \in \alpha'$, the latter indicates that $r$ eventually traverses only rejecting cycles in $\A_\strat$ and is thus rejecting also in $\A_l$. 

 If $q_l \not \in \alpha'$, we proceed as follows.  Consider a state $q$ that has a memory with an accepting cycle, and let $\A'$ be the NSW that is derived from $\A$ by taking $q$ out of the bad sets. The change can obviously only enlarge the automaton's language.  Assume toward contradiction that there is a word $w\in \lang(\A')\setminus \lang(\A)$. Since $\lang(\A')\setminus \lang(\A)$ is an $\omega$-regular language, we may assume that $w$ is a lasso word, namely of the form $w=uv^\omega$.

As the only difference between $\A$ and $\A'$ is the removal of $q$ from bad sets, it follows that an accepting run $r$ of $\A'$ on $w$ visits $q$ infinitely often. Hence, there are positions $i$ and $j$ of $w$, such that: I) $r$ visits $q$ in both $i$ and $j$, II) the inner position within $v$ is the same in positions $i$ and $j$, and III) the cycle $C_r$ that $r$ goes through between positions $i$ and $j$ is accepting. 

Let $x$ be the prefix of $w$ up to position $i$ and $y$ the infix of $w$ between positions $i$ and $j$. Notice that $xy^\omega=uv^\omega=w$. Consider the run $r'$ of $\A'$ on $w$ that follows $r$ up to position $j$, and from there on forever repeats the cycle $C_r$. By the above definition of $i$ and $j$, the run $r'$ is also accepting.

Notice that since $w\not\in \lang(\A)$, it follows that $C_r$ is rejecting for $\A$. As the only difference between $\A$ and $\A'$ is the removal of $q$ from the bad sets,  it follows that combining $C_r$ with any cycle $C_a$ that contains $q$ and is accepting for $\A$, yields a cycle that is accepting for $\A$. Recall that $q$ has such an accepting cycle $C_a$, having that $C_r \cup C_a$ is accepting. 

Since NCW=DCW, there is a DCW $\D$ equivalent to $\A$. Let $n$ be the number of states in $\D$. Let $z$ be a finite word over which $\As$ makes the cycle $C_a$, and consider the word $e=x(y^nz^n)^\omega$. We claim that $e\in \lang(\A) \setminus \lang(\D)$, leading to a contradiction. 

As for the positive part, $e\in \lang(\A)$ by the run of $\A$ that reaches $q$ and then follows the $C_r$ and $C_a$ cycles.

Next, we show that $e \not\in \lang(\D)$. For every $i\in\Nat$, let $e_i=x (y^n z^n)^i y^n$ be a subword of $e$, and let $m_i=\As(e_i)$. 
Notice that $m_i$ belongs to some state $q_i$ of $\A$ and not necessarily to $q$. 
By~\cite{KS15}, the fact there exists a finite word $u$ such that $q,q_i\in\A(u)$, implies that  $\lang(\A^q)=\lang(\A^{q_i})$. Thus, since $q\in\A(e_i)$, we have, by Lemma~\ref{lem:MemoryTolerance}, that $\lang(\A_\strat^{m_i})=\lang(\A^{q_i})=\lang(\A^q)$.


Since $y^\omega\not\in \lang(\A^q)$ and $\lang(\A_\strat^{m_i})=\lang(\A^q)$, it follows that $\As$ does not accept $x (y^n z^n)^i y^\omega$. Hence, the run of $\D$ on $e$ must visit a rejecting state on every period between $e_i$ and $e_{i+1}$, implying that it is rejecting.

Finally, in $\alpha_n$ all the bad sets are empty. Also, $\alpha'_n=\alpha'$. Thus, $\A_n$ is really an NCW with acceptance condition $\alpha'$, i.e.~$\A'$.
\end{proof}

The following example shows that the weak tightness requirement cannot be omitted, even when the GFG-NSW is actually a GFG-NBW.

\newcommand{\exNonTightNbw}{
The automaton $\A_3$ depicted in Figure~\ref{fig:weakly-tight-example-nbw} is GFG-NBW and recognizes a~GFG-NCW-realizable language, yet $\A_3$ has no equivalent NCW on the same structure.
}

\begin{example}
\label{ex:non-tight-nbw}
\exNonTightNbw
\end{example}

\stam{
\begin{figure}
\centering
\begin{tikzpicture}[scale=2.2]
\begin{scope}
\clip (-2.3, 0.8) rectangle (0.7, -1.4);

\node[state, initial below] (q0) at (-1, 0) {$q_0$};

\node[state, accepting] (q1) at (-2, 0) {$q_1$};

\node[state] (p0) at (0, 0) {$p_0$};
\node[state, accepting] (p1) at (0, -0.9) {$p_1$};
\node[state, accepting] (p2) at (-1, -0.9) {$p_2$};

\path (q0) edge[trans, loop above] node[letters]{$b$} (q0);
\path (q0) edge[trans, bend left] node[letters]{$a$} (q1);
\path (q1) edge[trans, bend left] node[letters]{$a$} (q0);
\path (q0) edge[trans, bend left=0] node[letters]{$b$} (p0);

\path (p0) edge[trans, loop above] node[letters]{$b$} (p0);
\path (p0) edge[trans, bend left=0] node[letters]{$a$} (p1);
\path (p1) edge[trans, bend left] node[letters]{$a$} (p2);
\path (p2) edge[trans, bend left] node[letters]{$a$} (p1);
\path (p2) edge[trans, loop left] node[letters]{$b$} (p2);
\end{scope}
\end{tikzpicture}
\caption{$\A_3$: A GFG-NBW that is GFG-NCW-realizable yet is not co-B\"uchi-type.}
\label{fig:weakly-tight-example-nbw}
\end{figure}
}

First, it is not hard to see~that $\lang(\A_3) = (aa)^\omega + (aa)^\ast b^{{+}} aa (b + aa)^\omega\subseteq\{a,b\}^\omega$.

Notice that if we remove the transition $(q_0,b,q_0)$ then $\A_3$ becomes a deterministic automaton for the same language. In particular, $\A_3$ is GFG. 
Clearly the language of $\A_3$ is GFG-NCW-realizable---once the transition $(q_0,b,q_0)$ is removed, we can make $p_0$ the only rejecting state, and obtain an equivalent DCW.

Now assume toward contradiction that there exists an NCW $\A_3'$ equivalent to $\A_3$ over the whole structure of $\A_3$. Let $\alpha'$ be its acceptance condition. Observe that $q_0\notin\alpha'$ as otherwise $\A_3'$ rejects the word $a^\omega$. 
In that case $\A_3'$ accepts the word $b^\omega$, leading to a contradiction.\qed

\section{B\"uchi Typeness for GFG-NRWs}
\label{sec:NRWtoNBW}

Studying typeness for deterministic automata, one can use the dualities between the B\"uchi and co-B\"uchi, as well as the Rabin and Streett conditions, in order to relate the B\"uchi-typeness of DRWs with the co-B\"uchi typeness of DSWs. In the nondeterministic setting, we cannot apply duality considerations, as by dualizing a nondeterministic automaton, we obtain a universal one. As we shall see in this section, our inability to use dualization considerations is not only technical. There is an inherent difference between the co-B\"uchi typeness of GFG-NSWs studied in Section~\ref{sec:NSWtoNCW}, and the B\"uchi typeness of GFG-NRWs, which we study here. 
We first show that while the proof of Theorem~\ref{thm:NswNcw} only requires weak tightness, B\"uchi typeness requires full tightness.
 
The following example shows that tightness is necessary already for GFG-NCW that are GFG-NBW-realizable. 

\begin{example}
\label{ex:tightnessForGFG-NCW}
The automaton $\A_4$ depicted in Figure~\ref{fig:weakly-tight-example-ncw} is a weakly tight GFG-NCW that recognizes a GFG-NBW-realizable language, yet $\A_4$ has no equivalent GFG-NBW on the same structure.
\end{example}

\begin{figure}
\begin{minipage}{6.0cm}\begin{tikzpicture}[scale=2.2]

\node[state, initial, rejecting] (q0) at (3.5, 0) {$q_0$};

\node[state] (q1) at ($(q0)+(1, 0.0)$) {$q_1$};

\node[state, rejecting] (p0) at ($(q0) + (0, 1.0)$) {$p_0$};
\node[state] (p1) at ($(q0) + (1, 1.0)$) {$p_1$};

\path (q0) edge[trans, bend left=0] node[letters, above] {$a$, $b$} (q1);
\path (q1) edge[trans, loop right] node[letters] {$a$} (q1);
\path (q1) edge[trans, bend left] node[letters, below] {$b$} (q0);
\path (q0) edge[trans, bend left=0] node[letters] {$b$} (p0);
\path (q1) edge[trans, bend right=0] node[letters, above right] {$b$} (p0);

\path (p0) edge[trans, loop left] node[letters] {$b$} (p0);
\path (p0) edge[trans, bend left] node[letters, above] {$a$} (p1);
\path (p1) edge[trans, loop right] node[letters] {$a$, $b$} (p1);
\end{tikzpicture}
\caption{$\A_4$: A weakly tight 
GFG-NCW that is GFG-NBW-realizable yet is not B\"uchi-type.}
\label{fig:weakly-tight-example-ncw}
\end{minipage} \ \hspace{.3in}
\begin{minipage}{6.5cm}\begin{tikzpicture}[scale=2.2]
\begin{scope}
\clip (-2.3, 0.8) rectangle (0.7, -1.4);

\node[state, initial below] (q0) at (-1, 0) {$q_0$};

\node[state, accepting] (q1) at (-2, 0) {$q_1$};

\node[state] (p0) at (0, 0) {$p_0$};
\node[state, accepting] (p1) at (0, -0.9) {$p_1$};
\node[state, accepting] (p2) at (-1, -0.9) {$p_2$};

\path (q0) edge[trans, loop above] node[letters]{$b$} (q0);
\path (q0) edge[trans, bend left] node[letters]{$a$} (q1);
\path (q1) edge[trans, bend left] node[letters]{$a$} (q0);
\path (q0) edge[trans, bend left=0] node[letters]{$b$} (p0);

\path (p0) edge[trans, loop above] node[letters]{$b$} (p0);
\path (p0) edge[trans, bend left=0] node[letters]{$a$} (p1);
\path (p1) edge[trans, bend left] node[letters]{$a$} (p2);
\path (p2) edge[trans, bend left] node[letters]{$a$} (p1);
\path (p2) edge[trans, loop left] node[letters]{$b$} (p2);
\end{scope}
\end{tikzpicture}
\vspace{-0.7cm}
\caption{$\A_3$: A GFG-NBW that is GFG-NCW-realizable yet is not co-B\"uchi-type.}%
\label{fig:weakly-tight-example-nbw}%
\end{minipage}
\end{figure}

First notice that the language of $\A_4$ is $L_4= a^\w + a^\ast b^{{+}} a (a + b)^\w\subseteq\{a,b\}^\omega$. Moreover, if we remove the transitions $(q_0,b,q_1)$ and $(q_1,b,q_0)$, then $\A_4$ becomes a deterministic automaton for the same language. In particular, $\A_4$ is GFG. Clearly, $L_4$ is both DBW- and DCW-realizable.

Now assume toward contradiction that there exists an NBW $\A'_4$ equivalent to $\A_4$ over the (whole) structure of $\A_4$. Let $\alpha$ be its acceptance condition. Observe that the state $q_1$ must belong to $\alpha$, as otherwise $\A'_4$ rejects the word $a^\w$. But in that case, $\A'_4$ accepts the word $b^\w$, leading to a contradiction.\qed

We now proceed to our main positive result, obtaining the typeness of GFG-NRWs.

\newcommand{\thmGFGNRWNBW}{
Tight GFG-NRWs are B\"uchi-type: 
Every tight GFG-NRW that recognizes a GFG-NBW-realizable language has an equivalent GFG-NBW on the same structure.
}

\newcommand{\lemDBWequiv}{
Consider a GFG-NRW $\A$ that is GFG-NBW-realizable and a strategy $\strat$ that witnesses its GFGness.
\begin{enumerate}
\item A $\strat$-memory $m$ of a state $q$ of $\A$ cannot belong to both a $q$-exclusive accepting cycle and a rejecting cycle.
\item Consider $\strat$-memories $m$ and $m'$ of a state $q$ of $\A$, such that $m$ belongs to a $q$-exclusive accepting cycle and $m'$ belongs to a rejecting cycle. Let $P_{m\to m'}$ and $P_{m'\to m}$ be the sets of paths from $m$ to $m'$ and from $m'$ to $m$, respectively. Then $P_{m\to m'}$ or $P_{m'\to m}$ satisfies the following property: It is empty or every combination of its paths is accepting. Formally, for $P=P_{m\to m'}$ or $P=P_{m'\to m}$, we have that ${\it states}(P)=\emptyset$ or ${\it states}(S)$ is accepting for all $S \subseteq P$. 
\end{enumerate}
}

\newcommand{\lemManipulationConclusions}{
Consider a strongly tight GFG-NRW $\A$ that is GFG-NBW-realizable. Then, every state $q$ of $\A$ that appears in some good set has a single $\strat$-memory, and all the $q$-cycles in $\As$ are accepting, and at least one of them is $q$-exclusive.
}

\newcommand{\lemNoRejectingCycles}{
Consider a strongly tight GFG-NRW $\A$ that is GFG-NBW-realizable.  
Then, every state $q$ of $\A$ that appears in some good set does not belong to a rejecting cycle. 
}

\begin{theorem}
\label{thm:GFG-NRW-NBW}
\thmGFGNRWNBW
\end{theorem}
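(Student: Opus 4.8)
The plan is to keep the structure of $\A$ intact and simply replace its Rabin condition by the B\"uchi condition $\alpha_B=\bigcup_i F_i$, the union of all good sets, and then argue that the \emph{same} strategy $\strat$ witnesses the GFGness of the resulting GFG-NBW $\A_B$. First I would invoke Lemma~\ref{lem:StronglyTight} to assume without loss of generality that the given tight GFG-NRW $\A$ is in fact \emph{strongly tight} with respect to a finite-state strategy $\strat$; this is precisely the point at which the argument must be stronger than the Streett case of Theorem~\ref{thm:NswNcw}, where weak tightness sufficed. The entire proof then rests on a structural lemma that I would establish separately: in a strongly tight GFG-NRW recognizing a GFG-NBW-realizable language, every state $q$ occurring in a good set carries a single $\strat$-memory, all of its cycles are accepting (at least one being $q$-exclusive), and, crucially, $q$ lies on no rejecting cycle of $\A$.

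Granting this structural lemma, both inclusions of $\lang(\A_B)=\lang(\A)$ and the GFGness of $\A_B$ fall out cleanly. For $\lang(\A)\subseteq\lang(\A_B)$, given $w\in\lang(\A)$ I would take the run $\strat(w)$, which is Rabin-accepting because $\strat$ witnesses $\A$'s GFGness; its limit set $\inf(\strat(w))$ meets some $F_i$, so $\strat(w)$ visits $\alpha_B$ infinitely often and is B\"uchi-accepting. Conversely, for $\lang(\A_B)\subseteq\lang(\A)$, any B\"uchi-accepting run $\pi$ visits some good-set state $q$ infinitely often, and $\inf(\pi)$ is a strongly connected set through $q$, hence the state set of a cyclic path through $q$; by the structural lemma this cycle is not rejecting, so $\inf(\pi)$ satisfies the Rabin condition and $\pi$ is already Rabin-accepting. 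Since $\lang(\A_B)=\lang(\A)$, GFGness of $\A_B$ via $\strat$ is then immediate: for every $w\in\lang(\A_B)=\lang(\A)$ the run $\strat(w)$ is Rabin-accepting, hence B\"uchi-accepting by the first inclusion.

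The hard part, which I expect to be the main obstacle, is the structural lemma itself. The natural route is to reason inside the deterministic automaton $\As$: strong tightness forces every good-set state $q$ to possess a $q$-exclusive accepting cycle, and I would combine this with Lemma~\ref{lem:MemoryTolerance} (memory tolerance) and a combination-of-cycles analysis in the spirit of Lemma~\ref{lem:Manipulations} to forbid the coexistence of a $q$-exclusive accepting cycle and a rejecting cycle through the same memory. The essential ingredient is GFG-NBW-realizability: were $q$ to sit on both an accepting and a rejecting cycle, one could pump the two cycles in alternation to produce a word on which any GFG-NBW for $\lang(\A)$ is driven into a contradiction, mirroring the pumping against an equivalent DCW used in the proof of Theorem~\ref{thm:NswNcw}. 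Unlike the Streett case, this result cannot be obtained by dualizing, since complementing a nondeterministic automaton yields a universal one; that asymmetry is exactly why full, indeed strong, tightness is needed here rather than the weak tightness that was enough for co-B\"uchi typeness.
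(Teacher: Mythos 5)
Your overall plan is exactly the paper's: strongly tighten via Lemma~\ref{lem:StronglyTight}, declare the accepting states of the B\"uchi automaton to be the union of the good sets, keep the same strategy $\strat$, and reduce everything to the structural lemma that a good-set state $q$ has a unique $\strat$-memory, lies on a $q$-exclusive accepting cycle, and lies on no rejecting cycle. Your derivation of both language inclusions and of GFGness from that lemma is correct and matches the paper (these are Lemmas~\ref{lem:ManipulationConclusions} and~\ref{lem:NoRejectingCycles} there, with the pumping against an equivalent DBW carried out in Lemma~\ref{lem:DBWequiv}).

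There is, however, one genuine gap in your sketch of the structural lemma, and it sits exactly where you predicted the main obstacle would be. The pumping argument you describe --- alternating a $q$-exclusive accepting cycle with a rejecting cycle \emph{through the same memory} and feeding the result to an equivalent deterministic B\"uchi automaton --- only rules out rejecting cycles of $\As$ through $q$'s memory. But the inclusion $\lang(\A_B)\subseteq\lang(\A)$ needs the stronger statement that $q$ lies on no rejecting cycle of $\A$ itself, since a B\"uchi-accepting run of $\A_B$ does not follow $\strat$ and its cycles through $q$ are arbitrary cycles of the nondeterministic automaton. Tightness guarantees that every transition of such a cycle is used by \emph{some} memory, but consecutive transitions may be used by \emph{different} memories of the shared intermediate state, so the cycle does not automatically lift to a cycle of $\As$, and after inserting connecting paths between those memories it is no longer clear that the lifted object is still rejecting. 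The paper closes this with a separate argument (Lemma~\ref{lem:NoRejectingCycles}): one takes the maximal prefix of the putative rejecting cycle that can be tracked by a path of $\As$ without visiting a ``fresh unrejected'' good state; at the breaking point one obtains two distinct memories $m\neq m'$ of the same state such that neither the combinations of paths from $m$ to $m'$ nor those from $m'$ to $m$ can all be rejecting (else the path could be extended) nor all be accepting (else one memory would be replaceable by the other, contradicting tightness). Without some argument of this kind, the ``$q$ lies on no rejecting cycle of $\A$'' clause of your structural lemma is asserted but not proved, and it is precisely the clause your final inclusion relies on.
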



Consider a tight GFG-NRW $\A$ that recognizes a GFG-NBW-realizable language. Let $\strat$ be a strategy that witnesses $\A$'s GFGness and with respect to which $\A$ is tight. By Lemma~\ref{lem:StronglyTight}, we have a GFG Rabin automaton $\A'$ over the structure of $\A$ that is strongly tight with respect to $\strat$.
We define an NBW $\B$ on top of $\A$'s structure, setting its accepting states to be all the states that appear in ``good'' sets of $\A'$ (namely in the right components of the Rabin accepting pairs). 

Clearly, $\lang(\A') \subseteq \lang(\B)$, as $\B$'s condition only requires the ``good'' part of $\A'$'s condition, without requiring to visit finitely often in a corresponding ``bad'' set. We should thus prove that $\lang(\B) \subseteq \lang(\A')$ and that $\B$ is~GFG. Once proving the language equivalence, $\B$'s GFGness is straight forward, as the strategy $g$ witnesses it. The language equivalence, however, is not at all straightforward.
  
In order to prove that  $\lang(\B) \subseteq \lang(\A')$, we analyze the cycles of $\A'$ and of $\A'_\strat$, as expressed by the following lemmas.

\begin{lemma}
\label{lem:DBWequiv}
\lemDBWequiv
\end{lemma}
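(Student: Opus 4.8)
The plan is to argue by contradiction and reduce everything to the classical ``flower'' obstruction for B\"uchi-recognizability. Since $\A$ is GFG-NBW-realizable and GFG-NBW are exactly as expressive as deterministic B\"uchi automata, $\lang(\A)=\lang(\As)$ is DBW-recognizable. The automaton $\As$ is a \emph{deterministic} Rabin automaton for this language, so it cannot contain a reachable ``positive flower'': there is no reachable memory together with an accepting cycle whose set of visited states is contained in the set of visited states of a rejecting cycle through the same memory (such a nested pair would continuously reduce a $\Sigma_2$-complete language to $\lang(\A)$, contradicting DBW-recognizability). I will use this forbidden pattern as the sole source of contradiction in both parts; note that no direct disagreement between two automata for $\lang(\A)$ is possible, so the contradiction must be produced internally to the Rabin condition of $\As$. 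Throughout, write $M_q$ for the set of $\strat$-memories of $q$.

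For part (1), assume $m$ lies on a $q$-exclusive accepting cycle $C_a$ and on a rejecting cycle $C_r$. Let $Z={\it states}(C_a)\cup{\it states}(C_r)$ be the set of states of the combined cycle obtained by first traversing $C_a$ and then $C_r$ from $m$. If $Z$ were rejecting, then $C_a\subseteq Z$ would be an accepting cycle nested in a rejecting one through $m$, i.e.\ a positive flower; hence $Z$ must be accepting, witnessed by a Rabin pair $\zug{E',F'}$ with $Z\cap E'=\emptyset$ and $Z\cap F'\neq\emptyset$. Since ${\it states}(C_r)\subseteq Z$ we get ${\it states}(C_r)\cap E'=\emptyset$, and as $C_r$ is rejecting this forces ${\it states}(C_r)\cap F'=\emptyset$; consequently ${\it states}(C_a)\cap F'\neq\emptyset$ and ${\it states}(C_a)\cap E'=\emptyset$, so $C_a$ is already accepting via $\zug{E',F'}$. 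Now $q$-exclusivity of $C_a$, applied to the pair $\zug{E',F'}$, gives ${\it states}(C_a)\cap F'\subseteq M_q$: every good state used by this pair on $C_a$ is a $\strat$-memory of $q$. The goal is then to re-base the cycles at a good memory $m^\ast\in{\it states}(C_a)\cap F'\subseteq M_q$ and to route the rejecting behaviour through $m^\ast$: since $m^\ast$ and $m$ are both memories of $q$, Lemma~\ref{lem:MemoryTolerance} yields $\lang(\A_\strat^{m^\ast})=\lang(\A^q)=\lang(\Asm)$, which lets me transport the rejecting cycle $C_r$ (and the returning path along $C_a$) into a cycle through $m^\ast$ that still avoids $E'$ while visiting the good state $m^\ast\in F'$. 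That cycle is accepting via $\zug{E',F'}$, yet contains ${\it states}(C_r)$, so removing $q$'s contribution makes $C_a$'s part rejecting; assembling it as the outer loop around the inner accepting $C_a$ produces the forbidden positive flower, and hence the contradiction.

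For part (2), the memories $m$ (on a $q$-exclusive accepting cycle) and $m'$ (on a rejecting cycle) are distinct memories of the same state $q$, so again $\lang(\Asm)=\lang(\A_\strat^{m'})=\lang(\A^q)$ by Lemma~\ref{lem:MemoryTolerance}. Assume, toward a contradiction, that \emph{neither} $P_{m\to m'}$ nor $P_{m'\to m}$ has the stated property: each is nonempty and admits a subset $S$ (resp.\ $S'$) with ${\it states}(S)$ (resp.\ ${\it states}(S')$) rejecting. Concatenating these two rejecting bundles gives a rejecting cycle through $m$ (and through $m'$) whose state set contains both chosen combinations; combining it further with the rejecting cycle $C_r$ at $m'$ I obtain a rejecting cycle $C_{\mathrm{out}}$ through $m$ with ${\it states}(C_a)\subseteq{\it states}(C_{\mathrm{out}})$, since $m$ can be reached from, and returns to, the $q$-exclusive accepting loop. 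As in part (1), $q$-exclusivity guarantees that the extra states contributed by the path bundles and by $C_r$ do not inadvertently re-satisfy an accepting pair (they avoid the good sets that $C_a$ relies on, which are confined to $M_q$), so $C_{\mathrm{out}}$ is genuinely rejecting and $C_a$ is genuinely accepting. This is precisely a reachable positive flower, contradicting DBW-recognizability. The main obstacle in both parts is exactly this last verification: showing that the outer cycle one builds from the rejecting cycles and path bundles remains rejecting while still swallowing the whole accepting cycle $C_a$. This is where $q$-exclusivity is indispensable, as it pins all the good-set visits of the accepting cycle to memories of $q$ and thereby prevents the appended rejecting material from accidentally creating an accepting run.
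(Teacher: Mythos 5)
Your overall strategy is the paper's argument in disguise: the ``no reachable accepting cycle nested inside a rejecting cycle'' obstruction that you cite as a black box is exactly what the paper establishes by hand, taking a DBW $\D$ for $\lang(\A)$ and pumping $p(u^nv^n)^\omega$ so that $\D$ accepts it while the run of $\As$ rejects. The first half of your part (1) is correct and clean: $Z$ must be accepting via some pair $\zug{E',F'}$, whence ${\it states}(C_r)\cap F'=\emptyset$, ${\it states}(C_a)\cap E'=\emptyset$, and $q$-exclusivity forces $\emptyset\neq{\it states}(C_a)\cap F'\subseteq M_q$. But at that point you already hold a contradiction and do not see it: the acceptance sets of $\As$ are lifted memory-classwise from state-level sets, $F'=\{m\mid\tau(m)\in F\}$, so if one memory of $q$ lies in $F'$ then \emph{all} memories of $q$ do --- in particular $m$ itself, and $m\in{\it states}(C_r)$, contradicting ${\it states}(C_r)\cap F'=\emptyset$. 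Instead you appeal to a ``transport'' of the rejecting cycle $C_r$ to the memory $m^\ast$ via Lemma~\ref{lem:MemoryTolerance}. That step is a genuine gap: the lemma gives equality of the residual languages of $\Asm$ and $\A_\strat^{m^\ast}$, which says nothing about realizing a given cycle of the deterministic automaton $\As$ through a different memory; there is no reason such a rerouted cycle exists.

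In part (2), interleaving the two rejecting path bundles does yield a rejecting cycle through $m$ (the union of Rabin-rejecting sets is rejecting), and at that point you should simply invoke part (1) --- which is all the paper does. Instead you build an outer cycle $C_{\mathrm{out}}\supseteq{\it states}(C_a)$ and justify its rejectivity by claiming the appended states ``avoid the good sets that $C_a$ relies on, which are confined to $M_q$.'' That claim is false: $m$ and $m'$ are memories of $q$ and therefore lie precisely in those good sets. (The conclusion that $C_{\mathrm{out}}$ is rejecting can be salvaged, but only through the bad sets: since $m'\in F'\cap{\it states}(C_r)$ and $C_r$ is rejecting, $C_r$ must meet $E'$ --- the same state-level observation you are missing in part (1).) So both parts hinge on one fact you never use: that $\alpha_\strat$ is defined by full memory classes, so all memories of $q$ stand or fall together with respect to every good and bad set. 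Once that is in place, part (1) closes in one line and part (2) reduces to part (1), with no need for the explicit flower construction.
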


\begin{proof}
We start with the first claim. First, by~\cite{KSV06}, there is a DBW $\D$ equivalent to $\A$. 
Assume, by way of contradiction, that there are finite words $p$, $u$ and $v$, such that $\As(p)=m$, $\Asm(u)=m$ along a $q$-exclusive accepting cycle, and $\Asm(v)=m$ along a rejecting cycle. 

Let $n$ be the number of states in $\D$, and consider the word $w=p (u^n v^n)^\omega$. 
For every $i\geq 1$, the NBW $\D$ accepts $p (u^n v^n)^i u^\omega$. Hence, it is not hard to prove that $\D$ also accepts~$w$.  

On the other hand, we claim that $\A$ does not accept $w$. Indeed, since $v$ is a rejecting cycle that includes $q$, it must visit states in a bad set $B_i$ for every $i$ such that $q$ belongs to a good set $G_i$. As the cycle $u$ is $q$-exclusive accepting, we get that the cycle $u^nv^n$ is rejecting.

For the second claim, assume, by way of contradiction, that there are paths $\pi_1, \ldots, \pi_n \in P_{m\to m'}$ and paths $\pi'_1, \ldots, \pi'_{n'} \in P_{m'\to m}$, such that both sets of states: $\bigcup_{i=1}^n {\it states}(\pi_i)$ and $\bigcup_{i=1}^{n'} {\it states}(\pi'_i)$ are rejecting. Consider the path $\pi = \pi_1 \pi'_1 \pi_2 \pi'_2 \ldots \pi_n \pi'_{n'}$, where w.l.o.g.\ $\pi_n$ is repeated until reaching the larger index $n'$. Then, since the union of Rabin rejecting cycles is rejecting, $\pi$ is a rejecting cycle of $m$, contradicting the previous observation.
\end{proof}

\begin{lemma}
\label{lem:ManipulationConclusions}
\lemManipulationConclusions
\end{lemma}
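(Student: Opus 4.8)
The plan is to fix a strategy $\strat$ witnessing that $\A$ is strongly tight (hence witnessing both its GFGness and its tightness) and to establish the three assertions about a state $q$ occurring in some good set, using Lemma~\ref{lem:DBWequiv}(1) as the main lever. As a preliminary step I would record that, by strong tightness, $\As$ has a $q$-exclusive accepting cycle $C$; since deleting the memories of $q$ from $\inf(C)$ turns it rejecting, $C$ must traverse some $\strat$-memory of $q$, and I fix one such memory $m$.

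First I would prove the single-memory claim. Suppose for contradiction that $q$ has a second $\strat$-memory $m'\neq m$. As $\A$ is tight with respect to $\strat$ and $\tau(m)=\tau(m')=q$, neither $m$ is replaceable by $m'$ nor $m'$ by $m$. Unwinding the definition of replaceability, this means that $P_{m'\to m}$ and $P_{m\to m'}$ are both nonempty and each carries a rejecting combination of paths: there are nonempty $S_1\subseteq P_{m'\to m}$ and $S_2\subseteq P_{m\to m'}$ with ${\it states}(S_1)$ and ${\it states}(S_2)$ both rejecting. Interleaving the paths of $S_2$ and $S_1$ exactly as in the proof of Lemma~\ref{lem:DBWequiv} (repeating paths so that each one occurs) produces a single cycle through $m$ whose visited states are precisely ${\it states}(S_1)\cup{\it states}(S_2)$; since a union of Rabin-rejecting state sets is again rejecting, this cycle is rejecting. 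Thus $m$ would belong to both a $q$-exclusive accepting cycle and a rejecting cycle, contradicting Lemma~\ref{lem:DBWequiv}(1). Hence $m$ is the unique $\strat$-memory of $q$.

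The other two assertions then follow quickly. Because $q$ has the single memory $m$, every $q$-cycle of $\As$ is a cycle through $m$; if one of them were rejecting, $m$ would once more belong to both a $q$-exclusive accepting cycle and a rejecting cycle, again contradicting Lemma~\ref{lem:DBWequiv}(1). So all $q$-cycles are accepting. Finally, the cycle $C$ fixed at the start is a $q$-cycle through $m$ and is $q$-exclusive by construction, which yields the last assertion.

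I expect the crux to be the single-memory step, and specifically the passage from the two non-replaceability conditions to an actual rejecting cycle through $m$: one has to combine the two rejecting path-combinations $S_1$ and $S_2$ into one cycle and invoke closure of Rabin-rejection under unions, mirroring the interleaving already used in Lemma~\ref{lem:DBWequiv}. Everything else is a direct appeal to part~(1) of that lemma together with the defining property of strong tightness; note that GFG-NBW-realizability enters only through Lemma~\ref{lem:DBWequiv}, which I treat as given.
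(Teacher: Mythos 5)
Your proposal is correct and follows essentially the same route as the paper: use strong tightness to obtain a memory $m$ of $q$ on a $q$-exclusive accepting cycle, derive from tightness (mutual non-replaceability) a rejecting cycle through $m$ if a second memory existed, and conclude all three assertions via Lemma~\ref{lem:DBWequiv}(1). You in fact spell out two details the paper leaves implicit — that a $q$-exclusive accepting cycle must visit a memory of $q$, and how the two rejecting path-combinations interleave into one rejecting cycle using closure of Rabin rejection under unions — which is a welcome tightening rather than a deviation.
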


\begin{proof}
Since $\A$ is strongly tight and $q$ appears in a good set, the ``strong tightening'' of $\A$, as per the proofs of Lemmas~\ref{lem:Manipulations} and \ref{lem:StronglyTight}, guarantees that $q$ has a $\strat$-memory $m$ that belongs to a $q$-exclusive accepting cycle. Assume, by way of contradiction, that $q$ has another memory $m' \neq m$. Then, due to the removal of redundant memories in Lemma~\ref{lem:Manipulations}, there is a rejecting combination of paths from $m$ to $m'$, as well as from $m'$ to $m$. Hence, $m$ belongs to a rejecting cycle, in contradiction to Lemma~\ref{lem:DBWequiv}.

In addition, since there is a single memory $m$ in $q$, and $q$ belongs to a good set, we have, by Lemma~\ref{lem:Manipulations}, that $m$ belongs to a $q$-exclusive accepting cycle. Hence, by Lemma~\ref{lem:DBWequiv}, the memory $m$ cannot also belong to a rejecting cycle.
\end{proof}


\begin{lemma}
\label{lem:NoRejectingCycles}
\lemNoRejectingCycles
\end{lemma}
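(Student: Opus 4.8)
The plan is to argue by contradiction and to convert a rejecting cycle of $\A$ through $q$ into a rejecting cycle of $\As$ through the \emph{unique} memory of $q$, thereby contradicting what we already know about good-set states. Fix a state $q$ that appears in some good set. By Lemma~\ref{lem:ManipulationConclusions}, $q$ has a single $\strat$-memory $m$, every cycle of $\As$ through $m$ is accepting, and at least one of them is $q$-exclusive; equivalently, by Lemma~\ref{lem:DBWequiv}(1), no cycle of $\As$ through $m$ is rejecting. Suppose towards a contradiction that $q$ lies on a rejecting cycle $C$ of $\A$, and let $t$ label one traversal of $C$ from $q$ back to $q$. Since $\A$ is GFG-NBW-realizable, fix, by~\cite{KSV06}, a DBW $\D$ with $\lang(\D)=\lang(\A)$, and recall that, as $\strat$ witnesses GFGness, $\lang(\A)=\lang(\As)=\lang(\D)$; in particular, membership in $\lang(\A)$ is decided by the \emph{deterministic} automaton $\As$.

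Two structural facts drive the argument. First, because $\A$ is tight, every transition of $C$ is used by $\strat$, and since $q$ has the single memory $m$, the transition of $C$ leaving $q$ is used \emph{from} $m$ and the transition of $C$ entering $q$ is used \emph{into} $m$; thus $\As$ has an edge out of $m$ and an edge into $m$ that agree, respectively, with the first and last edge of $C$. Second, Rabin-rejecting sets are closed under union: if ${\it states}(S_1)$ and ${\it states}(S_2)$ are rejecting then so is ${\it states}(S_1)\cup{\it states}(S_2)$ — the same algebra already exploited in the proof of Lemma~\ref{lem:DBWequiv} — which lets me assemble rejecting cycles out of rejecting pieces. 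Combining these with the $q$-exclusive accepting cycle, a short computation with the Rabin pairs shows that the $\A$-cycle obtained by concatenating $C$ with the projection of the $q$-exclusive accepting cycle is \emph{still} rejecting: any pair certifying the concatenation would, because $C$ is rejecting and contains $q$, have $q$ outside its good set, and would then already certify the $q$-exclusive cycle \emph{after deleting} $q$, contradicting $q$-exclusiveness.

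The remaining, and central, step is to transport this rejecting behaviour from $\A$ into $\As$. Using the residual rigidity of tight GFG automata — every $\strat$-memory of a state has the residual of that state (Lemma~\ref{lem:MemoryTolerance}), and, by~\cite{KS15}, co-reachable states have equal residuals — I would exploit that the residual along $C$ is everywhere $\lang(\A^q)$ and that $u^\omega\in\lang(\A^q)$, where $u$ labels the $q$-exclusive accepting cycle. Concretely, with $n$ the number of states of $\D$ and $x$ a word with $\As(x)=m$, I would analyse $w=x\,(t^{\,n}u^{\,n})^\omega$: each approximant $x\,(t^{\,n}u^{\,n})^{i}u^\omega$ lies in $\lang(\A)$, witnessed by the explicit run that loops $C$ and then settles into the $q$-exclusive accepting cycle, so by the pumping argument of Lemma~\ref{lem:DBWequiv} the DBW $\D$, and hence $\As$, accepts $w$; the goal is then to show that the deterministic run of $\As$ on $w$ is forced through $m$ infinitely often while traversing a full Rabin-bad witness in between, i.e.\ that $\inf$ of this run is a rejecting set containing $m$ — contradicting that every cycle of $\As$ through $m$ is accepting.

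I expect this last transport step to be the main obstacle, and it is precisely where the proof must use more than duality. A rejecting cycle of $\A$ is \emph{not} a cycle of $\As$: the strategy, being deterministic, may simply refuse to traverse $C$, so one cannot ``play'' $C$ against $\strat$ in the naive way that Lemma~\ref{lem:DBWequiv} plays a rejecting \emph{$\As$}-cycle. The difficulty is to certify that the strategy cannot escape the rejecting behaviour: one must show that whenever the run of $\As$ on $w$ returns to the unique memory $m$, the path it takes back to $m$ necessarily absorbs a bad-set state for every Rabin pair that $C$ violates, so that the returns to $m$ cannot all be repaired into an accepting cycle. Here the single-memory property of $q$ and the equal-residual constraints of tightness are indispensable, and I expect the bookkeeping of which Rabin pairs are killed along these forced returns to $m$, together with the union-closure of rejecting sets, to be the technical heart of the argument.
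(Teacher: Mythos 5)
Your setup is sound and your self-diagnosis is accurate: the ``transport step'' you flag as the main obstacle is exactly where the proof is missing, and the route you sketch does not close it. The word $w=x(t^nu^n)^\omega$ you construct lies in $\lang(\D)=\lang(\A)=\lang(\As)$ by the pumping argument, so the deterministic run of $\As$ on $w$ \emph{is} accepting; to get a contradiction you would have to show that this run is simultaneously forced to be rejecting, i.e.\ forced through $m$ infinitely often while absorbing a bad-set state for every Rabin pair violated by $C$. Nothing in your argument supplies this forcing. After reading $x\,t$, the strategy run sits at a memory of \emph{some} state co-reachable with $q$ (hence with the same residual, by Lemma~\ref{lem:MemoryTolerance} and the residual property from~\cite{KS15}), but that state need not be $q$, the run need not ever return to $m$, and the strategy is free to route the $t$-blocks through states that avoid the bad sets of $C$ entirely. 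Equal residuals constrain languages, not visited states, so the ``bookkeeping of which Rabin pairs are killed along forced returns to $m$'' has no starting point. In short: you have correctly proved that one particular run of $\A$ on $w$ is rejecting (your union-of-cycles computation with $q$-exclusiveness is fine), but $\A$ is nondeterministic and does accept $w$, so no contradiction follows.

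The paper's proof avoids the pumping/DBW route altogether and is purely combinatorial on $\As$. Writing the rejecting cycle as $\pi=q_0,\dots,q_{n+1}$ with $q_0=q_{n+1}=q$ and letting $S$ be the set of indices of bad sets visited by $\pi$, it considers the maximal $h$ such that some path $\rho$ of $\As$ starting at $q$ reaches a memory of $q_h$ while visiting $q_1,\dots,q_h$ and without visiting any ``fresh unrejected good state'' (a state in $G_i\setminus B_i$ for $i\notin S$). Weak tightness (every transition of $\pi$ is used by $\strat$) gives $h\geq 1$, and Lemma~\ref{lem:ManipulationConclusions} gives $h\leq n$. Taking the memory $m$ of $q_h$ reached by $\rho$ and the memory $m'$ of $q_h$ that uses the transition $q_h\to q_{h+1}$, maximality of $h$ forces $m\neq m'$ and rules out simultaneous rejecting path combinations in both directions between them; hence one of $m,m'$ is replaceable by the other, contradicting tightness. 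The key ingredient your proposal lacks is precisely this use of non-replaceability of memories (full tightness) to force the strategy to ``track'' the rejecting cycle far enough; residual equalities and DBW pumping alone cannot do it.
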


\begin{proof}
Assume, by way of contradiction, that $q$ belongs to a rejecting cycle $\pi=q_0$, $q_1$, $q_2$,~\ldots, $q_n$, $q_{n+1}$ with $q_0=q_{n+1}=q$. Let $S$ be the set of indices of bad sets that $\pi$ visits. That is, an index $j$ belongs to $S$ if there is a state $p$ in $\pi$ that belongs to $B_j$. Notice that $S$ cannot be empty, since $q$ appears in a good set. 

Let $h$ be the maximal index of a state $q_i$ in $\pi$ up to which the strategy may exhaust the cycle states, while not adding a ``fresh unrejected good state''. That is:
\begin{itemize}
\item There is a path $\rho$ of $\As$ from $q$ to a memory $m$ of $q_h$ that visits $q_i$ for every $1 \leq i \leq h$, and if a state $p$ appears in $\rho$ and in $G_i\setminus B_i$ for some acceptance set $i$, then $i\in S$. (Notice that the path may also visit states not in the cycle and may visit the cycle states in a different order.)
\item There is no such path of $\As$ from $q$ to $q_{h+1}$. 
\end{itemize}
Notice that $h\geq 1$, since there is a transition $q \to q_1$ that the strategy uses, and $h \leq n$, since otherwise $q$ belongs to a rejecting path of $\As$, while such a path does not exist due to Lemma~\ref{lem:ManipulationConclusions}.

Let $m'$ be a memory of $q_h$ that takes the transition $q_h \to q_{h+1}$. Notice that $m'\neq m$, since by the maximality of $h$, $m$ does not take the transition $q_h \to q_{h+1}$. 

Furthermore, there cannot be rejecting path combinations from both $m$ to $m'$ and from $m'$ to $m$, as merging them would provide a rejecting path $\rho'$ from $m$ to $m'$, which is impossible due to the maximality of $h$. (Concatenating $\rho'$ to $\rho$ provides a continuation of $\rho$ to $q_{h+1}$.)

Hence, all path combinations from either $m$ to $m'$ or from $m'$ to $m$ are accepting. However, this leads to a contradiction due to the removal of redundant memories in Lemma~\ref{lem:Manipulations}. 
\end{proof}

We are now in position to finish the proof of Theorem~\ref{thm:GFG-NRW-NBW} by showing that $\lang(\B) \subseteq \lang(\A')$ and that $\B$ is~GFG.

Consider a word $w\in \lang(\B)$, and an accepting run $r$ of $\B$ on it. Let $q$ be an accepting state that appears infinitely often in $r$. 
By Lemma~\ref{lem:NoRejectingCycles}, all cycles of $\A'$ that include $q$ are accepting. Hence, $r$ is also an accepting run of $\A'$ on $w$.

As for the GFGness of $\B$, we claim that the strategy $\strat$ also witnesses $\B$'s GFGness. Consider a word $w\in \lang(\B)$. Since $\lang(\B) = \lang(\A') = \lang(\A_\strat')$, there is an accepting run $r$ of $\A_\strat'$ on $w$. Therefore, there must be some state $q$ in a good set of $\A'$ that is visited infinitely often along $r$. Thus, $r$ is also an accepting run of $\B_\strat$ on $w$. This concludes the proof of Theorem~\ref{thm:GFG-NRW-NBW}.\qed


The following result follows directly from Lemma~\ref{lem:Manipulations}, Theorem~\ref{thm:GFG-NRW-NBW}, and the determinization procedure for B\"uchi GFG automata from~\cite{KS15}.

\begin{corollary}\label{col:GFG-NRW-to-DBW}
Every GFG-NRW with $n$ states that recognizes a DBW-realizable language has an equivalent DBW with at most $O(n^2)$ states.
\end{corollary}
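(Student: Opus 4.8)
The plan is to chain together the three ingredients cited in the statement. Starting from a GFG-NRW $\A$ with $n$ states whose language is DBW-realizable, I would first observe that a DBW is a special case of a GFG-NBW---every deterministic automaton is trivially GFG---so the language of $\A$ is in particular GFG-NBW-realizable. This is exactly the hypothesis needed to invoke the B\"uchi-typeness result, so the reduction between realizability notions is the first thing to pin down.

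Next I would apply Lemma~\ref{lem:Manipulations} to replace $\A$ by an equivalent tight GFG automaton $\A'$ defined on a substructure of $\A$. Since $\A'$ lives on a substructure, it has at most $n$ states, and it is a tight GFG-NRW recognizing the same GFG-NBW-realizable language as $\A$. Feeding $\A'$ into Theorem~\ref{thm:GFG-NRW-NBW} (B\"uchi typeness for tight GFG-NRWs) then yields an equivalent GFG-NBW $\B$ on the same structure as $\A'$, so $\B$ again has at most $n$ states.

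Finally, I would apply the quadratic determinization construction for GFG-NBW from~\cite{KS15} to $\B$, obtaining a DBW $\D$ with $O(n^2)$ states. Since $\lang(\D)=\lang(\B)=\lang(\A')=\lang(\A)$, the automaton $\D$ is the required DBW, and the only blow-up over the $n$-state intermediate automata is the single quadratic step of determinization.

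I do not expect a genuine obstacle, as the corollary is essentially an assembly of results already established; the points requiring care are the two bookkeeping checks---verifying that DBW-realizability implies GFG-NBW-realizability so that the typeness theorem is applicable, and tracking that both Lemma~\ref{lem:Manipulations} and Theorem~\ref{thm:GFG-NRW-NBW} keep the state count at most $n$, so that the final $O(n^2)$ bound comes solely from the determinization step.
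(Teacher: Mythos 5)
Your proposal is correct and follows exactly the paper's own argument: tighten via Lemma~\ref{lem:Manipulations} (at most $n$ states), apply Theorem~\ref{thm:GFG-NRW-NBW} to obtain a GFG-NBW on the same structure, then determinize quadratically using~\cite{KS15}. Your explicit observation that DBW-realizability implies GFG-NBW-realizability is a bookkeeping step the paper leaves implicit, but it is the right justification for invoking the typeness theorem.
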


\begin{proof}
Consider a GFG-NRW $\A$ with $n$ states that recognizes a DBW-realizable language. By Lemma~\ref{lem:Manipulations}, $\A$ has an equivalent tight GFG-NRW on a substructure of it, thus with at most $n$ states. 
By Theorem~\ref{thm:GFG-NRW-NBW}, $\A$ has an equivalent GFG-NBW on the same structure, thus with at most $n$ states. By~\cite{KS15}, GFG-NBWs can be determinized with a quadratic blow-up, and we are done.
\end{proof}

\section{Weak Typeness for GFG Automata}
\label{sec:ToWeak}

A B\"uchi automaton $\A$ is {\em weak}~\cite{MSS88} if for each strongly connected component $C$ of $\A$, either $C \subseteq \alpha$ (in which case we say that $C$ is an {\em accepting component}) or $C\cap \alpha = \emptyset$ (in which case we say that $C$ is a {\em rejecting component}). Note that a weak automaton can be viewed as both a B\"uchi and a co-B\"uchi automaton, as a run of $\A$ visits $\alpha$ infinitely often iff it gets trapped in an accepting component iff it visits states in $Q \setminus \alpha$ only finitely often.
We use NWW and DWW to denote nondeterministic and deterministic weak word automata, respectively. 

We show in this section that all GFG automata are type with respect to the weak acceptance condition. We provide the theorem with respect to GFG-NCWs, from which we can easily deduce it, by our previous typeness results, also for the other types.

\newcommand{\thmGFGNCWDBW}{
Tight GFG-NCWs are weak-type:
every tight GFG-NCW that recognizes a~GFG-NWW-realizable language has an equivalent GFG-NWW on the same structure.
}

\begin{theorem}
\label{thm:GFG-NCW-NBW}
\thmGFGNCWDBW
\end{theorem}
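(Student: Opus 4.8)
The plan is to produce the weak condition indirectly: instead of analysing the co-B\"uchi automaton on its own, I would pair its rejecting set with a B\"uchi condition delivered by the B\"uchi-typeness result of the previous section, and then read a weak labelling off the structure. Write $\A=\tuple{\Sigma,Q,Q_0,\delta,\alpha}$ for the given tight GFG-NCW, with $\alpha\subseteq Q$ the co-B\"uchi rejecting set, and let $\strat$ be a strategy with respect to which $\A$ is tight. Since a weak automaton is in particular a B\"uchi automaton with exactly the same accepting runs, the GFG-NWW-realizable language $L=\lang(\A)$ is also GFG-NBW-realizable.

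First I would pass to the Rabin world. A co-B\"uchi condition with rejecting set $\alpha$ is exactly the Rabin condition consisting of the single pair $\tuple{\alpha,Q}$, and a nonempty set of states is accepting for one iff it is accepting for the other; hence replaceability of memories is unaffected, and $\A$ is a tight GFG-NRW with respect to the very same $\strat$. I then apply Theorem~\ref{thm:GFG-NRW-NBW} (which internally uses Lemma~\ref{lem:StronglyTight}): it yields a strongly tight GFG-NRW $\A'$ over the structure of $\A$, equivalent to $\A$, whose bad set is still $\alpha$ (strong tightening only shrinks good sets) and whose good states form a set $G\subseteq Q$, such that the NBW $\B=\tuple{\Sigma,Q,Q_0,\delta,G}$ recognizes $L$ and is GFG, again witnessed by $\strat$. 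The structural payoff I get for free is Lemma~\ref{lem:NoRejectingCycles}: no state of $G$ lies on a rejecting cycle of $\A'$; since any cycle through a state of $G$ already meets $G$, this says that \emph{every cycle passing through a good state avoids} $\alpha$.

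Given $\B$, I would set the weak acceptance set to be the union of all SCCs that meet $G$, namely $\alpha_W=\bigcup\{C : C\text{ is an SCC of }\A\text{ and }C\cap G\neq\emptyset\}$, which is a legal weak condition by construction since it is a union of whole SCCs; let $\A_W=\tuple{\Sigma,Q,Q_0,\delta,\alpha_W}$. The heart of the argument is that \emph{accepting SCCs avoid} $\alpha$: if an SCC $C$ contained a good state $g\in G$ and some $p\in C\cap\alpha$, then, $C$ being strongly connected, one can route a single cycle through both $g$ and $p$; that cycle meets $\alpha$, hence is rejecting, and it passes through $g$ — contradicting Lemma~\ref{lem:NoRejectingCycles}. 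Thus $C\cap G\neq\emptyset$ forces $C\cap\alpha=\emptyset$.

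With this in hand the two inclusions are short. For soundness, a run $r$ accepted by $\A_W$ is trapped in an accepting SCC $C$ (one meeting $G$); since $C\cap\alpha=\emptyset$ we get $\inf(r)\cap\alpha=\emptyset$, so $r$ is accepting for the original co-B\"uchi condition of $\A$ and its word lies in $\lang(\A)=L$. For completeness and GFGness I reuse $\strat$: for $w\in L=\lang(\B)$ the run $\strat(w)$ is accepting for $\B$, so $\inf(\strat(w))$ meets $G$, whence the SCC trapping $\strat(w)$ meets $G$ and is accepting in $\A_W$; thus $\strat(w)$ is accepted by $\A_W$, giving $L\subseteq\lang(\A_W)$ and showing $\strat$ witnesses $\A_W$'s GFGness. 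The main obstacle is exactly the structural claim that accepting SCCs avoid $\alpha$, and this is where tightness is indispensable; I expect essentially all the difficulty to be absorbed into Lemma~\ref{lem:NoRejectingCycles}. The pitfall to avoid is defining $\alpha_W$ directly from $\alpha$ (as the SCCs disjoint from $\alpha$): this can fail, because $\alpha$ may place a rejecting state inside an otherwise accepting SCC, and it is precisely the detour through the B\"uchi good set $G$ that repairs the labelling.
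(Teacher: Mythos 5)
Your proof is correct, but it takes a genuinely different route from the paper's. The paper argues directly on the co-B\"uchi automaton: it enlarges the rejecting set to $S'$ by adding every state none of whose $\strat$-memories lies on an accepting cycle of $\As$, checks that this preserves the language and GFGness, and then establishes weakness by contradiction --- if a cycle $C$ of the automaton mixed a state $q\notin S'$ with a state $q'\in S'$, tightness (every memory of a state is reachable from every other memory of that state) lets one lift $C$ to a rejecting cycle $C_{-}$ of $\As$ through the very memory $m$ of $q$ that also carries an accepting cycle $C_{+}$; infinite concatenations of the corresponding words $u_{+}$ and $u_{-}$ then force $\lang(\A^q)$ to distinguish ``finitely many $u_{-}$'' from ``infinitely many $u_{-}$'', which no DWW-recognizable language can do. You instead recast the co-B\"uchi condition as the one-pair Rabin condition $\tuple{\alpha,Q}$ (correctly observing that acceptance, and hence tightness, is unaffected on nonempty state sets), import the strongly tight good set $G$ and the B\"uchi automaton $\B$ from the proof of Theorem~\ref{thm:GFG-NRW-NBW}, and use Lemma~\ref{lem:NoRejectingCycles} to conclude that every SCC meeting $G$ avoids $\alpha$, taking the weak set to be the union of those SCCs. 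Both arguments are sound and both yield the same-structure GFG-NWW. The paper's proof is self-contained within its section and isolates exactly what is needed (tightness plus DWW-realizability of the residual languages); yours buys brevity by absorbing the combinatorial work into the Rabin machinery of Section~\ref{sec:NRWtoNBW}, at the price of invoking the internals of that proof (the specific $\B$ and the strong tightening of Lemma~\ref{lem:StronglyTight}) rather than only the statement of Theorem~\ref{thm:GFG-NRW-NBW}. Your closing remark about the pitfall of taking the SCCs disjoint from $\alpha$ is precisely the issue the paper's passage from $S$ to $S'$ is designed to repair, and your detour through $G$ handles it from the other side. One cosmetic caveat: the claim that an SCC meeting both $G$ and $\alpha$ yields a cycle through both witnesses needs the SCC to be nontrivial; this is harmless, since only nontrivial SCCs can contain $\inf(r)$ for a run $r$, which is all that soundness requires.
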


\begin{proof}
Consider a tight GFG-NCW $\A$ that recognizes a language that is GFG-NWW-realizable.
Let $S$ be the set of rejecting states of $\A$ and let $\strat$ be a strategy witnessing $\A$'s tight GFGness.
Let $S'$ be the union of $S$ and all the states $q$ of $\A$ for which no $\strat$-memory $m$ has an accepting cycle in $\A_{\strat}$. Let $\A'$ be the automaton $\A$ with the co-B\"uchi condition given by $S'$. Notice that the strategy $\strat$ witnesses that for every state $q$ of $\A'$ we have $\lang(\A^q)\subseteq\lang\big((\A')^q\big)$. The opposite inclusion follows from the fact that $S\subseteq S'$. Thus, $\A'$ is an NCW equivalent to $\A$ and $\strat$ witnesses its GFGness. 

We now prove that $\A'$ is weak. Assume contrarily that there exists a cycle $C$ in $\A'$ that contains both a state $q\notin S'$ and a state $q'\in S'$.

Since $q\notin S'$, there is a cycle $C_{{+}}$ in $\A_{\strat}$ that is accepting in $\A_{\strat}$ and contains a $\strat$-memory $m$ of $q$. This cycle witnesses that none of the states on $C_{{+}}$ can belong to $S'\setminus S$, therefore the cycle $C_{{+}}$ is accepting in $\A'_{\strat}$ as well.

We construct a cycle $C_{{-}}$  of $\A'_{\strat}$ that visits some $\strat$-memory $m'$ of $q'$ and the $\strat$-memory $m$ of $q$. This cycle is obtained by extending the cycle $C$ of $\A'$ in the following way. Assume that $(q_0,a_0,q_1)$ and $(q_1,a_1,q_2)$ are two consecutive transitions of $C$. Since $\A'$ contains only transitions of $\strat$, these are actually transitions of $\A'_{\strat}$ of the form $(m_0,a_0,m_0')$ and $(m_1,a_1,m_1')$ with $\strat$-memories: $m_0$ of $q_0$; $m_0'$ and $m_1$ of $q_1$; and $m_1'$ of $q_2$. Notice that $m_0'$ may possibly be different from $m_1$. However, by the assumption that $\A$ is tight, there is a path in $\A'_{\strat}$ leading from $m_0'$ to $m_1$. Thus, for each pair of such consecutive transitions we can add an appropriate path to $C$ in such a way to obtain a cycle $C_{{-}}$ of $\A'_{\strat}$ that extends (as a set of states) $C$. Additionally, we can add to $C_{{-}}$ two paths in such a way to visit $q$ exactly in the $\strat$-memory $m$ ($C$ visits $q$, so it is possible as above). As $q'\in S'$ and $q'\in C\subseteq C_{{-}}$, we know that $C_{{-}}$ is rejecting in $\A'_{\strat}$.

Let $u_{{+}}$ and $u_{{-}}$ be the finite words over which ${(\A')}_{\strat}^{m}$ traverses the cycles $C_{{+}}$ and $C_{{-}}$, respectively. An infinite repetition of $u_{{+}}$ and $u_{{-}}$ belongs to $\lang\big((\A')_{\strat}^{m}\big)=\lang\big((\A')^q\big)=\lang\big(\A^q\big)$ if and only if it contains only finitely many copies of $u_{{-}}$. But this contradicts the fact that $\lang(\A^q)$ can be recognized by a DWW.
\end{proof}

Consider now a GFG-NSW $\A$ that is GFG-NWW-realizable. Notice that it is obviously also GFG-NBW-realizable. Hence, by Theorem~\ref{thm:NswNcw}, there is a GFG-NCW on $\A$'s structure, and by Theorem~\ref{thm:GFG-NCW-NBW} also a GFG-NWW. The cases of a GFG-NPW and a GFG-NBW obviously follow, since they are special cases of GFG-NSWs. As for a GFG-NRW $\A$ that is GFG-NWW-realizable, notice that it is obviously also GFG-NBW-realizable. Hence, by Theorem~\ref{thm:GFG-NRW-NBW}, there is a GFG-NBW on $\A$'s structure, and by Theorem~\ref{thm:GFG-NCW-NBW} also a GFG-NWW.

\begin{corollary}
Tight GFG-NSWs and GFG-NRWs are weak-type:
every tight GFG-NSW and GFG-NRW that recognizes a GFG-NWW-realizable language has an equivalent GFG-NWW on the same structure.
\end{corollary}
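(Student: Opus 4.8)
The plan is to derive the corollary by chaining the three typeness results already proved --- Theorem~\ref{thm:NswNcw}, Theorem~\ref{thm:GFG-NRW-NBW}, and Theorem~\ref{thm:GFG-NCW-NBW} --- while tracking the fact that each of them preserves both the \emph{structure} of the automaton and the \emph{witnessing strategy} $\strat$. The enabling observation is that a weak automaton is simultaneously a B\"uchi and a co-B\"uchi automaton; hence a GFG-NWW-realizable language is both GFG-NBW-realizable and GFG-NCW-realizable. This is what feeds the realizability hypotheses into each theorem in turn.

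For the Streett case, let $\A$ be a tight GFG-NSW with a GFG-NWW-realizable language $L$, witnessed by a tightening strategy $\strat$. Since $L$ is GFG-NCW-realizable, Theorem~\ref{thm:NswNcw} yields a GFG-NCW $\A_1$ on the same structure, still witnessed by $\strat$. Applying Theorem~\ref{thm:GFG-NCW-NBW} to $\A_1$, whose language $L$ is GFG-NWW-realizable, then produces a GFG-NWW on that same structure, which is the structure of $\A$. For the Rabin case, let $\A$ be a tight GFG-NRW with a GFG-NWW-realizable language $L$, witnessed by $\strat$. Since $L$ is GFG-NBW-realizable, Theorem~\ref{thm:GFG-NRW-NBW} gives a GFG-NBW $\B$ on $\A$'s structure, again witnessed by $\strat$. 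A B\"uchi condition is a special case of a Streett condition, so $\B$ is in particular a GFG-NSW with GFG-NCW-realizable language $L$; I route it through Theorem~\ref{thm:NswNcw} to obtain a GFG-NCW $\B_1$, and then through Theorem~\ref{thm:GFG-NCW-NBW} to obtain a GFG-NWW, all on $\A$'s structure.

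The main obstacle is bookkeeping of the tightness hypotheses along the chain: each theorem I invoke is stated for tight (Theorem~\ref{thm:GFG-NCW-NBW}) or merely weakly tight (which, as noted after Theorem~\ref{thm:NswNcw}, suffices there) automata, yet the intermediate automata $\A_1$, $\B$, $\B_1$ are produced by the theorems themselves and their tightness is not part of the conclusions. The resolution I would give is that every transformation leaves the structure and $\strat$ --- hence the deterministic automaton $\A_\strat$ together with its memories and transitions --- untouched and alters only the acceptance condition, so all purely structural properties underlying tightness are inherited from the original $\A$.

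Concretely, weak tightness, that every transition is used by $\strat$, is structural and is therefore preserved; since the original $\A$ is tight and the structure and strategy are unchanged, $\B$ (and $\A_1$, $\B_1$) remain weakly tight, which is all that Theorem~\ref{thm:NswNcw} needs. For Theorem~\ref{thm:GFG-NCW-NBW}, the only place full tightness enters its proof is the existence of a path in $\A_\strat$ between any two co-located memories (used there to connect consecutive memories when building the cycle $C_{-}$); since the tightness of $\A$ rules out an empty $P_{m' \to m}$ for co-located $m \neq m'$, this reachability is structural and is inherited by $\A_1$ and $\B_1$. Thus every downstream tightness requirement is met without ever shrinking the structure, which is exactly what ``on the same structure'' demands.
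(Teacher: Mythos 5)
Your proof is correct and follows essentially the same route as the paper's: both derive the corollary by chaining Theorems~\ref{thm:NswNcw}, \ref{thm:GFG-NRW-NBW}, and~\ref{thm:GFG-NCW-NBW}, using the observation that a GFG-NWW-realizable language is both GFG-NBW- and GFG-NCW-realizable. Your extra bookkeeping of how the (weak) tightness hypotheses propagate to the intermediate automata addresses a point the paper glosses over, and your resolution of it is sound.
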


Next, we show that GFG-NWWs are \DBP, generalizing a folklore result about safe and co-safe GFG automata.

\newcommand{\thmGFGNWW}{
GFG-NWWs are \DBP.
}

\begin{theorem}
\label{thm:GfgNww}
\thmGFGNWW
\end{theorem}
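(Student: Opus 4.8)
The plan is to prove that every GFG-NWW is determinizable by pruning by exploiting the fact that a weak automaton is simultaneously a B\"uchi and a co-B\"uchi automaton, combined with the residual and tightness properties established earlier. First I would take a GFG-NWW $\A$ together with a finite-state strategy $\strat$ witnessing its GFGness, and by Lemma~\ref{lem:Manipulations} assume without loss of generality that $\A$ is tight with respect to $\strat$ (this only passes to a substructure, and a pruning of a substructure is a pruning of $\A$). The goal is to define a single deterministic successor $\delta'(q,a)\in\delta(q,a)$ for each reachable state $q$ and letter $a$ so that the resulting deterministic automaton still recognizes $\lang(\A)$.

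The key idea I would pursue is that, for a \emph{weak} acceptance condition, the strategy's choices can be made \emph{positionally} in the state $q$ of $\A$ rather than history-dependently in the memory $m$. The obstacle to this in general GFG automata is that a state $q$ may carry several memories $m,m'$ with conflicting residual behaviour; but here I would argue that tightness forces all memories of $q$ to be interchangeable for the purpose of a weak condition. Concretely, by Lemma~\ref{lem:MemoryTolerance}(2), any two reachable memories $m,m'$ of the same state $q$ satisfy $\lang(\Asm)=\lang(\A_\strat^{m'})$, so they have the same residual language $\lang(\A^q)$. Since $\A$ is weak, membership of a word in $\lang(\A^q)$ is determined by whether the run eventually stays in an accepting component; I would show that because all GFG runs accept exactly $\lang(\A^q)$ and the condition is weak, the single successor chosen by $\strat$ from \emph{any} memory of $q$ on letter $a$ can be used uniformly. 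The main step is therefore to define $\delta'(q,a)$ to be the $\strat$-successor of some fixed memory of $q$, and to verify that following $\delta'$ from the initial state produces a run that tracks a legitimate $\strat$-run up to the equivalence of memories.

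The heart of the correctness argument is to show $\lang(\A')=\lang(\A)$ where $\A'$ is the pruned deterministic automaton. The inclusion $\lang(\A')\subseteq\lang(\A)$ is immediate since $\A'$ is a subautomaton. For $\lang(\A)\subseteq\lang(\A')$, take $w\in\lang(\A)$ and let $r'$ be the unique run of $\A'$ on $w$, passing through states $q_0,q_1,\dots$. I would prove inductively, using Lemma~\ref{lem:MemoryTolerance}, that at each position $i$ the suffix of $w$ lies in $\lang(\A^{q_i})$: the base case is $w\in\lang(\A)=\lang(\A^{q_0})$, and the inductive step uses that $\delta'(q_i,a_{i+1})$ was chosen as a $\strat$-successor of a memory of $q_i$, so by Lemma~\ref{lem:MemoryTolerance} and the GFGness of $\strat$ the residual suffix remains in the language of the successor state. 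Hence every suffix is accepted, and since $\A'$ is weak, this forces the run $r'$ to eventually remain in an accepting component, so $r'$ is accepting.

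The step I expect to be the main obstacle is establishing that the positional choice is sound despite the distinction between memories and states---that is, ruling out the scenario where the deterministic run $r'$ drifts into a rejecting component even though every suffix is in the correct residual language. Here the weakness of the condition is essential: for a genuinely weak automaton, a run that is trapped in a rejecting component rejects, but the residual-language invariant I maintain would then be violated, yielding a contradiction. I would make this precise by invoking that a run of a weak automaton settles into a single strongly connected component, and that the residual language of a state in a rejecting component cannot coincide with $\lang(\A^{q})$ for a state from which acceptance is possible, contradicting the maintained invariant. This reduces the whole argument to the residual-equivalence of memories guaranteed by Lemma~\ref{lem:MemoryTolerance} together with the component structure of weak automata.
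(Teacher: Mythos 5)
Your overall strategy (make the strategy positional, one successor per state, and argue correctness via residual languages) is the same as the paper's, which merges the memories of each state one by one. But there is a genuine gap exactly at the step you flag as ``the main obstacle,'' and your proposed resolution of it does not work. The invariant you maintain --- that every suffix of $w$ lies in $\lang(\A^{q_i})$ for the current state $q_i$ of the pruned run --- is too weak to conclude acceptance. Your closing claim, that ``the residual language of a state in a rejecting component cannot coincide with $\lang(\A^{q})$ for a state from which acceptance is possible,'' is simply false: a state sitting in a rejecting component of a weak \emph{nondeterministic} automaton can have transitions that leave the component, so its residual language can be anything, including the full residual needed. The danger is precisely that the positional run keeps circling in a rejecting component while at every step it ``could still'' leave and accept; the residual invariant never detects this. (Note also that the residuals in your invariant are residuals of the nondeterministic $\A$, not of the pruned deterministic $\A'$; for the latter the invariant would be equivalent to the conclusion, and for the former it does not imply it.)

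The paper closes this gap with a concrete pumping argument that your sketch has no substitute for. After using strong tightness (Lemmas~\ref{lem:Manipulations}, \ref{lem:StronglyTight} and~\ref{lem:ManipulationConclusions}) to ensure every \emph{accepting} state already has a unique memory, it merges two memories $m,m'$ of a non-accepting state $q$ and, assuming the merged run rejects, extracts finite words $u$ (a period witnessing the rejected suffix, with $\A_\strat^{m}(u)=m'$) and $u'$ (with $\A_\strat^{m'}(u')=m$). It then classifies words in $(u+u')^\omega$ into those with finitely many $u'$ (which must be accepted from $q$) and those ending in $(uu')^\omega$ (which must be rejected from $q$), and derives a contradiction by looking at the last strongly connected component of $\As$ reachable from $m$ via $(u+u')^\ast$: whether that component is accepting or rejecting, one of the two kinds of words is misclassified. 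Some argument of this shape --- using the SCC structure of the \emph{strategy automaton} $\As$, not just residual equality of memories --- is unavoidable, and without it your proof does not go through.
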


\begin{proof}
Consider a GFG-NWW $\A$ with accepting set $\alpha$. By Lemmas~\ref{lem:Manipulations} and \ref{lem:StronglyTight}, we may assume that $\A$ is strongly tight w.r.t.\ a strategy $\strat$. 
First notice that by Lemma~\ref{lem:ManipulationConclusions}, a state $q\in\alpha$ has only one $\strat$-memory, and is therefore already deterministic.
  
Now we consider the case of a state $q\notin \alpha$ such that there are at least two $\strat$-memories $m$ and $m'$ of $q$. 
Let $\strat'$ be the strategy obtained by removing $m'$ from $\strat$ and redirecting transitions to $m'$ into $m$.

We now show that $\lang(\A_\strat) = \lang(\A_{\strat'})$. From that, by induction it follows that the number of memories of each state of $\A$ can be reduced to $1$.

Consider a word $w\in\As$, and let $r'$ be the run of $\A_{\strat'}$ on $w$. The run $r'$ may use the memory $m$ instead of $m'$ finitely or infinitely many times. 
If $r'$ uses it only finitely many times, then by an argument similar to the one given in the proof of Lemma~\ref{lem:Manipulations}, $r'$ is also accepting. (The argument inductively uses Lemma~\ref{lem:MemoryTolerance}, according to which $\lang(\A_\strat^{m})=\lang(\A_\strat^{m'})$.)

We continue with the case that the change is done infinitely many times, in positions $p_1, p_2, \ldots$ of $r'$, and assume toward contradiction that $r'$ is rejecting. Every path from $p_i$ to $p_{i+1}$ is a path from $m$ to $m'$ in $\As$. Notice that the suffix of $w$ from position $p_1$ onwards is in $\lang(\A^q)\setminus\lang(\A_{\strat'}^{m})$.
Since we consider $\w$-regular languages, we can assume without loss of generality that this suffix is periodic, in the form of $u^\w$, where $\Asm(u)=m'$. Let $u'$ be a finite word such that $\A_{\strat}^{m'}(u')=m$. 

Consider now a word $w'\in(u+u')^\w$.
First assume that $w'$ contains only finitely many instances of $u'$. In that case, we have that $\A^q$ accepts $w'$, because $\A^q$ has a run that loops back to $q$ until reading the last occurrence of $u'$ and then follows the run witnessing that $u^\w\in\lang(\A^q)$. We refer to such words as of the \emph{first kind}.

Now assume that a suffix of $w'$ from some point on is equal to $(u u')^\w$. Since $(u u')^\w\not\in\lang(\Asm)$, we get by Lemma~\ref{lem:MemoryTolerance} that $w'\notin\lang(\A^q)$. We refer to such words as of the \emph{second kind}.

Now consider the minimal, namely last, strongly-connected component of $\As$ that can be reached from $m$ by reading words in the language $(u+u')^\ast$. If this component is accepting, then $\Asm$ accepts a word of the second kind. Similarly, if the component is rejecting then $\Asm$ rejects a word of the first kind. In both cases we get a contradiction.
\end{proof}

By combining the above results, we obtain the following corollary.

\begin{corollary}
\label{cor nww}
Every GFG-NSW and GFG-NRW that recognizes a GFG-NWW-realizable language is \DBP.
\end{corollary}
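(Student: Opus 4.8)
The plan is to compose the three typeness results with Theorem~\ref{thm:GfgNww}, the only genuinely new work being to carry \emph{the original} acceptance condition back through the chain: by definition, \DBP requires a deterministic pruning that keeps $\A$'s own condition $\alpha$, whereas the intermediate automata produced along the way carry the weak condition.

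First I would normalize. Let $\A$ be a GFG-NSW or GFG-NRW recognizing a GFG-NWW-realizable language $L$. By Lemma~\ref{lem:Manipulations} we may replace $\A$ by an equivalent tight GFG automaton on a substructure of it (and, in the Rabin case, strengthen it by Lemma~\ref{lem:StronglyTight}); tightening does not change $L$, so it stays GFG-NWW-realizable. Fix a strategy $\strat$ witnessing GFGness and with respect to which the automaton is (strongly) tight. By weak-typeness --- Theorem~\ref{thm:NswNcw} for Streett or Theorem~\ref{thm:GFG-NRW-NBW} for Rabin, followed by Theorem~\ref{thm:GFG-NCW-NBW} --- there is a GFG-NWW $\B$ on the same structure, recognizing $L$. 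Finally, Theorem~\ref{thm:GfgNww} says $\B$ is \DBP: there are $q_0$ and $\delta''$ with $\delta''(q,a)\in\delta(q,a)$ such that $H=\tuple{\Sigma,Q,q_0,\delta''}$ is deterministic and recognizes $L$ with the weak condition. Since $\B$ shares the transition structure of $\A$, the map $\delta''$ is also a pruning of $\A$.

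The crux is to show that $H$, re-equipped with $\A$'s original condition $\alpha$ --- call the resulting deterministic automaton $\D$ --- still recognizes $L$. One inclusion is immediate: every run of $\D$ is a run of $\A$, so any $\alpha$-accepting run of $\D$ witnesses membership in $\lang(\A)=L$, giving $\lang(\D)\subseteq L$. The reverse inclusion does \emph{not} follow abstractly from $\lang(\A)=L$, because a word in $L$ could have a weak-accepting $H$-run that is not THE $\alpha$-accepting run of $\A$. Instead it suffices to show that the positional strategy $\strat''$ underlying $H$ (one memory per state, produced by the memory-merging inside Theorem~\ref{thm:GfgNww}) witnesses the GFGness of the \emph{original} automaton under $\alpha$: then for $w\in L$ the unique run of $\D$ on $w$ equals $\strat''(w)$, which is $\alpha$-accepting, so $w\in\lang(\D)$ and $\lang(\D)=L$, exhibiting $\A$ as \DBP.

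I expect this last step to be the main obstacle. Two facts must be combined. First, the original tight strategy $\strat$ witnesses GFGness of $\A$ under $\alpha$ and of $\B$ under the weak condition, and --- as is established inside the proofs of Theorems~\ref{thm:NswNcw}, \ref{thm:GFG-NRW-NBW} and \ref{thm:GFG-NCW-NBW} (e.g.\ via Lemma~\ref{lem:NoRejectingCycles} in the Rabin case) --- the two conditions accept exactly the same runs of the deterministic strategy-automaton, so they agree run-by-run. Second, $\strat''$ is obtained from $\strat$ by the merges of Theorem~\ref{thm:GfgNww}, which are justified there only for the weak condition of $\B$. The delicate point, where I would concentrate the effort, is to verify that each such merge also preserves $\alpha$-acceptance of strategy-runs, i.e.\ that $\lang(\A_\strat)=\lang(\A_{\strat''})$ computed with $\alpha$; given the run-by-run agreement of the weak and $\alpha$ conditions, this should follow from the very cycle analysis in the proof of Theorem~\ref{thm:GfgNww}, but it is the one place where the change of acceptance condition must be reconciled with the rigid \DBP requirement. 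Everything else is bookkeeping about substructures, and the final \DBP conclusion then follows by combining the steps above.
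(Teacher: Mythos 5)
Your main chain is exactly the paper's proof: the paper derives this corollary in one line by combining the preceding weak-typeness corollary (Theorems~\ref{thm:NswNcw}, \ref{thm:GFG-NRW-NBW} and~\ref{thm:GFG-NCW-NBW}, after tightening via Lemmas~\ref{lem:Manipulations} and~\ref{lem:StronglyTight}) with Theorem~\ref{thm:GfgNww}. Up to the point where you obtain a deterministic pruning $H$ of $\A$ that recognizes $\lang(\A)$ with the weak acceptance condition, you and the paper coincide.

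The divergence is your self-identified ``crux'': re-equipping $H$ with the original Streett/Rabin condition $\alpha$. The paper does not take this step at all --- it is content with the pruning carrying the weak condition supplied by typeness, which still lives on (a substructure of) $\A$'s own state space, and that weaker conclusion is all that is used downstream (e.g.\ in Corollary~\ref{cor:ltl-to-gfgnsw}, where only the existence of a small embedded deterministic automaton for the language matters). If one insists on the literal definition of \DBP, you are right that the inclusion $\lang(\A)\subseteq\lang(H,\alpha)$ does not come for free, and this is precisely where your argument is not actually complete. The ``run-by-run agreement'' you invoke does hold on the deterministic strategy automaton $\As$: two deterministic automata on the same structure recognizing the same language must agree on the acceptance status of every reachable cycle (pump $uv^\omega$). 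But this agreement does not automatically survive the memory merges of Theorem~\ref{thm:GfgNww}, because a cycle of the post-merge strategy automaton decomposes into \emph{paths} (not cycles) of the pre-merge one, so the cycle-agreement fact cannot simply be transported; the cycle analysis in the proof of Theorem~\ref{thm:GfgNww} is carried out only for the weak condition. You correctly flag this as the delicate point but leave it as a conjecture. As written, your proposal therefore establishes the corollary only under the reading of \DBP that the paper itself effectively uses; the stronger, condition-preserving statement you aim for would require the additional argument you defer, and it is genuinely the only nontrivial content beyond the paper's one-line combination.
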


\section{Consequences}
\label{sec:consequences}
GFG automata provide an interesting formalism in between deterministic and nondeterministic automata. Their translation to deterministic automata is immediate for the weak condition (Theorem~\ref{thm:GfgNww}), polynomial for the B\"uchi condition~\cite{KS15}, and exponential for the co-B\"uchi, parity, Rabin, and Streett conditions~\cite{KS15}. They have the same typeness behavior as deterministic automata, summarized in Table~\ref{table:Typeness}. The positive results of Table~\ref{table:Typeness} follow from our theorems in Sections~\ref{sec:NSWtoNCW}, \ref{sec:NRWtoNBW}, and~\ref{sec:ToWeak}. The negative results follow from corresponding counterexamples with deterministic automata~\cite{KPB94,KMM06}. Considering the complementation of GFG automata, they lie in between the deterministic and nondeterministic settings, as shown in Table~\ref{table:Complementation}. As for the translation of LTL formulas to GFG automata, it is doubly exponential, like the translation to deterministic automata (Corollary~\ref{cor:ltl-to-gfgnsw} below). 

\paragraph*{Complementation}
In the deterministic setting, Rabin and Streett automata are dual: complementing a DRW into a DSW, and vice versa, is simply done by switching between the two acceptance conditions on top of the same structure. This is not the case with GFG automata. We show below that complementing a GFG-NSW, and even a GFG-NCW, into a GFG-NRW involves an exponential state blow-up. Essentially, it follows from the B\"uchi-typeness of GFG-NRWs (Theorem~\ref{thm:GFG-NRW-NBW}) and the fact that while determinization of GFG-NBWs involves only a quadratic blow-up, determinization of GFG-NCWs involves an exponential one \cite{KS15}.

\newcommand{\corComplementation}{
The complementation of a GFG-NCW into a GFG-NRW involves a $2^{\Omega(n)}$ state blow-up.
}

\begin{corollary}\label{cor:Complementation}
\corComplementation
\end{corollary}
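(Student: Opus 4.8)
The plan is to derive this lower bound as a reduction from the known exponential lower bound on determinizing GFG-NCWs~\cite{KS15}. First I would fix the witnessing family: for each $n$, let $\A_n$ be a GFG-NCW with $O(n)$ states such that every DCW equivalent to $\A_n$ has $2^{\Omega(n)}$ states (this is exactly the content of the determinization lower bound of~\cite{KS15}), and set $L_n=\lang(\A_n)$. Since GFG-NCWs are as expressive as DCWs, $L_n$ is DCW-realizable, and therefore its complement $\overline{L_n}$ is DBW-realizable (Büchi and co-Büchi being dual for deterministic automata). The family $\{\A_n\}_n$ will be the one exhibiting the claimed blow-up: each $\A_n$ is a GFG-NCW of size $O(n)$, and I will show that any GFG-NRW for $\overline{L_n}$ must have $2^{\Omega(n)}$ states.

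Next I would run the core argument by contradiction. Suppose $\B$ is a GFG-NRW with $m$ states recognizing $\overline{L_n}$. Because $\overline{L_n}$ is DBW-realizable, Corollary~\ref{col:GFG-NRW-to-DBW} applies and yields a DBW $\D$ equivalent to $\B$ with only $O(m^2)$ states. This single step is where all the real work is hidden: Corollary~\ref{col:GFG-NRW-to-DBW} is built on the B\"uchi-typeness of tight GFG-NRWs (Theorem~\ref{thm:GFG-NRW-NBW}, via the tightening of Lemma~\ref{lem:Manipulations}) followed by the quadratic determinization of GFG-NBWs from~\cite{KS15}, so a putative small GFG-NRW complement is forced to collapse into a small deterministic automaton.

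Having a deterministic automaton, I would then complement it by the usual deterministic dualization: viewing $\D$ over the same structure but reading its Büchi set as a co-Büchi condition produces a DCW $\D'$ with $O(m^2)$ states. Since $\D$ has a unique run on each word, that run visits the accepting set infinitely often iff it does not visit it only finitely often, so $\lang(\D')=\Sigma^\omega\setminus\lang(\D)=\overline{\overline{L_n}}=L_n$. Thus $\D'$ is a DCW recognizing $L_n$ with $O(m^2)$ states. By the choice of $\A_n$, every DCW for $L_n$ needs $2^{\Omega(n)}$ states, so $O(m^2)\ge 2^{\Omega(n)}$, whence $m=2^{\Omega(n)}$, contradicting any subexponential bound and giving the desired blow-up.

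I expect the only genuinely substantive ingredient to be the appeal to Corollary~\ref{col:GFG-NRW-to-DBW}, which is exactly the point where typeness of GFG-NRWs licenses the otherwise forbidden move from a nondeterministic (Rabin, GFG) object to a comparably small deterministic one; once that is in hand, the remaining steps—the expressive equivalence of GFG-NCWs and DCWs, the deterministic Büchi/co-Büchi complementation-by-dualization, and the elementary arithmetic with $2^{\Omega(n)}$—are routine. The single point I would state carefully is that the lower-bound family of~\cite{KS15} is really a family of \emph{GFG-NCWs} whose minimal equivalent DCWs are exponentially larger, so that the final inequality $O(m^2)\ge 2^{\Omega(n)}$ indeed bites.
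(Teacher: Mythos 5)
Your proposal is correct and follows essentially the same route as the paper: take the GFG-NCW family from~\cite{KS15} whose equivalent DCWs are exponential, note that the complement language is DBW-realizable, invoke Corollary~\ref{col:GFG-NRW-to-DBW} to collapse any putative small GFG-NRW complement into a DBW of quadratic size, dualize that DBW into a DCW for the original language, and conclude $x^2\geq 2^{\Omega(n)}$. No meaningful differences from the paper's argument.
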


\begin{proof}
By~\cite{KS15}, there is a GFG-NCW $\C$ with $n$ states whose equivalent DCWs must have at least $2^{\Omega(n)}$ states.
Consider a GFG-NRW $\A$ with $x$ states for the complement of $\C$.

Since the language of $\A$ is DBW-recognizable, then, by Corollary~\ref{col:GFG-NRW-to-DBW}, there is a DBW $\D$ equivalent to $\A$ whose state space is quadratic in the number of states of $\A$, namely with up to $x^2$ states.
As the dual of $\D$ is a DCW equivalent to $\C$, it follows that $\D$ has at least $2^{\Omega(n)}$ states.
Hence, $x^2 \geq 2^{\Omega(n)}$, implying that $x \geq 2^{\Omega(n/2)} = 2^{\Omega(n)} $.
\end{proof}

Using our typeness results, we get an almost complete picture on complementation of GFG automata.

\newcommand{\DR}[1]{\multirow{2}{*}{#1}}
\newcommand{\BAR}[1]{\DR{$\overline{\mbox{#1}}$}}

\begin{table}
\parbox{.48\linewidth}{
\centering
\bgroup
\def\arraystretch{1.25} 
\begin{tabular}{c||c|c|c|c|c|c|}
\multicolumn{1}{r||}{\DR{Type To}} & \DR{\!W\!} & \DR{B} & \DR{C} & \DR{P} & \DR{R} & \DR{S} \\ 
\multicolumn{1}{c||}{From} & &&&&&   \\ 
\hline\hline
Weak & \multicolumn{6}{c|}{}\\
B\"uchi & \multicolumn{6}{c|}{} \\ 
Co-B\"uchi & \multicolumn{6}{c|}{Yes}\\ 
Parity & \multicolumn{6}{c|}{}\\ \cline{4-7}
Rabin & \multicolumn{2}{c|}{}&\multicolumn{2}{c|}{No}&Y&N\\  \cline{3-7}
Streett & & N &Y&\multicolumn{2}{c|}{No}&Y\\
 \hline
\end{tabular}
\caption{Typeness in translations between GFG automata. (Y=Yes; N=No.)}
\label{table:Typeness}
\egroup
}
\hfill
\parbox{.48\linewidth}{
\bgroup
\def\arraystretch{1.25} 
\begin{tabular}{c||c|c|c|c|c|c|}
\multicolumn{1}{r||}{\DR{Comp.\ To}} & \DR{\!$\overline{\mbox{W}}$\!} & \BAR{C} & \BAR{B} & \BAR{P} & \BAR{R} & \BAR{S} \\ 
\multicolumn{1}{c||}{From} & &&&&&   \\ 
\hline\hline
Weak & \multicolumn{6}{c|}{}\\
B\"uchi & \multicolumn{2}{c}{Poly}&\multicolumn{4}{c|}{} \\ \cline{4-7}
Co-B\"uchi & \multicolumn{2}{c|}{}&\multicolumn{3}{c|}{}&\multirow{3}{*}{?}\\ 
Parity & \multicolumn{2}{c|}{}&\multicolumn{3}{c|}{}&\\
Rabin & \multicolumn{2}{c|}{}&\multicolumn{3}{c|}{}&\\  \cline{3-3} \cline{7-7}
Streett & &\multicolumn{5}{c|}{\multirow{-4}{*}{Exp}}\\
 \hline
\end{tabular}
\caption{The state blow-up involved in the complementation of GFG automata.}
\label{table:Complementation}
\egroup
}
\end{table}

\begin{theorem}
The state blow-up involved in the complementation of GFG automata is as summarized in Table~\ref{table:Complementation}.
\end{theorem}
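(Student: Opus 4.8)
The final statement to prove is the theorem asserting that the complementation blow-up for GFG automata is exactly as summarized in Table~\ref{table:Complementation}. The plan is to establish each entry of the table, organizing the work along the two qualitatively different regimes---the polynomial entries in the upper-left corner and the exponential entries that fill most of the table---and then noting the single remaining open case marked ``?''.

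First I would handle the polynomial entries. For complementing a GFG-NBW into a co-B\"uchi-type automaton ($\overline{\mbox{C}}$ or $\overline{\mbox{W}}$), I would use the fact that a GFG-NBW can be determinized into a DBW with only a quadratic blow-up~\cite{KS15}; the complement of a DBW is obtained by dualizing to a DCW on the same structure, which is a GFG-NCW. This gives the ``Poly'' entry for the B\"uchi row under the $\overline{\mbox{W}}$ and $\overline{\mbox{C}}$ columns. Since a weak automaton is simultaneously B\"uchi and co-B\"uchi, the Weak row inherits the same polynomial bound for these columns, and in fact Theorem~\ref{thm:GfgNww} tells us GFG-NWWs are \DBP, so complementing them is essentially free.

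Next I would establish the exponential entries. The lower bounds come directly from Corollary~\ref{cor:Complementation}: complementing a GFG-NCW into a GFG-NRW (and hence into any of the weaker-to-express conditions B, P) already forces a $2^{\Omega(n)}$ blow-up. The same example also drives the Streett and Parity rows, since these subsume co-B\"uchi; dually, the exponential lower bound for complementing a GFG-NBW into a co-B\"uchi (i.e.\ the $\overline{\mbox{B}}$ direction for the Co-B\"uchi row) follows from the exponential determinization lower bound for GFG-NCWs in~\cite{KS15}, because the complement of a DBW-realizable language recognized by a GFG-NCW is precisely the kind of language that is hard to determinize. For the matching upper bounds, I would invoke the standard fact that every GFG automaton can be determinized (polynomially for B\"uchi, exponentially for the remaining conditions~\cite{KS15}) and then complemented on the same structure by dualizing the acceptance condition, yielding at most an exponential complement. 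Thus every ``Exp'' entry is pinned both above and below.

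The main obstacle---and the reason the table contains a ``?''---is the complementation of a Streett (or the conditions above it) into a Streett-complement $\overline{\mbox{S}}$, i.e.\ the rightmost column for the Co-B\"uchi through Streett rows. Here the target condition is as expressive as a Rabin condition after dualization, so the exponential lower-bound arguments that rely on forcing a \emph{weaker} target condition (B\"uchi, co-B\"uchi, or parity) do not apply, and neither does a trivial polynomial construction present itself. I expect this gap to be genuinely open rather than a routine omission: resolving it would require either a succinct GFG-Rabin complementation construction that avoids full determinization, or a new lower-bound family exhibiting a GFG-Streett language whose GFG-Rabin complement is necessarily large. Accordingly, the proof establishes all entries except this one, which is left as an open question, exactly matching the ``?'' in Table~\ref{table:Complementation}.
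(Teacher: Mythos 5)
Your proposal recovers the two easiest regions of Table~\ref{table:Complementation} --- the polynomial entries in the Weak/B\"uchi rows and the exponential entries that follow from Corollary~\ref{cor:Complementation} --- but it misreads the table and omits the cases where the paper's typeness machinery actually does the work. The most concrete error is your description of the ``?'': it does \emph{not} include the Streett row. The paper settles Streett-to-$\overline{\mbox{S}}$ and Streett-to-$\overline{\mbox{C}}$ as exponential via a lower bound entirely different from Corollary~\ref{cor:Complementation}: take a DRW that is NCW- (resp.\ NSW-) realizable, dualize it at no cost into a DSW for the complement language, and invoke the exponential blow-up of translating DRWs into NCWs/NSWs~\cite{Bok17}. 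Your plan cannot produce these entries, because the hard GFG-NCW of Corollary~\ref{cor:Complementation} has a complement that is not co-B\"uchi-realizable, so it says nothing about the $\overline{\mbox{C}}$ or $\overline{\mbox{S}}$ columns, where the table only speaks about instances for which complementation into that class is possible at all. The genuinely open cells are only Co-B\"uchi/Parity/Rabin into $\overline{\mbox{S}}$.

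Second, you overlooked that the Co-B\"uchi, Parity and Rabin rows are \emph{polynomial} in the $\overline{\mbox{W}}$ and $\overline{\mbox{C}}$ columns, as is Streett into $\overline{\mbox{W}}$; your text either leaves these unaddressed or sweeps them into the exponential region, contradicting the table. These are precisely the entries where the typeness results enter: if a GFG-NCW/NPW/NRW with $n$ states can be complemented into a GFG-NCW at all, then its own language is B\"uchi-realizable, so Theorem~\ref{thm:GFG-NRW-NBW} (after tightening) yields an equivalent GFG-NBW with at most $n$ states, and the quadratic B\"uchi case applies; similarly, a GFG-NSW complementable into a GFG-NWW has a weak language, and Theorems~\ref{thm:NswNcw} and~\ref{thm:GFG-NCW-NBW} place a GFG-NWW on its own structure. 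A smaller gap of the same flavor: for the $\overline{\mbox{B}}$ column of the Weak/B\"uchi rows you need the observation that if the dualized DCW is GFG-NBW-realizable then the language is weak, so Theorem~\ref{thm:GFG-NCW-NBW} gives a weak (hence B\"uchi) automaton on a substructure; dualizing alone only yields a co-B\"uchi automaton. Without these conditional reductions the ``almost complete picture'' the theorem claims is not established.
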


\begin{proof}\
\begin{itemize}
\item From weak and B\"uchi. A GFG-NBW $\A$ with $n$ states has an equivalent DBW $\D$ with up to $n^2$ states~\cite{KS15}, on which structure there is a DCW $\overline{\D}$ for the complement language. Notice that $\overline{\D}$ is also a GFG-NCW, GFG-NPW, GFG-NRW, and GFG-NSW. Now, if there is a GFG-NBW equivalent to $\overline{\D}$, then $\overline{\D}$ is DWW-recognizable, and, by Theorem~\ref{thm:GFG-NCW-NBW}, there is a GFG-NWW on a substructure of $\overline{\D}$.

\item From co-B\"uchi. By Corollary~\ref{cor:Complementation}, we have the exponential state blow-up in the complementation to GFG-NPW and GFG-NRW automata. Since the complement of a co-B\"uchi-recognizable language is DBW-recognizable, we get an exponential state blow-up also to GFG-NBW.  

\item To weak and co-B\"uchi. Consider a GFG-NCW, GFG-NPW, or GFG-NRW $\A$ with $n$~states that can be complemented into a GFG-NCW $\C$. Then the language of $\A$ is GFG-NBW recognizable. Thus, by Theorem~\ref{thm:GFG-NRW-NBW}, there is a GFG-NBW equivalent to $\A$ with up to $n$ states. Hence, by case~(1), there is a GFG-NCW for the complement of $\A$ with up to $n^2$ states.

\item  From Streett to weak. Consider a GFG-NSW $\A$ that can be complemented to a GFG-NWW. Then the language of $\A$ is DWW-recognizable. Thus, by Theorems~\ref{thm:NswNcw} and~\ref{thm:GFG-NCW-NBW}, there is a GFG-NWW on a substructure of $\A$, and we are back in case~(1).

\item  From Streett to co-B\"uchi. Given a DRW $\A$ that is NCW realizable, one can translate it to an equivalent NCW by first dualizing $\A$ into a DSW $\overline{\A}$ for the complement language, and then complementing $\overline{\A}$ into a GFG-NCW $\C$. Since dualizing a DRW into a DSW is done with no state blowup and the translation of DRWs to NCWs might involve an exponential state blowup \cite{Bok17}, so does the complementation of GFG-NSW to GFG-NCWs.

\item  From Streett to Streett. Analogous to the above case of Streett to co-B\"uchi, due to the exponential state blowup in the translation of DRWs to NSWs \cite{Bok17}.
\qedhere
\end{itemize}
\end{proof}

\paragraph*{Translating LTL formulas to GFG Automata}

Recall that GFG-NCWs are exponentially more succinct than DCWs~\cite{KS15}, suggesting they do have some power of nondeterministic automata. A natural question is whether one can come up with an exponential translation of LTL formulas to GFG automata, in particular when attention is restricted to LTL formulas that are DCW-realizable. We complete this section with a negative answer, providing another evidence for the deterministic nature of GFG automata. This result is based on the fact that the language with which the doubly-exponential lower bound of the translation of LTL to DBW in~\cite{KV05b} is proven is bounded (that is, it is both safe and co-safe). It means that by Corollary~\ref{cor nww}, any GFG-NSW for it would be \DBP, contradicting the doubly-exponential lower bound.

\newcommand{\corLTLtoGFGNSW}{
The translation of DCW-realizable LTL formulas into GFG-NSW is doubly exponential.
}

\begin{corollary}
\label{cor:ltl-to-gfgnsw}
\corLTLtoGFGNSW
\end{corollary}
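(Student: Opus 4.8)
The plan is to argue by contradiction, combining the doubly-exponential lower bound of~\cite{KV05b} for the translation of LTL into DBW with the boundedness of its witnessing language and with Corollary~\ref{cor nww}. First I would recall the relevant properties of the family from~\cite{KV05b}: there is a sequence $\varphi_1,\varphi_2,\ldots$ of LTL formulas with $|\varphi_n|$ polynomial in $n$, such that every DBW recognizing $\lang(\varphi_n)$ has at least $2^{2^{\Omega(n)}}$ states, and such that each $\lang(\varphi_n)$ is \emph{bounded}, i.e.\ both safe and co-safe. A bounded language is DWW-realizable, and hence in particular DCW-realizable; thus the $\varphi_n$ fall under the hypothesis of the corollary. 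Being DWW-realizable, each $\lang(\varphi_n)$ is also GFG-NWW-realizable, since a DWW is a GFG-NWW.

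Next I would set up the contradiction. Suppose, toward a contradiction, that the translation of DCW-realizable LTL formulas into GFG-NSWs is \emph{not} doubly exponential, so that each $\varphi_n$ admits a GFG-NSW $\A_n$ with $2^{2^{o(n)}}$ states. Since $\lang(\varphi_n)$ is GFG-NWW-realizable, Corollary~\ref{cor nww} applies and $\A_n$ is \DBP. Pruning therefore produces a \emph{deterministic} (Streett) automaton $\D_n$ on a substructure of $\A_n$ that recognizes $\lang(\varphi_n)$ and still has only $2^{2^{o(n)}}$ states.

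The key step, and the one I expect to be the main obstacle, is to turn this singly-small deterministic $\D_n$ into a contradiction with the \emph{DBW} lower bound, since $\D_n$ carries a Streett rather than a B\"uchi condition. Here I would exploit boundedness. For a bounded language $L$, membership of $u\cdot w$ in $L$ is determined by a finite prefix that reaches an accepting sink (good prefix) or a rejecting sink (bad prefix); consequently the canonical deterministic automaton on the finitely many Myhill--Nerode residuals of $L$ recognizes $L$ with a B\"uchi (indeed weak) condition whose accepting set is the single accepting sink. Hence for bounded $L$ the minimal DBW has exactly as many states as $L$ has residuals, while the standard Myhill--Nerode argument shows that \emph{every} deterministic automaton for $L$, regardless of its acceptance condition, has at least that many reachable states. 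Combining these with~\cite{KV05b}, the number of residuals of $\lang(\varphi_n)$ is at least $2^{2^{\Omega(n)}}$, so $\D_n$ must have at least $2^{2^{\Omega(n)}}$ states, contradicting the bound $2^{2^{o(n)}}$.

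Finally I would note the matching upper bound, which is standard: every LTL formula translates into a deterministic parity automaton of doubly-exponential size, and such an automaton is a GFG-NSW, since deterministic automata are GFG and a parity condition is a special case of a Streett condition. Together with the lower bound, this shows that the translation of DCW-realizable LTL formulas into GFG-NSWs is exactly doubly exponential.
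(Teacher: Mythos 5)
Your proposal is correct and follows essentially the same route as the paper: the lower-bound language of~\cite{KV05b} is bounded, hence DWW-realizable (so GFG-NWW-realizable), so a sub-doubly-exponential GFG-NSW for it would be \DBP by Corollary~\ref{cor nww}, contradicting the doubly-exponential lower bound for DBWs. The one point where you go beyond the paper's very terse justification is the bridge from the pruned automaton to the DBW lower bound: since \DBP preserves the acceptance condition, pruning a GFG-NSW yields a deterministic \emph{Streett} automaton, and the paper leaves implicit how this contradicts a \emph{B\"uchi} lower bound (its chain of theorems in fact produces a weak automaton on a substructure, which is already a DBW; alternatively one could invoke deterministic typeness). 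You instead close this gap with a Myhill--Nerode argument: for a bounded language the residual automaton is a weak (hence B\"uchi) automaton of minimal size, and every deterministic automaton, whatever its acceptance condition, has at least one state per residual, so the deterministic Streett automaton cannot be smaller than the minimal DBW. That step is sound and makes the argument self-contained; your closing remark on the matching doubly-exponential upper bound via deterministic parity automata is also correct, though not needed for the lower-bound statement.
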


\bibliographystyle{plain}

\bibliography{ok}

\end{document}